\let\oldaddcontentsline\addcontentsline
\newcommand{\stoptocentries}{\renewcommand{\addcontentsline}[3]{}}
\newcommand{\starttocentries}{\let\addcontentsline\oldaddcontentsline}
\long\def\ca#1\cb{} 
\newcommand{\abs}[2][]{#1| #2 #1|}
\newcommand{\braket}[2]{\langle #1 \hspace{1pt} | \hspace{1pt} #2 \rangle}
\newcommand{\braketq}[1]{\braket{#1}{#1}}
\newcommand{\ketbra}[2]{| \hspace{1pt} #1 \rangle \langle #2 \hspace{1pt} |}
\newcommand{\ketbraq}[1]{\ketbra{#1}{#1}}
\newcommand{\bramatket}[3]{\langle #1 \hspace{1pt} | #2 | \hspace{1pt} #3 \rangle}
\newcommand{\bramatketq}[2]{\bramatket{#1}{#2}{#1}}
\newcommand{\norm}[2][]{#1| \! #1| #2 #1| \! #1|}
\newcommand{\nbox}[2][9]{\hspace{#1pt} \mbox{#2} \hspace{#1pt}}
\newcommand{\ket}[1]{|#1\rangle}               
\newcommand{\bra}[1]{\langle #1|}              
\newcommand{\dya}[1]{\ket{#1}\!\bra{#1}}
\newcommand{\dyad}[2]{\ket{#1}\!\bra{#2}}        
\newcommand{\ip}[2]{\langle #1|#2\rangle}      
\newcommand{\AC}{\mathcal{A}}
\newcommand{\BC}{\mathcal{B}}
\newcommand{\CC}{\mathcal{C}}
\newcommand{\DC}{\mathcal{D}}
\newcommand{\EC}{\mathcal{E}}
\newcommand{\FC}{\mathcal{F}}
\newcommand{\GC}{\mathcal{G}}
\newcommand{\IC}{\mathcal{I}}
\newcommand{\PC}{\mathcal{P}}
\newcommand{\QC}{\mathcal{Q}}
\newcommand{\RC}{\mathcal{R}}
\newcommand{\SC}{\mathcal{S}}
\newcommand{\VC}{\mathcal{V}}
\newcommand{\ZC}{\mathcal{Z}}
\newcommand{\Tr}{{\rm Tr}}
\newcommand{\ave}[1]{\langle #1\rangle}               
\renewcommand{\geq}{\geqslant}
\renewcommand{\leq}{\leqslant}
\newcommand{\mte}[2]{\langle#1|#2|#1\rangle }
\newcommand{\ot}{\otimes}
\newcommand{\ad}{^\dagger}
\newcommand*{\id}{\openone}
\newcommand*{\g}{\text{guess}}
\newcommand{\rhob}{\overline{\rho}}
\newcommand{\rhot}{\tilde{\rho}}
\newcommand{\bs}{\textsf{BS}}
\newcommand{\qbs}{\textsf{QBS}}
\newcommand{\pbs}{\textsf{PBS}}
\newcommand{\enh}{\text{enh}}
\newcommand{\al}{\alpha }
\newcommand{\kp}{\kappa }
\newcommand{\Lm}{\Lambda }
\newcommand{\sg}{\sigma }
\newcommand{\En}{F}
\newcommand{\Az}{M}
\newcommand{\Aw}{N}
\newcommand{\AAz}{M}
\newcommand{\AAw}{N}
\newcommand{\LLL}{L}
\newcommand{\HHH}{H}
\newcommand{\Ct}{\tilde{C}}
\newtheoremstyle{example}{\topsep}{\topsep}%
{}
{}
{\bfseries}
{.}
{   }
{\thmname{#1}\thmnumber{ #2}}
\theoremstyle{example}
\newtheorem{theorem}{Theorem}
\newtheorem{lemma}[theorem]{Lemma}
\theoremstyle{definition}
\newcommand{\footnoteremember}[2]{\footnote{#2}\newcounter{#1}\setcounter{#1}{\value{footnote}}}
\newcommand{\footnoterecall}[1]{\footnotemark[\value{#1}]}
\begin{document}

\title{Equivalence of wave-particle duality to entropic uncertainty}

\author{Patrick J. Coles}
\affiliation{Centre for Quantum Technologies, National University of Singapore, 2 Science Drive 3, 117543 Singapore}

\author{J\k{e}drzej Kaniewski}
\affiliation{Centre for Quantum Technologies, National University of Singapore, 2 Science Drive 3, 117543 Singapore}

\author{Stephanie Wehner}
\affiliation{Centre for Quantum Technologies, National University of Singapore, 2 Science Drive 3, 117543 Singapore}

\begin{abstract}
Interferometers capture a basic mystery of quantum mechanics: a single particle can exhibit wave behavior, yet that wave behavior disappears when one tries to determine the particle's path inside the interferometer. This idea has been formulated quantitively as an inequality, e.g., by Englert and Jaeger, Shimony, and Vaidman, which upper bounds the sum of the interference visibility and the path distinguishability. Such wave-particle duality relations (WPDRs) are often thought to be conceptually inequivalent to Heisenberg's uncertainty principle, although this has been debated. Here we show that WPDRs correspond precisely to a modern formulation of the uncertainty principle in terms of entropies, namely the min- and max-entropies. This observation unifies two fundamental concepts in quantum mechanics. Furthermore, it leads to a robust framework for deriving novel WPDRs by applying entropic uncertainty relations to interferometric models. As an illustration, we derive a novel relation that captures the coherence in a quantum beam splitter.
\end{abstract}

\pacs{03.67.-a, 03.67.Hk}

\maketitle

\stoptocentries

\section*{INTRODUCTION}

When Feynman discussed the two-path interferometer in his famous lectures \cite{Feynman70}, he noted that quantum systems (quantons) display the behavior of both waves and particles and that there is a sort of competition between seeing the wave behavior versus the particle behavior. That is, when the observer tries harder to figure out which path of the interferometer the quanton takes, the wave-like interference becomes less visible. This tradeoff is commonly called wave-particle duality (WPD). Feynman further noted that this is ``a phenomenon which is impossible ... to explain in any classical way, and which has in it the heart of quantum mechanics. In reality, it contains the only mystery [of quantum mechanics].''

Many quantitative statements of this idea, so-called wave-particle duality relations (WPDRs), have been formulated \cite{EnglertPRL1996, JaegerEtAlPRA1995, PhysRevD.19.473, Greenberger1988391, EnglertIJQI2008, PhysRevA.79.052108, PhysRevA.87.022107, Qureshi01042013, PhysRevA.85.054101, Englert2000337, Banaszek:2013fk, JiaEtAlCPB2014}. Such relations typically consider the Mach-Zehnder interferometer for single photons, see Fig.~\ref{fgrMZ}. For example, a well-known formulation proven independently by Englert \cite{EnglertPRL1996} and Jaeger et al.~\cite{JaegerEtAlPRA1995} quantifies the wave behavior by fringe visibility $\VC$, and particle behavior by the distinguishability of the photon's path, $\DC$. (See below for precise definitions; the idea is that ``waves'' have a definite phase, while ``particles'' have a definite location, hence $\VC$ and $\DC$ respectively quantify how definite the phase and location are inside the interferometer.) They found the tradeoff:
\begin{equation}
\label{eqn1}
\DC^2 + \VC ^2 \leq 1
\end{equation}
which implies $\VC = 0$ when $\DC = 1$ (full particle behavior means no wave behavior) and vice-versa, and also treats the intermediate case of partial distinguishability.

It has been debated, particularly around the mid-1990's \cite{Englert:1995ly, Storey:1994zr, Wiseman:1995ys}, whether the WPD principle, closely related to Bohr's complementarity principle \cite{Bohr1928}, is equivalent to another fundamental quantum idea with no classical analog: Heisenberg's uncertainty principle \cite{Heisenberg}. The latter states that there are certain pairs of observables, such as position and momentum or two orthogonal components of spin angular momentum, that cannot simultaneously be known or jointly measured. Likewise there are many quantitative statements of this idea, known as uncertainty relations (URs)~(see, e.g., \cite{kennard1927quantum, Robertson, deutsch, BiaMyc75, MaassenUffink, EURreview1, RenesBoileau, BertaEtAl, ColesEtAlPRA2011, ColesColbeckYuZwolak2012PRL,TomRen2010}), and modern formulations typically use entropy instead of standard deviation as the uncertainty measure, so-called entropic uncertainty relations (EURs) \cite{EURreview1}. This is because the standard deviation formulation suffers from trivial bounds when applied to finite-dimensional systems \cite{deutsch}, whereas the entropic formulation not only fixes this weakness but also implies the standard deviation relation \cite{BiaMyc75} and has relevance to information-processing tasks.

At present the debate regarding wave-particle duality and uncertainty remains unresolved, to our knowledge. Yet Feynman's quote seems to suggest a belief that quantum mechanics has but one mystery and not two separate ones. In this article we confirm this belief by showing a quantitative connection between URs and WPDRs, demonstrating that URs and WPDRs capture the same underlying physics; see also \cite{DurrRempe2000, Busch20061} for some partial progress along these lines. This may come as a surprise, since Englert \cite{EnglertPRL1996} originally argued that \eqref{eqn1} ``does not make use of Heisenberg's uncertainty relation in any form''.  To be fair, the uncertainty relation that we show is equivalent to \eqref{eqn1} was not known at the time of Englert's paper, and was only recently discovered \cite{RenesBoileau, BertaEtAl, ColesEtAlPRA2011, ColesColbeckYuZwolak2012PRL,TomRen2010}. Specifically, we will consider EURs, where the particular entropies that are relevant to \eqref{eqn1} are the so-called min- and max-entropies used in cryptography \cite{KonRenSch09}.

In what follows we provide a general framework for deriving and discussing WPDRs - a framework that is ultimately based on the entropic uncertainty principle. We illustrate our framework by showing that several different WPDRs from the literature are in fact particular examples of EURs. Making this connection not only unifies two fundamental concepts in quantum mechanics, but also implies that novel WPDRs can be derived simply by applying already-proven EURs. Indeed we use our framework to derive a novel WPDR for an exotic scenario involving a ``quantum beam splitter'' \cite{PhysRevLett.107.230406, Kaiser02112012, Peruzzo02112012, TangEtAlPhysRevA.88.014103}, where testing our WPDR would allow the experimenter to verify the beam splitter's quantum coherence (see~\eqref{eqn23943521}).

\begin{figure}[t]
\begin{center}
\includegraphics[scale=0.95]{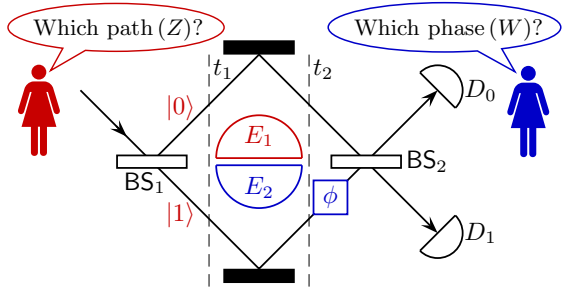}
\caption{%
Mach-Zehnder interferometer for single photons. Passing through the first beam splitter creates a superposition of which-path states, $\ket{0}$ and $\ket{1}$, at time $t_1$, then the system interacts with an environment $E=E_1 E_2$. Finally at time $t_2$ a phase shift $\phi$ is applied to the lower arm and the two beams are recombined on a second beam splitter. (While this is the typical setup, our framework also allows $E$ to play a more general role, e.g., being correlated to the photon before it enters the MZI.) Our complementary guessing game proceeds as follows. In one game (colored red) Alice tries to guess which of the two paths the photon took given that she has access to a portion of $E$ denoted $E_1$, which could be, e.g., a gas of atoms whose internal states record information about the presence of a photon. In the other game (colored blue), one of two phases, $\phi=\phi_0$ or $\phi=\phi_0+\pi$, is randomly applied to the lower interferometer arm and Alice tries to guess $\phi$ given that she has access to a different portion of $E$ denoted $E_2$, which could be, e.g., the photon's polarisation. We argue that WPDRs impose fundamental trade-offs on Alice's ability to win these two games.
\label{fgrMZ}}
\end{center}
\end{figure}

We emphasize that the framework provided by EURs is highly robust, and entropies have well-characterized statistical meanings. Note that current approaches to deriving WPDRs often involve brute force calculation of the quantities one aims to bound; there is no general, elegant method currently in use. Our approach simply involves judicial application of the relevant uncertainty relation. What's more, we emphasize that uncertainty relations can be applied to interferometers in two different ways. One involves preparation uncertainty, which says that a quantum state cannot be prepared having low uncertainty for two complementary observables, and it turns out this is the principle relevant to the original presentation of \eqref{eqn1} in \cite{EnglertPRL1996}. The other involves measurement uncertainty, which says that two complementary observables cannot be jointly measured \cite{Busch20061, PhysRevA.79.052108}, and we discuss why this principle is actually what was tested in some recent interferometry experiments \cite{PhysRevLett.100.220402, Kaiser02112012}.

\section*{RESULTS}

\subsection*{Framework}

\textit{Guessing games.---}We argue that a natural and powerful way to think of wave-particle duality is in terms of guessing games, and one's ability to win such games is quantified by entropic quantities. Specifically we consider \textit{complementary} guessing games, where Alice is asked to guess one of two complementary observables - a modern paradigm for discussing the uncertainty principle. In the Mach-Zehnder interferometer (MZI), see Fig.~\ref{fgrMZ}, this corresponds to either guessing which path the photon took, or which phase was applied inside the interferometer. The which-path and which-phase observables are complementary and hence the uncertainty principle gives a fundamental restriction stating that Alice cannot be able to guess both observables.

\textit{Binary interferometers.---}Our framework treats this complementary guessing game for binary interferometers. By \textit{binary}, we mean any interferometer where there are only two interfering paths, i.e., all other paths are classically distinguishable (from each other and from the two interfering paths). In addition to the MZI, this includes as special cases, e.g., the Franson interferometer \cite{PhysRevLett.62.2205} (see Fig.~\ref{fgrFran}) and the double slit interferometer (see Fig.~\ref{fgrDS}).  Note that binary interferometers go beyond interferometers with two physical paths. For example, in the Franson interferometer there are four possible paths but post-selecting on coincidence counts discards two of these paths, which are irrelevant to the interference anyway.

\begin{figure}[t]
\begin{center}
\includegraphics[scale=0.9]{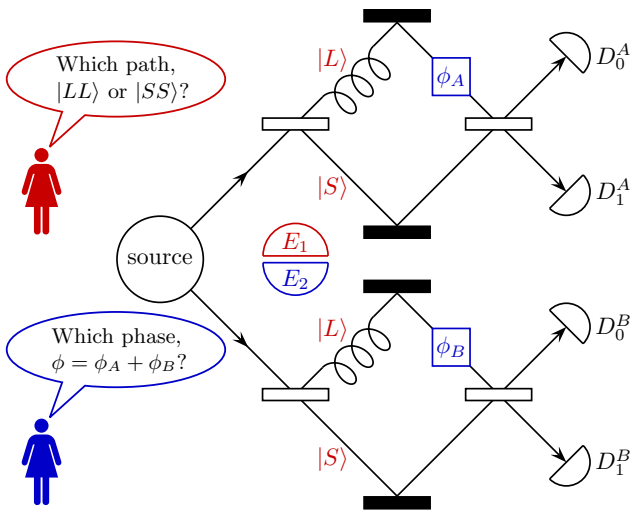}
\caption{%
In the Franson interferometer, the ``quanton'' consists of two photons. That is, a source produces time-energy entangled photons that each head separately towards a MZI that contains a long arm (depicted with extra loops) and a short arm. A simple model considers the four-dimensional Hilbert space associated with the four possible paths: $\ket{SS}$, $\ket{SL}$, $\ket{LS}$, and $\ket{LL}$ with $S$ = short path, $L$ = long path. Two of these dimensions are post-selected away by considering only coincidence counts, i.e., the photons arriving at the same time is inconsistent with the $\ket{SL}$ and $\ket{LS}$ paths. The remaining paths, $\ket{0}:=\ket{SS}$ and $\ket{1}:=\ket{LL}$, are indistinguishable in the special case of perfect visibility and they produce interference fringes as one varies $\phi := \phi_A +\phi_B$. Namely, the intensity of coincidence counts at detector pair ($D_0^A, D_0^B$) oscillates with $\phi$. Interaction with an environment system, or making the beam splitters asymmetric, may allow one to partially distinguish between $\ket{SS}$ and $\ket{LL}$, and our entropic uncertainty framework can be applied to derive a tradeoff, e.g., of the form of \eqref{eqn1}. This tradeoff captures the idea that Alice can either guess which path ($\ket{SS}$ vs.\ $\ket{LL}$) or which phase ($\phi =\phi_0 $ vs.\ $\phi =\phi_0+\pi$), but she cannot do both (even if she extracts information from other systems $E_1$ and $E_2$).
\label{fgrFran}}
\end{center}
\end{figure}

\textit{Particle observable.---}Now we link wave and particle behavior to knowledge of complementary observables. In the case of particle behavior, the intuition is that particles have a well-defined spatial location, hence ``particleness'' should be connected to knowledge of the path inside interferometer. For binary interferometers, there may be more than two physical paths but only two of these are interfering. Hence we only consider the two-dimensional subspace associated with the two which-path states of interest, denoted $\ket{0}$ and $\ket{1}$. This subspace can be thought of as an effective qubit, denoted $Q$, and the standard basis of this qubit:
\begin{equation}
\label{eqnZ11}
\text{which-path:}\hspace{8pt}Z = \{\ket{0},\ket{1}\}
\end{equation}
corresponds precisely to the which-path observable. For example, in the double slit (Fig.~\ref{fgrDS}), $\ket{0}$ and $\ket{1}$ are the pure states that one would obtain at the slit exit from blocking the bottom and top slits respectively.

\textit{Wave observable.---}Wave behavior is traditionally associated with having a large amplitude of intensity oscillations at the interferometer output. Indeed this has been quantified by the so-called \textit{fringe visibility}, see \eqref{eqnfringevis1}, but to apply the uncertainty principle we need to relate wave behavior to an observable inside the interferometer. Classical waves (e.g., water waves) are often modelled as having a \textit{well-defined phase} and being spatially \textit{delocalized}. The analog in our context corresponds to the quanton being in a equally-weighted superposition of which-path states. Hence eigenstates of the ``wave observable'' should live in the $XY$ plane of the Bloch sphere, so we consider observables on qubit $Q$ (the interfering subspace) of the form 
\begin{equation}
\label{eqnW11}
\text{which-phase:}\hspace{8pt}W=\{\ket{w_{\pm}}\},\quad\text{}\ket{w_{\pm}}=\frac{1}{\sqrt{2}} (\ket{0}\pm e^{i \phi_0 }\ket{1}).
\end{equation}
In terms of the guessing game, guessing the value of the wave (or which-phase) observable corresponds to guessing whether a phase of $\phi=\phi_0$ or $\phi = \phi_0+ \pi$ was applied inside the interferometer (see, e.g., Fig~\ref{fgrMZ}). While $\phi_0$ is a generic phase, its precise value will be singled out by the particular experimental setup. When the experimenter measures fringe visibility this corresponds to varying $\phi_0$ to find the largest intensity contrast, and mathematically we model this by minimizing the uncertainty within the $XY$ plane, see \eqref{eq:lackofwave}.

\subsection*{Entropic View}

Our entropic view associates a kind of behavior with the availability of a kind of information, or lack of behavior with missing information, as follows:
\begin{subequations}
\label{eq:partwave}
\begin{align}
\label{eq:lackofpart}&\text{lack of particle behavior:} \hspace{2pt}H_{\min}(Z |E_1)\\
\label{eq:lackofwave}&\text{lack of wave behavior:} \hspace{8pt} \min_{W\in XY}H_{\max}(W |E_2) 
\end{align}
\end{subequations}
where $H_{\min}$ and $H_{\max}$ are the min- and max-entropies, defined below in \eqref{eq:minmaxabsdef}, which are commonly used in quantum information theory, $Z$ is the which-path observable in \eqref{eqnZ11}, $W$ is the which-phase observable in \eqref{eqnW11} (whose uncertainty we optimize over the $XY$ plane of the Bloch sphere), and $E_1$ and $E_2$ are some other quantum systems that contain information and measuring these systems may help to reveal the behavior (e.g., $E_1$ could be a which-path detector and $E_2$ could be the quanton's internal degree of freedom). Note that we use the same symbols ($Z$, $W$, etc.) for the observables as for the random variables they give rise to. Full behavior (no behavior) of some kind corresponds to the associated entropy in \eqref{eq:partwave} being zero (one).  We formulate our general WPDR as
\begin{equation}
\label{eqnMainResult}
H_{\min}(Z |E_1)+ \min_{W\in XY}H_{\max}(W |E_2) \geq 1.
\end{equation}
This states that, for a binary interferometer, the sum of the ignorances about the particle and wave behaviors is lower bounded by 1 (i.e., 1 bit). Eq.~\eqref{eqnMainResult} constrains Alice's ability to win the complementary guessing game described above. If measuring $E_1$ allows her to guess the quanton's path, i.e., the min-entropy in \eqref{eq:lackofpart} is small, then even if she measures $E_2$ she still will not be able to guess the quanton's phase, i.e., the max-entropy in \eqref{eq:lackofwave} will be large (and vice-versa).

\begin{figure}[t]
\begin{center}
\includegraphics[scale=1]{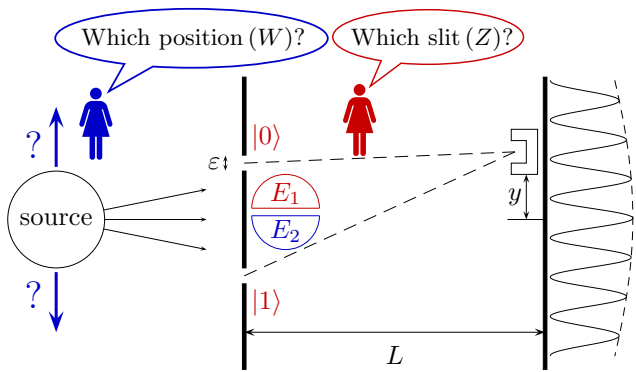}
\caption{%
For the double slit interferometer, the first game (colored red) involves Alice guessing which slit the quanton goes through, given that she can measure a system $E_1$ that has interacted with, and hence may contain information about, the quanton. In the second game (colored blue), Bob randomly chooses the source's vertical coordinate, and Alice tries to guess where the source was located, given some other system $E_2$ and given where the quanton was finally detected. Note that the source's location determines the relative phase between the which-slit states, $\ket{0}$ and $\ket{1}$, and we assume Bob chooses one of two possible locations such that the relative phase is either $0$ or $\pi$. Here, the state $\ket{0}$ ($\ket{1}$) is defined as the pure state at the slit exit that one would obtain from blocking the bottom (top) slit. Our framework provides a WPDR that constrains Alice's ability to win these complementary games. Furthermore, one can reinterpret the probability to win the second game, for the case where $E_2$ is trivial, in terms of the traditional fringe visibility. The latter quantifies the amplitude of intensity oscillations as one varies the detector location $y$. Note that varying $y$ changes the relative path lengths from the slits to the detector and hence is analogous to applying a relative phase $\phi$ between the two paths. [This assumes that the envelope function (dashed curve) associated with the interference pattern is flat over the range of $y$ values considered, which is often the case when $L$ is very large.] So the double slit fringe visibility is equivalent to the notion captured by $\VC$ in \eqref{eqnfringevis1}, where the detector is spatially fixed but a phase is varied, and we relate $\VC$ to our entropic measure of wave behavior in the Methods section.
\label{fgrDS}}
\end{center}
\end{figure}

To be clear, \eqref{eqnMainResult} is explicitly an entropic uncertainty relation, and it has been exploited to prove the security of quantum cryptography \cite{TLGR}. The usefulness of \eqref{eqnMainResult} for cryptography is due to the clear operational meanings of the min- and max-entropies \cite{KonRenSch09}, which naturally express the monogamy of correlations as they give the distances to being uncorrelated ($H_{\max}$) and being perfectly correlated ($H_{\min}$). One can replace these entropies with the von Neumann entropy in \eqref{eqnMainResult} and the relation still holds; however, the min- and max-entropies give more refined statements about information processing since they are also applicable to finite numbers of experiments. From~\cite{KonRenSch09}, the precise definitions of these entropies, for a generic classical-quantum state $\rho_{XB}$, are
\begin{subequations}
\label{eq:minmaxabsdef}
\begin{align}
\label{eq:minabsdef}H_{\min}(X|B)& = -\log p_{\g}(X|B),\\
\label{eq:maxabsdef}H_{\max}(X|B)& = \log p_{\text{secr}}(X|B),
\end{align}
\end{subequations}
where all logarithms are base 2 in this article. Here, $p_{\g}(X|B)$ denotes the probability for the experimenter to guess $X$ correctly with the optimal strategy, i.e., with the optimally helpful measurement on system $B$. Also, $p_{\text{secr}}(X|B) = \max_{\sg_B} F(\rho_{XB}, \id \ot \sg_B)^2$ quantifies the secrecy of $X$ from $B$, as measured by the fidelity $F$ of $\rho_{XB}$ to a state that is completely uncorrelated.

The fact that \eqref{eqnMainResult} can be thought of as a WPDR, and furthermore that it encompasses the majority of WPDRs found in the literature for binary interferometers, is our main result.

\section*{DISCUSSION}

To illustrate this, we consider the celebrated MZI, shown in Fig.~\ref{fgrMZ}, since most literature WPDRs have been formulated for this interferometer. In the simplest case one sends in a single photon towards a 50/50 (i.e., symmetric) beam splitter, $\bs_1$, which results in the state $\ket{+}=(\ket{0}+\ket{1})/\sqrt{2}$, then a phase $\phi$ is applied to the lower arm giving the state $(\ket{0}+e^{i\phi }\ket{1})/\sqrt{2}$. Finally the two paths are recombined on a second 50/50 beam splitter $\bs_2$ and the output modes are detected by detectors $D_0$ and $D_1$. Fringe visibility is then defined as
\begin{align}
\label{eqnfringevis1}
\text{fringe visibility:}\hspace{4pt}\VC:= \frac{p^{D_0}_{\max}-p^{D_0}_{\min}}{p^{D_0}_{\max}+p^{D_0}_{\min}},
\end{align}
where $p^{D_0}$ is the probability for the photon to be detected at $D_0$, $p^{D_0}_{\max}:= \max_{\phi} p^{D_0}$ maximizes this probability over $\phi$, whereas $p^{D_0}_{\min}:= \min_{\phi} p^{D_0}$.  In this trivial example one has $\VC = 1$. However many more complicated situations, for which the analysis is more interesting, have been considered in the extensive literature; we now illustrate how these situations fall under the umbrella of our framework with some examples.

\textit{$\PC$-$ \VC$ relation.---}As a warm-up, we begin with the simplest known WPDR, the predictability-visibility tradeoff. Predictability $\PC $ quantifies the \textit{prior} knowledge, given the experimental setup, about which path the photon will take inside the interferometer. More precisely, $\PC := 2 p_{\g}(Z) -1$ where $p_{\g}(Z)$ is the probability of correctly guessing $Z$. Non-trivial predictability is typically obtained by choosing $\bs_1$ to be \textit{asymmetric}. In such situations, the following bound holds \cite{PhysRevD.19.473, Greenberger1988391}:
\begin{equation}
\label{eqn1bb}
\PC^2+ \VC^2  \leq 1.
\end{equation}
This particularly simple example is a special case of Robertson's uncertainty relation involving standard deviations~\cite{PhysRevA.60.1874, DurrRempe2000, Busch20061, BosykEtAl2013}. However,~\cite{BosykEtAl2013} argues that \eqref{eqn1bb} is inequivalent to a family of EURs where the same (R\'{e}nyi) entropy is used for both uncertainty terms, hence one gets the impression that entropic uncertainty is different from wave-particle duality. On the other hand,~\cite{BosykEtAl2013} did not consider the EUR involving the min- and max-entropies. For some probability distribution $P = \{p_j\}$, the unconditional min- and max-entropies are given by $H_{\min}(P)= - \log \max_j p_j$ and $H_{\max}(P)=  2 \log \sum_j \sqrt{p_j}$. We find that \eqref{eqn1bb} is equivalent to
\begin{equation}
\label{eqn1aa}
H_{\min}(Z) +\min_{W\in XY} H_{\max}(W) \geq 1,
\end{equation}
which is an EUR proved in the seminal paper by Maassen and Uffink \cite{MaassenUffink}, and corresponds to $E_1$ and $E_2$ in \eqref{eqnMainResult} being trivial. The entropies in \eqref{eqn1aa} are evaluated for the state at any time while the photon is inside the interferometer. It is straightforward to see that $H_{\min}(Z) = -\log \frac{1+\PC}{2}$ and in the Methods we prove that 
\begin{equation}
\label{eqnVisMaxRelation}
\min_{W\in XY} H_{\max}(W) = \log (1+\sqrt{1-\VC^2}).
\end{equation}
Plugging these relations into \eqref{eqn1aa} gives \eqref{eqn1bb}.

\textit{$\DC$-$ \VC$ relation.---}Let us move on to a more general and more interesting scenario where, in addition to prior which-path knowledge, one may obtain further knowledge \textit{during} the experiment due to the interaction of the photon with some environment $\En$, which may act as a which-way detector. Most generally the interaction is given by a completely positive trace preserving (CPTP) map $\EC$, with the input system being $Q$ at time $t_1$ and output systems being $Q$ and $\En$ at time $t_2$, see Fig.~\ref{fgrMZ}. The final state is $\rho^{(2)}_{Q \En} = \EC(\rho_Q^{(1)})$, where the superscripts $(1)$ and $(2)$ indicate the states at times $t_1$ and $t_2$. We do not require $\EC$ to have any special form in order to derive our WPDR, so our treatment is general.

The path distinguishability is defined by $\DC := 2 p_{\g}(Z| \En) -1$, where $p_{\g}(Z| \En)$ is the probability for correctly guessing the photon's path $Z$ at time $t_2$ given that the experimenter performs the optimally helpful measurement on $\En$. We find that \eqref{eqn1} is equivalent to 
\begin{equation}
\label{eqn1ent}
H_{\min}(Z| \En)+\min_{W\in XY} H_{\max}(W) \geq 1,
\end{equation}
where the entropy terms are evaluated for the state $\rho^{(2)}_{Q \En}$, which corresponds to $E_1= \En$ and $E_2$ being trivial in \eqref{eqnMainResult}. First, it is obvious from the operational meaning of the conditional min-entropy \eqref{eq:minabsdef} that we have $H_{\min}(Z| \En) =-\log \frac{1+\DC}{2}$, and second we use our result \eqref{eqnVisMaxRelation} to rewrite \eqref{eqn1ent} as \eqref{eqn1}. As emphasized in \cite{EnglertPRL1996}, we note that \eqref{eqn1} and its entropic form \eqref{eqn1ent} do not require $\bs_1$ to be symmetric. Hence $\DC$ accounts for both the prior $Z$ knowledge associated with the asymmetry of $\bs_1$ as well as the $Z$ information gained from $\En$.

\begin{figure}[t]
\begin{center}
\includegraphics[scale=0.94]{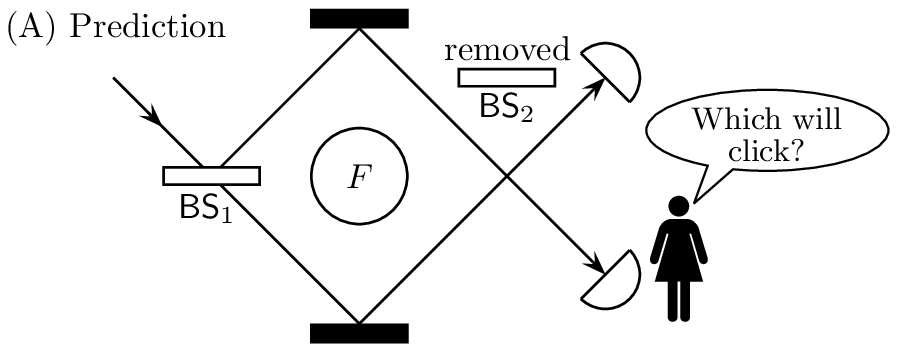}

\vspace{9pt}
\includegraphics[scale=0.94]{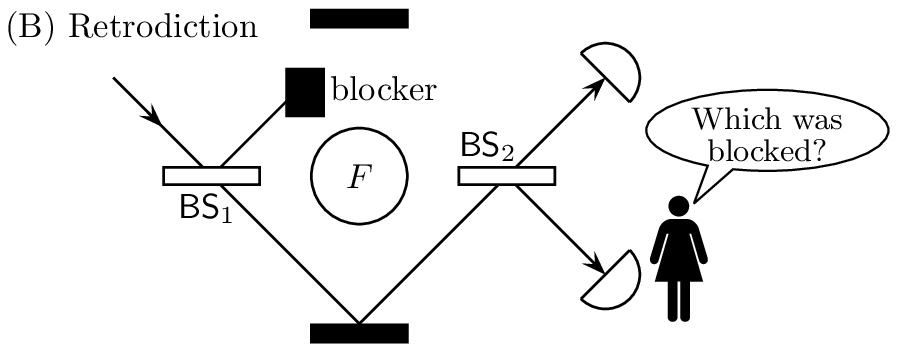}

\caption{%
Path prediction versus path retrodiction, in the MZI. (A) In the predictive scenario, the second beam splitter is removed and Alice tries to guess which detector will click. (B) In the retrodictive scenario, a blocker is randomly inserted into one of the interferometer arms and Alice tries to guess which arm was blocked (given the knowledge of which detector clicked).  \label{fgrPR}}
\end{center}
\end{figure}

\textit{Preparation vs.\ measurement uncertainty.---}The above analysis shows that \eqref{eqn1} and \eqref{eqn1bb} correspond to applying the \textit{preparation uncertainty relation} at time $t_2$ (just before the photon reaches $\bs_2$). Preparation uncertainty restricts one's ability to predict the outcomes of \textit{future} measurements of complementary observables. Thus, to experimentally measure $\PC$ or more generally $\DC$, the experimenter \textit{removes} $\bs_2$ and sees how well he/she can guess which detector clicks, see Fig.~\ref{fgrPR}A. Of course, to then measure $\VC$, the experimenter reinserts $\bs_2$ to close the interferometer. We emphasize that this procedure falls into the general framework of \textit{preparation uncertainty}.

On the other hand, uncertainty relations can be applied in a conceptually different way. Instead of two complementary output measurements and a fixed input state, consider a fixed output measurement and two complementary sets of input states. Namely consider the input ensembles from \eqref{eqnZ11} and \eqref{eqnW11}, now labeled as $Z_i = \{\ket{0}, \ket{1}\}$ and $W_i = \{\ket{w_{\pm}} \}$, where $i$ stands for ``input'', to indicate the physical scenario of a sender inputting states into a channel. Imagine this as a retrodictive guessing game, where Bob controls the input and Alice has control over both $\En$ and the detectors. Bob chooses one of the ensembles and flips a coin to determine which state from the ensemble he will send, and Alice's goal is to guess Bob's coin flip outcome. Assuming $\bs_1$ is 50/50, the two $Z_i$ states are generated by Bob blocking the opposite arm of the interferometer, as in Fig.~\ref{fgrPR}B, while the $W_i$ states are generated by applying a phase (either $\phi_0$ or $\phi_0+\pi$) to the lower arm.

It may not be common knowledge that this scenario leads to a different class of WPDRs, therefore we illustrate the difference in Fig.~\ref{fgrPR}. For clarity, we refer to $\DC$ introduced above as output distinguishability, whereas in the present scenario we use the symbol $\DC_i$ and call this quantity input distinguishability, defined by
\begin{equation}
\label{eqnVinput}
\DC_i := 2 p_{\g}(Z_i |\En )_{D_0}-1,
\end{equation}
where $p_{\g}(Z_i | \En )_{D_0}$ is Alice's probability to correctly guess Bob's $Z_i$ state given that she has access to $\En$ and she knows that detector $D_0$ clicked at the output. Likewise we define the notion of \textit{input} visibility $\VC_i$ via:
\begin{equation}
\label{eqnVinput}
\VC_i := \max_{W\in XY} [2 p_{\g}(W_i )_{D_0}-1]
\end{equation}
which quantifies how well Alice can determine $W_i$ given that she knows $D_0$ clicked.

Now the uncertainty principle says there is a tradeoff: if Alice can guess the $Z_i$ states well then she cannot guess the $W_i$ states well, and vice-versa. In other words, Alice's measurement apparatus, the apparatus to the right of the dashed line labeled $t_1$ in Fig.~\ref{fgrMZ}, cannot \textit{jointly measure} Bob's $Z$ and $W$ observables. EURs involving von Neumann entropy have previously been applied to the joint measurement scenario \cite{ColesEtAlPRA2011, PhysRevLett.112.050401}, we do the same for the min- and max-entropies to obtain (see Methods for details) 
\begin{equation}
\label{eqn1input}
\DC_i^2 + \VC_i^2 \leq 1,
\end{equation}
which can now be applied to a variety of situations.

\begin{figure}[t]
\begin{center}
\includegraphics[scale=0.95]{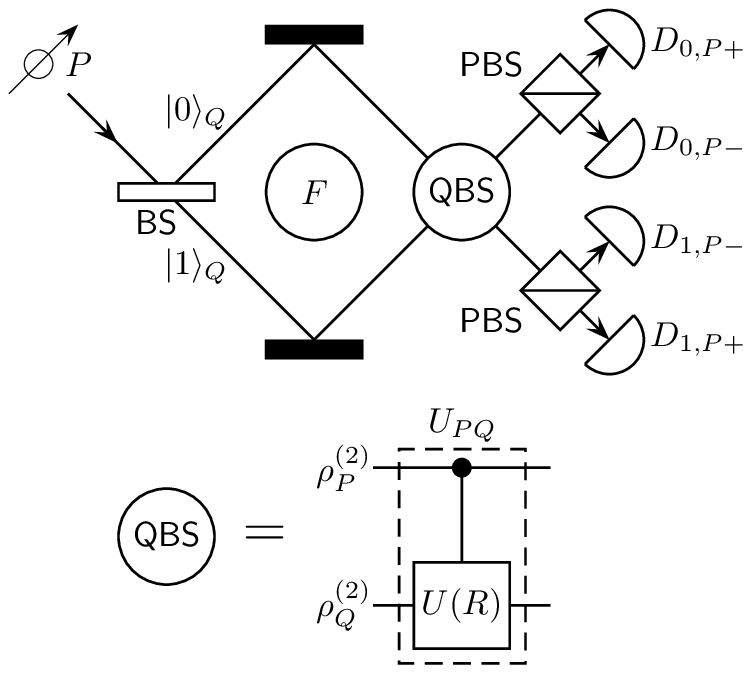}
\caption{%
In the quantum beam splitter ($\qbs$) scenario, the second beam splitter is in a superposition of ``absent'' and ``present'', as determined by the polarization state $\rho^{(2)}_P$ at time $t_2$. The $\qbs$ can be modelled as a controlled-unitary, $U_{PQ} = \dya{H}_P\ot \id_Q+\dya{V}_P\ot U(R)$, where $U(R)$ is the unitary on $Q$ associated with an asymmetric beam splitter with reflection probability $R$. Polarization-resolving detectors ($\pbs$ = polarizing beam splitter) on the output modes help to reveal the ``quantumness'' of the $\qbs$.\label{fgr3}}
\end{center}
\end{figure}

\textit{Quantum $\bs_2$.---}As an interesting application of \eqref{eqn1input}, we consider the scenario proposed in \cite{PhysRevLett.107.230406} and implemented in \cite{Kaiser02112012, Peruzzo02112012, TangEtAlPhysRevA.88.014103}, where the photon's polarization $P$ acts as a control system to determine whether or not $\bs_2$ appears in the photon's path and hence whether the interferometer is open or closed, see Fig.~\ref{fgr3}. Since $P$ can be prepared in an arbitrary input state $\rho^{(2)}_{P}$, such as a superposition, this effectively means that $\bs_2$ is a ``quantum beam splitter'', i.e., it can be in a quantum superposition of being absent or present. The interaction coupling $P$ to $Q$ is modelled as a controlled unitary as in Fig.~\ref{fgr3}. In this case the two visibilities are equivalent (see Methods)
\begin{equation}
\label{eqn1843532gen}
\VC_i = \VC = 2 | \kp | \sqrt{R(1-R)} \mte{V}{\rho^{(2)}_{P}}
\end{equation}
where we assume the dynamics are path-preserving, i.e., $\EC_{Q}(\dya{0}) = \dya{0}$ and $\EC_{Q}(\dya{1}) = \dya{1}$, where $\EC_{Q} = \Tr_{\En}\circ \EC$ is the reduced channel on $Q$, which implies that $\EC_{Q}(\dyad{0}{1}) =\kp \dyad{0}{1}$, i.e., off-diagonal elements get scaled by a complex number $\kp$ with $|\kp | \leq 1$. In \eqref{eqn1843532gen}, $\VC$ is evaluated for any pure state input $\rho^{(1)}_{Q}$ from the $XY$ plane of the Bloch sphere (e.g., $\ket{+}$). Now we apply \eqref{eqn1input} to this scenario and use \eqref{eqn1843532gen} to obtain:
\begin{equation}
\label{eqn2394352}
\DC_i^2 +\VC^2 \leq 1,
\end{equation}
which extends a recent result in~\cite{JiaEtAlCPB2014} to the case where $\En$ is non-trivial. This general treatment includes the special case where $\rho^{(2)}_{P} = \dya{V}$, corresponding to a closed interferometer with an asymmetric $\bs_2$. Ref.~\cite{PhysRevLett.100.220402} experimentally tested this special case. However,~\cite{PhysRevLett.100.220402} did not remark that their experiment actually tested a relation different from \eqref{eqn1}, namely they tested a special case of \eqref{eqn2394352}.

Similarly,~\cite{Kaiser02112012} tested \eqref{eqn2394352} rather than \eqref{eqn1}, but they allowed $\rho^{(2)}_{P}$ to be in a superposition. At first sight this seems to test the WPDR in the case of a quantum beam splitter, but it turns out that neither the visibility $\VC$ nor the distinguishability $\DC_i$ depends on the phase coherence in $\rho^{(2)}_{P}$ and hence the data could be simulated by a classical mixture of $\bs_2$ being absent or present. Nevertheless, our framework provides a WPDR that captures the coherence in $\rho^{(2)}_{P}$, by conditioning on the polarisation $P$ at the interferometer output (see Methods). For example, defining the polarization-enhanced distinguishability, $\DC_i^P:= 2 p_{\g}(Z_i |\En P)_{D_0}-1$, which corresponds to choosing $E_1 = FP$, we obtain the novel WPDR: 
\begin{equation}
\label{eqn23943521}
(\DC_i^P)^2 +\VC^2 \leq 1,
\end{equation}
which captures the beam splitter's coherence (see \footnoteremember{SuppFootnote}{See the Supplementary Information.} for elaboration) and could be tested with the setup in \cite{Kaiser02112012}.

\textit{Non-trivial $E_2$.---}The above examples use the environment solely to enhance the particle behavior. To give a corresponding example for wave behavior, i.e., where system $E_2$ in \eqref{eqnMainResult} is non-trivial, the main result of \cite{Englert2000337} is a WPDR for the case when the environment $\En$ is measured (after it has interacted with the quanton) and the resulting information is used to enhance the fringe visibility. This scenario is called \textit{quantum erasure} since the goal is to erase the which-path information stored in the environment, to recover full visibility. This falls under our framework by taking $E_2$ to be the classical output of the measurement on the environment. For elaboration see \footnoterecall{SuppFootnote}, where we also cast the main results of Refs.~\cite{PhysRevA.85.054101} and \cite{Banaszek:2013fk} within our framework.

\textit{Conclusions.---}We have unified the wave-particle duality principle and the entropic uncertainty principle, showing that WPDRs are EURs in disguise. We leave it for future work to extend this connection to multiple interference pathways \cite{EnglertIJQI2008}. The framework presented here can be applied universally to binary interferometers.  Our framework makes it clear how to formulate novel WPDRs by simply applying known EURs to novel interferometer models, and these new WPDRs will likely inspire new interferometry experiments. We note that all of our relations also hold if one replaces both min- and max-entropy with the well-known von Neumann entropy. Alternatively, one can use smooth entropies \cite{TomRen2010, TLGR}, and the resulting smooth WPDRs may find application in the security analysis of interferometric quantum key distribution \cite{PhysRevLett.69.1293}, which often exploits the Franson setup (Fig.~\ref{fgrFran}).

\section*{METHODS}

We emphasize that our treatment, in what follows, will be for a generic binary interferometer. We will first discuss our general treatment, then we will specialize to the predictive and retrodictive scenarios (see Fig.~\ref{fgrPR}).

\textit{Origin of general WPDR.---}It is known that the min- and max-entropies satisfy the uncertainty relation \cite{TomRen2010}:
\begin{equation}
\label{eqn1genminmaxUR1}
H_{\min}(Z | E_1)+H_{\max}(W | E_2)\geq 1,
\end{equation}
for any tripartite state $\rho_{A E_1 E_2}$ where $A$ is a qubit and $Z$ and $W$ are mutually unbiased bases on $A$. Noting that the which-path and which-path observables in \eqref{eqnZ11} and \eqref{eqnW11} are mutually unbiased (for all $\phi_0$ in \eqref{eqnW11}, i.e., for all $W$ in the $XY$ plane) gives our general WPDR in \eqref{eqnMainResult}.

\textit{Complementary guessing game.---}The operational interpretation of \eqref{eqnMainResult} in terms of the complementary guessing game described, e.g., in Figs.~\ref{fgrMZ}-\ref{fgrDS} can be seen clearly as follows. While the min-entropy is related to the guessing probability via \eqref{eq:minabsdef}, we establish a similar relation for the max-entropy. First we prove \footnoterecall{SuppFootnote} that, for a general classical-quantum state $\rho_{XB} = \sum_j \dya{j}\ot \sg_{B}^j$ where $X$ is binary,
\begin{equation}
\label{eqnmaxdef111}
H_{\max}(X|B) = \log \Big(1+2 \norm[\big]{\sqrt{\sg_{B}^0}  \sqrt{\sg_{B}^1}}_1 \Big),
\end{equation}
where the 1-norm is $\| M \|_1 = \Tr \sqrt{M\ad M}$. Next we show \footnoterecall{SuppFootnote}, for any positive semi-definite operators $M$ and $N$, 
\begin{equation}
\label{eqntrace-norm-fidelitymain}
\norm{M - N}_{1}^{2} + 4 \norm{\sqrt{M} \sqrt{N}}_{1}^{2} \leq (\Tr M + \Tr N)^{2}.
\end{equation}
Combining \eqref{eqntrace-norm-fidelitymain} with \eqref{eqnmaxdef111}, and using the well-known formula $\norm{\sg_{B}^0 - \sg_{B}^1}_{1} = 2 p_{\g}(X|B)-1$, gives
\begin{equation}
\label{eqnmaxlemma112}
H_{\max}(X|B) \leq \log \Big(1+\sqrt{1-(2 p_{\g}(X|B)-1)^2}  \Big).
\end{equation}
Now one can define generic measures of particle and wave behavior directly in terms of the guessing probabilities:
\begin{align}
\label{eqnsymmeasures1}
\DC_g&:= 2 p_{\g}(Z|E_1)-1,\\
\VC_g &:= \max_{W\in XY} [2 p_{\g}(W|E_2)-1]
\end{align}
for some arbitrary quantum systems $E_1$ and $E_2$, and rearrange \eqref{eqnMainResult} into the traditional form for WPDRs:
\begin{equation}
\label{eqnGenVis23434}
\DC_g^2+\VC_g^2 \leq 1.
\end{equation}
This operationally-motivated relation, which follows directly from \eqref{eqnMainResult}, clearly imposes a restriction on Alice's ability to win the complementary guessing game, since $\DC_g$ and $\VC_g$ are defined in terms of the winning probabilities. Below we show that $\VC_g$ becomes the fringe visibility when $E_2$ is discarded.

\textit{Predictive WPDRs.---}We now elaborate on our framework for deriving predictive WPDRs. Let us denote the quanton's spatial degree of freedom as $S$, which includes the previously mentioned $Q$ as a subspace. At time $t_2$ (see, e.g., Fig.~\ref{fgrMZ}) - the time just before a phase $\phi$ is applied and the interferometer is closed - $S$ and its environment $E$ are in some state $\rho^{(2)}_{SE}$, where again $E=E_1 E_2$ is a generic bipartite system. The preparation is arbitrary, i.e., we need not specify what happened at earlier times, such as what the system's state was at time $t_1$ (prior to the interaction between $S$ and $E$). While in general a binary interferometer may have more than two paths, all but two of these are non-interfering (by definition), hence we only consider the two-dimensional subspace associated with the two which-path states of interest, denoted $\ket{0}$ and $\ket{1}$. This subspace, defined by the projector $\Pi :=\dya{0}+\dya{1}$, can be thought of as an effective qubit system $Q$. (Note that $Q=S$ in the MZI.) Without loss of generality, we project the state $\rho^{(2)}_{SE}$ onto this subspace and denote the resulting (renormalized) state as 
\begin{equation}
\label{eqnrho2qe2342}
\rho^{(2)}_{QE} = (\Pi \ot \id)\rho^{(2)}_{SE} (\Pi \ot \id) / \Tr(\Pi \rho^{(2)}_{S}).
\end{equation}
Experimentally this corresponds to post-selecting on the interfering portion of the data. To derive predictive WPDRs, we apply \eqref{eqnMainResult} to the state $\rho^{(2)}_{QE}$ in \eqref{eqnrho2qe2342}, where we associate the subsystems $E_1$ and $E_2$ of $E$ with the particle and wave terms respectively. 

For example this approach gives the WPDRs discussed in \cite{EnglertPRL1996}, Eqs.~\eqref{eqn1} and \eqref{eqn1bb}. To show this we must prove \eqref{eqnVisMaxRelation}, which relates our entropic measure of wave behavior in \eqref{eq:lackofwave} to fringe visibility, and we now do this for generic binary interferometers. We remark that one can take \eqref{eqnfringevis1} as a generic definition for fringe visibility, where the label $D_0$ is arbitrary, i.e., it corresponds to some arbitrary detector. For generic binary interferometers, there is a phase shift $\phi$ applied just after time $t_2$, as depicted in Fig.~\ref{fgrMZ}. Let $U_{\phi} = \dya{0}+e^{i\phi}\dya{1}$ denote the unitary associated with this phase shift, and note that we only need to specify the action of $U_{\phi}$ on the $Q$ subspace since the state $\rho^{(2)}_{QE}$ lives in this subspace. 

Finally the quanton is detected somewhere, i.e., system $S$ is measured and a detector $D_0$ clicks. This measurement is a positive operator valued measure (POVM) $C=\{C_0, C_1,...\}$ on the larger space, system $S$ rather than the subspace $Q$ (e.g., think of the double slit case, where the detection screen performs a position measurement on $S$). We associate the POVM element $C_0$ with the event of detector $D_0$ clicking. To prove \eqref{eqnVisMaxRelation}, we need to restrict the form of $C_0$. We show that \eqref{eqnVisMaxRelation} holds so long as $C_0$ is \textit{unbiased} with respect to the which-path basis $Z$ on the subspace $Q$. Fortunately this condition is satisfied for all three types of interferometers in Figs.~\ref{fgrMZ},~\ref{fgrFran}, and ~\ref{fgrDS}. More precisely, it is satisfied for the MZI provided $\bs_2$ is 50/50, for the Franson case provided \text{both} $\bs_2$ (the second beam splitters in Fig.~\ref{fgrFran}) are 50/50, and for the double slit for some limiting choice of experimental parameters such as large $L$ in Fig.~\ref{fgrDS}. We now state a general lemma that applies to all of these interferometers.

\begin{lemma}
\label{lem1}
Consider a binary interferometer where $\tilde{C}_0:= \Pi C_0 \Pi$ denotes the projection of POVM element $C_0$ onto the interfering subspace ($Q$). Suppose $\tilde{C}_0$ is proportional to a projector projecting onto a state from the $XY$ plane of the Bloch sphere of $Q$, i.e.,
\begin{equation}
\label{eqnC0condition}
\tilde{C}_0 = q \dya{w_{+}}
\end{equation}
for some $0< q \leq 1$, where $\ket{w_{+}}$ is given by \eqref{eqnW11} for some arbitrary phase $\phi_0$. Then it follows that
\begin{equation}
\label{eqnVisMaxRelationMeth}
\min_{W\in XY} H_{\max}(W) = \log (1+\sqrt{1-\VC^2}),
\end{equation}
where $\VC$ is given by \eqref{eqnfringevis1}, and $H_{\max}(W)$ is evaluated for the state $\rho_{Q}^{(2)}=\Tr_E(\rho_{QE}^{(2)})$.
\end{lemma}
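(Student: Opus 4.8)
The plan is to reduce both sides of \eqref{eqnVisMaxRelationMeth} to a single geometric quantity: the length $r_\perp := \sqrt{r_x^2 + r_y^2}$ of the projection onto the $XY$ plane of the Bloch vector of $\rho_Q^{(2)}$. Writing $\rho_Q^{(2)} = \tfrac{1}{2}(\id + \vec{r}\cdot\vec{\sigma})$ with $\vec{r}=(r_x,r_y,r_z)$, I would show that the fringe visibility equals $r_\perp$ and, independently, that the minimized max-entropy equals $\log(1+\sqrt{1-r_\perp^2})$; identifying $r_\perp = \VC$ then yields the claim.

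First I would compute the detector statistics. Since $\rho_Q^{(2)}$ is supported on $Q$ and $U_\phi$ preserves $Q$, the detection probability factors through the projected POVM element, so $p^{D_0}(\phi) = \Tr\!\big(\tilde{C}_0\, U_\phi \rho_Q^{(2)} U_\phi^\dagger\big)$. Using the hypothesis $\tilde{C}_0 = q\,\dya{w_+}$ together with the fact that $\ket{w_+}$ lies in the $XY$ plane, I would write $\tilde{C}_0 = \tfrac{q}{2}(\id + \hat{n}_0\cdot\vec{\sigma})$ with $\hat{n}_0$ a unit vector in the $XY$ plane, so that $p^{D_0}(\phi) = \tfrac{q}{2}\big(1 + \hat{n}_0\cdot\vec{r}_\phi\big)$, where $\vec{r}_\phi$ is $\vec{r}$ rotated about the $z$-axis by $\phi$. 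Because $\hat{n}_0$ has no $z$-component, only the in-plane part of $\vec{r}$ contributes and $\hat{n}_0\cdot\vec{r}_\phi = r_\perp\cos(\phi-\phi_\ast)$ for a fixed offset $\phi_\ast$. Hence $p^{D_0}_{\max}=\tfrac{q}{2}(1+r_\perp)$ and $p^{D_0}_{\min}=\tfrac{q}{2}(1-r_\perp)$, and \eqref{eqnfringevis1} gives $\VC = r_\perp$. I would stress that the two hypotheses on $\tilde{C}_0$ are used in an essential way: being proportional to a rank-one projector forces the oscillation amplitude to match the constant offset $q/2$, and lying in the $XY$ plane (equivalently, being unbiased with respect to $Z$) removes any $\phi$-independent contribution proportional to $r_z$ that would otherwise spoil the ratio defining $\VC$.

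Next I would evaluate the entropic side. A which-phase observable $W\in XY$ has Bloch direction $\hat{m}=(\cos\alpha,\sin\alpha,0)$ and outcome probabilities $p_\pm = \tfrac{1}{2}(1\pm \hat{m}\cdot\vec{r}) = \tfrac{1}{2}(1\pm \hat{m}\cdot\vec{r}_\perp)$. Applying the unconditional formula $H_{\max}(P)=2\log\sum_j\sqrt{p_j}$ to this binary distribution and using $p_++p_-=1$ gives $H_{\max}(W)=\log\big(1+2\sqrt{p_+p_-}\big)=\log\big(1+\sqrt{1-(\hat{m}\cdot\vec{r}_\perp)^2}\big)$. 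Minimizing over $W\in XY$ is therefore equivalent to maximizing $|\hat{m}\cdot\vec{r}_\perp|$ over unit in-plane directions, whose maximum is $r_\perp$, attained when $\hat{m}$ aligns with $\vec{r}_\perp$. Thus $\min_{W\in XY}H_{\max}(W)=\log(1+\sqrt{1-r_\perp^2})$, and substituting $r_\perp=\VC$ completes the proof.

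The main obstacle is the first step, namely pinning down $\VC = r_\perp$ cleanly: one must verify that sweeping $\phi$ actually attains the extremes of the sinusoid (so that $p^{D_0}_{\max}$ and $p^{D_0}_{\min}$ are as claimed) and track exactly where each hypothesis on $\tilde{C}_0$ enters, since dropping either the rank-one or the $XY$-plane assumption would replace $r_\perp$ by $|\vec{b}|\,r_\perp/a$ for a general $\tilde{C}_0 = \tfrac{1}{2}(a\,\id+\vec{b}\cdot\vec{\sigma})$, or reintroduce an $r_z$-dependent offset, and thereby break the identity. By contrast, the entropic minimization is a routine one-parameter optimization once the max-entropy formula is applied.
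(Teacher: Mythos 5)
Your proposal is correct and follows essentially the same route as the paper's proof: the paper identifies the optimal in-plane observable $\widetilde{W}$ and shows $\VC=\ave{\sg_{\widetilde{W}}}$, which is exactly your $r_\perp$, and then reads off $\min_W H_{\max}(W)=\log(1+\sqrt{1-\ave{\sg_{\widetilde{W}}}^2})$. Your explicit Bloch-vector parametrization merely spells out the "geometry of the Bloch sphere" step that the paper invokes to get $p^{D_0}_{\min}$, and your bookkeeping of where the rank-one and $XY$-plane hypotheses on $\tilde{C}_0$ enter matches the paper's use of \eqref{eqnC0condition}.
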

\begin{proof}
In what follows it should be understood that probabilities and expectation values are evaluated for the state $\rho_Q^{(2)}$. Suppose that $\widetilde{W}$ is optimal in the sense that $\max_{W\in XY} \Pr (w_{+}) = \Pr (\widetilde{w}_+)$ where $\Pr (w_{\pm}) := \mte{w_{\pm}}{\rho^{(2)}_Q}$. Then we have
\begin{align}
\min_{W\in XY} H_{\max}(W)&=\log \Big(1+\sqrt{1-\ave{\sg_{\widetilde{W}}   }^2}\Big) 
\end{align}
where we denote Pauli operators by $\sg_{W}:= \dya{w_+}-\dya{w_-}$, and $\ave{\sg_{\widetilde{W}} }= \Pr (\widetilde{w}_+)-\Pr (\widetilde{w}_-)$.

The probability for $D_0$ to click is
\begin{equation}
\label{eqnpD0click}
p^{D_0} = \Tr(C_0 U_{\phi} \rho^{(2)}_Q U_{\phi}\ad) = \Tr(U_{\phi}\ad \tilde{C}_0 U_{\phi} \rho^{(2)}_Q)
\end{equation}
and maximising this over $\phi$ gives
\begin{equation}
\label{eqnpD0clickmax}
p^{D_0}_{\max} = q \max_{W\in XY} \Pr (w_+) = q \Pr (\widetilde{w}_+).
\end{equation}
Now, due to the geometry of the Bloch sphere, we have $p^{D_0}_{\min} = \Pr (\widetilde{w}_-)$. Thus, $p^{D_0}_{\max}+ p^{D_0}_{\min} = q$ and $p^{D_0}_{\max} - p^{D_0}_{\min} = q \ave{\sg_{\widetilde{W}}   }$. This gives $\VC = \ave{\sg_{\widetilde{W}}   }$, completing the proof.
\end{proof}

\textit{Retrodictive WPDRs.---}While we saw that the predictive approach allowed for any preparation but required complementary output measurements, the opposite is true in the retrodictive case, i.e., the form of the output measurement is arbitrary while we require complementary preparations. The input ensembles $Z_i = \{\ket{0},\ket{1}\}$ and $W_i = \{\ket{w_+},\ket{w_-}\}$ can be generated by performing the relevant measurements on a reference qubit $Q'$ that is initially entangled to the quanton $S$. Associating state ensembles with measurements on a reference system is a useful trick, e.g., for deriving \eqref{eqn1input}. Thus, at time $t_1$ (just after the quanton enters the interferometer, see Fig.~\ref{fgrMZ}) we introduce a qubit $Q'$ that is maximally entangled to the interfering subspace ($Q$) of $S$, denoted by the state $\rhob^{(1)}_{Q'S} = \dya{\Phi}$ with $\ket{\Phi} = (\ket{00}+\ket{11})/\sqrt{2}$. The dynamics after time $t_1$ is modelled as a quantum operation $\AC$, defined in \cite{NieChu00} as a completely positive, trace non-increasing map, that maps $S \to E_1E_2$. The output of $\AC$ does not contain $S$ because the quanton is eventually detected by a detector, at which point we no longer need a quantum description the quanton's spatial degree of freedom; we only care where it was detected. The map $\AC$ corresponds to a particular detection event; for concreteness say that detector $D_0$ clicking is the associated event. The probability for this event is the trace of the state after the action of $\AC$, and renormalizing gives the final state
\begin{equation}
\label{eqnrhob2342main}
\rhob^{D_0}_{Q' E_1E_2}:=\frac{  (\IC \ot \AC)(\rhob^{(1)}_{Q'S}) }{ \Tr [ (\IC \ot \AC)(\rhob^{(1)}_{Q'S}) ] }.
\end{equation}
Our framework applies the uncertainty relation \eqref{eqnMainResult} to the state $\rhob^{D_0}_{Q' E_1E_2}$ to derive retrodictive WPDRs.

For example, this covers the scenario from the Discussion where $\AC$ involves two sequential steps. First $S$ interacts with an environment $F$ inside the interferometer between times $t_1$ and $t_2$, which corresponds to a channel $\EC$ mapping $S$ to $S \En$. Second, the quanton is detected at the interferometer output, say at detector $D_0$, modelled as a map $\BC(\cdot) = \Tr_S [C_0(\cdot)]$ acting on $S$, where $C_0$ is the POVM element associated with detector $D_0$ clicking. Hence we choose $\AC = \BC \circ \EC$. Applying \eqref{eqnMainResult} to this case while choosing $E_1 = F$ and $E_2$ to be trivial gives
\begin{equation}
\label{eqnJMeurmain1}
H_{\min}(Z | \En )_{\rhob}+\min_{W\in XY}H_{\max}(W)_{\rhob}\geq 1,
\end{equation}
where the subscript $\rhob$ means evaluating on the state in \eqref{eqnrhob2342main}. Note that measuring $Z$ on system $Q'$ corresponds to sending the states $\{\ket{0},\ket{1}\}$ with equal probability through the interferometer, and similarly for $W$ (with an inconsequential complication of taking the transpose of the $W$ basis states). Realizing this, the first and second terms in \eqref{eqnJMeurmain1} map onto $\DC_i$ and $\VC_i$ respectively: 
\begin{align}
\label{eqnDiVi} H_{\min}(Z | \En )_{\rhob} &= 1-\log (1+\DC_i ), \\
\label{eqnDiVi222} \min_{W\in XY}H_{\max}(W)_{\rhob} &= \log (1+\sqrt{1- \VC_i^2 }).
\end{align}
Hence \eqref{eqnJMeurmain1} becomes \eqref{eqn1input}.

It remains to show that $\VC_i$ appearing in \eqref{eqn1input} can be replaced by $\VC$ for many cases of interest, such as the $\qbs$ case. We do this in the following lemma, where the proof is given in \footnoterecall{SuppFootnote} and is similar to the proof of Lemma~\ref{lem1}.

\begin{lemma}
\label{lem22main}
Consider any binary interferometer with an unbiased input, i.e., where the state at time $t_1$ is unbiased with respect to the which-path basis (of the form $\ket{\psi_Q^{(1)}} =(\ket{0}+e^{i\phi}\ket{1})/\sqrt{2}$). Let $\EC_S = \Tr_F \circ \EC$ be the channel describing the quanton's interaction with $F$ inside the interferometer, and let $\GC(\cdot) = \Pi (\cdot)\Pi$ be the map that projects onto the subspace $\Pi $. Suppose $\EC_S$ is path-preserving, i.e., $\EC_S(\dya{0})=\dya{0}$ and $\EC_S(\dya{1})=\dya{1}$ and furthermore suppose $\EC_S$ commutes with $\GC$. Then
\begin{equation}
\label{eqnVisMaxRelationMeth2}
\min_{W\in XY} H_{\max}(W)_{\rhob} = \log (1+\sqrt{1-\VC^2}),
\end{equation}
where $H_{\max}(W)$ is evaluated for the state $\rhob_{Q'}^{D_0}$.
\end{lemma}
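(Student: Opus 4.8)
The plan is to mirror the proof of Lemma~\ref{lem1}: I will compute the reference-qubit state $\rhob_{Q'}^{D_0}$ explicitly, read off the length $r_{XY}$ of the projection of its Bloch vector onto the $XY$ plane, show that $\min_{W\in XY}H_{\max}(W)_{\rhob}=\log(1+\sqrt{1-r_{XY}^2})$, and then verify separately that the ordinary (predictive) fringe visibility satisfies $\VC=r_{XY}$. Both quantities will turn out to equal the same coherence $|\kappa\,\bra{1}\tilde{C}_0\ket{0}|$ up to normalisation, which closes the argument.

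First I would reduce the dynamics to the effective qubit $Q$. The path-preserving hypotheses $\EC_S(\dya{0})=\dya{0}$ and $\EC_S(\dya{1})=\dya{1}$ are rank-one fixed points, so they force every Kraus operator of $\EC_S$ to act as $K_a\ket{0}=\alpha_a\ket{0}$ and $K_a\ket{1}=\beta_a\ket{1}$; hence $\EC_S(\dyad{0}{1})=\kappa\dyad{0}{1}$ with $\kappa=\sum_a\alpha_a\beta_a^*$ and $|\kappa|\leq1$ by Cauchy--Schwarz. Together with the assumption that $\EC_S$ commutes with $\GC$, this guarantees that the images of all four basis operators remain inside the interfering subspace, so that when the detection map $\Tr_S[C_0\,\cdot\,]$ acts on the post-interaction state it reduces to $\Tr_Q[\tilde{C}_0\,\cdot\,]$.

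Next I would propagate the maximally entangled input $\ket{\Phi}=(\ket{00}+\ket{11})/\sqrt2$. Applying $\IC\ot\EC_S$ and then the detection, I obtain, before normalisation, the reference state
\begin{equation*}
\tau_{Q'}=\frac12\begin{pmatrix}\bra{0}\tilde{C}_0\ket{0} & \kappa\,\bra{1}\tilde{C}_0\ket{0}\\[2pt] \kappa^*\bra{0}\tilde{C}_0\ket{1} & \bra{1}\tilde{C}_0\ket{1}\end{pmatrix},
\end{equation*}
with $\Tr\tau_{Q'}=\tfrac12\Tr\tilde{C}_0$ the probability that $D_0$ clicks. Dividing by this trace and reading off the off-diagonal entry gives $r_{XY}=2|\kappa|\,|\bra{1}\tilde{C}_0\ket{0}|/\Tr\tilde{C}_0$. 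Exactly as in Lemma~\ref{lem1}, for a qubit one has $H_{\max}(W)=\log(1+\sqrt{1-\ave{\sg_W}^2})$, and minimising over $W\in XY$ amounts to maximising $|\ave{\sg_W}|$, whose optimum is precisely the projected Bloch length $r_{XY}$; hence $\min_{W\in XY}H_{\max}(W)_{\rhob}=\log(1+\sqrt{1-r_{XY}^2})$.

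Finally I would evaluate $\VC$ from its defining equation~\eqref{eqnfringevis1}. For the unbiased input $\ket{+}$ the post-interaction state is $\rho^{(2)}_Q=\tfrac12(\dya{0}+\kappa\dyad{0}{1}+\kappa^*\dyad{1}{0}+\dya{1})$, so $p^{D_0}(\phi)=\Tr[U_\phi^\dagger\tilde{C}_0U_\phi\,\rho^{(2)}_Q]=\tfrac12\Tr\tilde{C}_0+\mathrm{Re}\!\big(e^{-i\phi}\kappa\,\bra{1}\tilde{C}_0\ket{0}\big)$. Extremising over $\phi$ yields $p^{D_0}_{\max/\min}=\tfrac12\Tr\tilde{C}_0\pm|\kappa|\,|\bra{1}\tilde{C}_0\ket{0}|$, whence $\VC=2|\kappa|\,|\bra{1}\tilde{C}_0\ket{0}|/\Tr\tilde{C}_0=r_{XY}$, which substituted into the previous paragraph gives~\eqref{eqnVisMaxRelationMeth2}. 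The main obstacle I anticipate is the reduction step: justifying rigorously that the path-preserving and commutation hypotheses together let one replace the full POVM element $C_0$ on $S$ by its projection $\tilde{C}_0$ on $Q$, and that the phase scan defining $\VC$ probes the very same coherence $\kappa\,\bra{1}\tilde{C}_0\ket{0}$ that the $W$-minimisation extracts from $\rhob_{Q'}^{D_0}$. The unbiased-input hypothesis is exactly what forces the predictive and retrodictive normalisations to coincide, so that the two extractions of this coherence agree.
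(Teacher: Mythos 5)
Your proof is correct, and it reaches the paper's conclusion by a more explicit route. The paper actually proves a generalized (``hybrid'') version of this lemma in the Supplementary Information: there the key technical tool is a commutation lemma stating that any qubit channel fixing $\dya{0}$ and $\dya{1}$ commutes with every phase rotation $\RC_\phi$ about the $Z$ axis (proved geometrically from the Bloch-sphere structure of unital $Z$-preserving channels); both the visibility and the max-entropy term are then expressed through a single function $f(\phi)=\Tr[\tilde{C}_0\,\RC_\phi(\EC_S(\rho_Q^{(1)}))]$ evaluated at its maximizer $\widehat{\phi}$ and at $\widehat{\phi}+\pi$, without ever writing the states as matrices. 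You instead extract from the Kraus representation the dephasing form $\EC_S(\dyad{0}{1})=\kp\,\dyad{0}{1}$ with $|\kp|\leq 1$ (the same observation the main text makes for the quantum-beam-splitter example), then compute $\rhob_{Q'}^{D_0}$ and $p^{D_0}(\phi)$ explicitly and identify both $r_{XY}$ and $\VC$ with $2|\kp\,\bra{1}\tilde{C}_0\ket{0}|/\Tr\tilde{C}_0$. Your route is more elementary and makes the common coherence visible; the paper's route is what generalizes to partially entangled reference states (biased inputs), where the predictive and retrodictive normalizations no longer coincide and must be tracked through $f$. Two minor remarks: (i) your Kraus argument already shows that $\EC_S$ maps operators supported on $Q$ into operators supported on $Q$, so the hypothesis that $\EC_S$ commutes with $\GC$ is not actually needed for inputs confined to $\Pi$ --- the paper invokes it only to handle initial states with support outside the interfering subspace; (ii) the observable $W$ measured on $Q'$ corresponds to the transposed ensemble on $Q$, but, as the paper notes, this is washed out by the minimization over the full $XY$ plane, so your identification stands.
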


\textit{$\qbs$ example.---}Finally, we treat the quantum beam splitter shown in Fig.~\ref{fgr3}. (Note that $S=Q$ in the MZI.) This setup involves first a quantum channel $\EC$ that describes the interaction of $S$ with an environment $F$ between times $t_1$ and $t_2$, followed by another channel associated with the $\qbs$ that interacts $S$ with the polarization $P$, followed by a post-selected detection at $D_0$. Together these three steps form a quantum operation $\AC$ that maps $S \to \En P$, and hence this falls under our retrodictive framework.

To prove \eqref{eqn23943521} we apply \eqref{eqnMainResult} to the state in \eqref{eqnrhob2342main} while choosing $E_1 = \En P$ and $E_2$ to be trivial, giving
\begin{equation}
\label{eqnJMeurQBS}
H_{\min}(Z| \En P)_{\rhob}+\min_{W\in XY}H_{\max}(W)_{\rhob}\geq 1.
\end{equation}
We then use relations analogous to those in \eqref{eqnDiVi} and \eqref{eqnDiVi222}, where the former relation now involves conditioning also on the polarisation $P$. Finally, we note that Lemma~\ref{lem22main} applies to the $\qbs$ case.

\section*{ACKNOWLEDGEMENTS}

We thank B.\ Englert and S.\ Tanzilli for helpful correspondence, and acknowledge helpful discussions with M.\ Woods, M.\ Tomamichel, C.~J.\ Kwong, and L.~C.\ Kwek. We acknowledge funding from the Ministry of Education (MOE) and National Research Foundation Singapore, as well as MOE Tier 3 Grant ``Random numbers from quantum processes'' (MOE2012-T3-1-009).

\clearpage

\onecolumngrid

\renewcommand{\thetheorem}{S\arabic{theorem}}
\renewcommand{\thefigure}{S\arabic{figure}}
\renewcommand{\theequation}{S\arabic{equation}}

\section*{\large Supplementary Information}

\starttocentries

\tableofcontents

\section{Introduction}

In this Supplementary Information, we elaborate on the technical details justifying our claims. Furthermore, to emphasize the universality of our framework, we provide additional results showing that other WPDRs appearing in the literature can be phrased within our framework.

In what follows, Sec.~\ref{sct1A} proves the relation between the max-entropy and the guessing probability given in the Methods section. In Sec.~\ref{scthybrid} we extend our framework to scenarios where both the preparation (at the interferometer input) and the measurement outcome (at the interferometer output) may provide information about the quanton's path (inside the interferometer). Such situations are a ``hybrid'' of the predictive and retrodictive scenarios discussed in the main text. This extension allows us to reinterpret a result in Ref.~\cite{PhysRevA.85.054101} for asymmetric beam splitters as an entropic uncertainty relation and hence show that it falls under our framework. Subsection~\ref{sctProofLem2} proves Lemma~2 from the main text, or in fact, proves a generalized version of this lemma that holds for these ``hybrid'' scenarios. In Sec.~\ref{sct4A} we consider WPDRs involving enhanced visibilty. In particular we show that the results of Ref.~\cite{Englert2000337} for quantum erasure and Ref.~\cite{Banaszek:2013fk} for polarization dynamics can both be viewed as entropic uncertainty relations, where the visibility term is enhanced by conditioning on additional information. Finally in Sec.~\ref{sct2A} we elaborate on our novel WPDR for the quantum beam splitter, demonstrating that it captures the coherence in the $\qbs$ and discussing how the polarization-enhanced distinguishability can be measured.

\section{Relating max-entropy to guessing probability}\label{sct1A}

Here we relate the max-entropy to the guessing probability. The main lemma that we prove explicitly solves for the max-entropy of a classical-quantum (cq) state where the classical register is binary, which to our knowledge is a new result. An arbitrary cq state where the classical register is binary, which is the only case relevant to our analysis, can be written as $\rho_{XB} = \ketbraq{0} \otimes \sigma_{0} + \ketbraq{1} \otimes \sigma_{1}$, where $\sigma_{0}$ and $\sigma_{1}$ are subnormalized states that satisfy $\Tr \sigma_{0} + \Tr \sigma_{1} = 1$. In this case, the optimal guessing probability takes the form
\begin{equation}
p_{\g}(X | B) := \max_{M_{0}, M_{1}} \Tr ( M_{0} \sigma_{0}) + \Tr( M_{1} \sigma_{1} ),
\end{equation}
where the maximization is taken over all POVMs on subsystem $B$, namely operators $M_{0}, M_{1} \geq 0$ such that $M_{0} + M_{1} =~\id$. Since this is exactly the state discrimination problem solved by Helstrom~\cite{helstrom76} we have
\begin{equation}
\label{eqnpguesstd1}p_{\g}(X | B) = \frac{1}{2} + \frac{1}{2} \norm{\sigma_{0} - \sigma_{1}}_{1}.
\end{equation}
Hence the guessing probability is related to the trace distance between the conditional states.

The formula for the max-entropy was given in Eq.~(6b) from the main text and was expressed in terms of the fidelity, which is defined as
\begin{equation}
F(\AAz , \AAw ) := \norm[\big]{\sqrt{\AAz } \sqrt{\AAw }}_{1},
\end{equation}
for two positive semi-definite operators $\AAz $ and $\AAw $. In our case of a cq state with a binary register, the formula given in Eq.~(6b) simplifies to \cite{KonRenSch09}
\begin{equation}
\label{eq:hmax-maximization}
H_{\textnormal{max}}(X | B) = 2 \log \max_{\rho} \big( F(\sigma_{0}, \rho) + F(\sigma_{1}, \rho) \big),
\end{equation}
where the maximization is taken over all normalized states on $B$. The following lemma derives the optimal value of this optimization problem.
\begin{lemma}
Let $\AAz , \AAw  \geq 0$ be positive semi-definite operators and let $\mathcal{S}$ be the set of positive semi-definite operators with unit trace. Then
\begin{equation}
\max_{\rho \in \mathcal{S}} \big( F(\AAz , \rho) + F(\AAw , \rho) \big) = \sqrt{ \Tr \AAz  + \Tr \AAw  + 2 F(\AAz , \AAw ) }.
\end{equation}
\end{lemma}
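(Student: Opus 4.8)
The plan is to prove the identity by passing to purifications and invoking Uhlmann's theorem twice. The right-hand side already signals the underlying geometry: if $\ket{m}$ and $\ket{n'}$ are purification vectors of $M$ and $N$ normalized so that $\langle m|m\rangle = \Tr M$ and $\langle n'|n'\rangle = \Tr N$, and if $\ket{n'}$ is rotated (by a unitary on the purifying system) so that $\langle m|n'\rangle = F(M,N)$ is real and nonnegative, then $\langle m+n'|m+n'\rangle = \Tr M + \Tr N + 2F(M,N)$, which is exactly the square of the claimed optimum. So the real target is to show that $\max_{\rho}\bigl(F(M,\rho)+F(N,\rho)\bigr)$ equals the largest possible norm $\|\,\ket{m}+\ket{n'}\,\|$ over purifications $\ket{n'}$ of $N$.

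First I would fix a canonical purification $\ket{m}=(\sqrt{M}\ot\id)\ket{\Omega}$, with $\ket{\Omega}=\sum_i\ket{ii}$, and likewise $\ket{n}$, recording $\langle m|m\rangle=\Tr M$. By Uhlmann's theorem applied to positive semi-definite operators, for any normalized $\rho$ one has $F(M,\rho)=\max_{\ket{\phi_1}}\mathrm{Re}\langle m|\phi_1\rangle$, where $\ket{\phi_1}$ ranges over purifications of $\rho$ in the doubled space (the restriction to $\mathrm{Re}$ is free, by adjusting a phase), and similarly $F(N,\rho)=\max_{\ket{\phi_2}}\mathrm{Re}\langle n|\phi_2\rangle$. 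Because these two optimizations are over independent vectors, the sum of the maxima equals the maximum of the sum, so $F(M,\rho)+F(N,\rho)=\max_{\ket{\phi_1},\ket{\phi_2}}\bigl[\mathrm{Re}\langle m|\phi_1\rangle+\mathrm{Re}\langle n|\phi_2\rangle\bigr]$, subject to $\ket{\phi_1}$ and $\ket{\phi_2}$ both purifying the same $\rho$.

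Next I would take the maximum over $\rho$. The only coupling between $\ket{\phi_1}$ and $\ket{\phi_2}$ is that they purify a common state; as $\rho$ ranges over all normalized states this constraint reduces to saying that $\ket{\phi_1}$ is an arbitrary unit vector while $\ket{\phi_2}=(\id\ot W)\ket{\phi_1}$ for some unitary $W$ on the purifying factor. Setting $\ket{n'}=(\id\ot W^\dagger)\ket{n}$, which sweeps out all purifications of $N$ as $W$ varies, the objective collapses to $\mathrm{Re}\langle m+n'|\phi_1\rangle$. Maximizing over the unit vector $\ket{\phi_1}$ yields $\|\,\ket{m}+\ket{n'}\,\|$, and then maximizing over the purification $\ket{n'}$ of $N$ gives $\sqrt{\Tr M+\Tr N+2\max_{\ket{n'}}\mathrm{Re}\langle m|n'\rangle}$; a final application of Uhlmann's theorem identifies $\max_{\ket{n'}}\mathrm{Re}\langle m|n'\rangle=F(M,N)$, which closes the argument.

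I expect the main obstacle to be the bookkeeping in the step that takes the maximum over $\rho$: one must justify carefully that (i) two purifications of the same state in a common, sufficiently large purifying space are related by a unitary on the purifying factor, (ii) letting $\rho$ be free turns $\ket{\phi_1}$ into an unconstrained unit vector, and (iii) this freeing-up is precisely what permits absorbing the unitary $W$ into $\ket{n}$ to obtain an arbitrary purification of $N$. Once these reductions are established, the remaining pieces—maximizing $\mathrm{Re}\langle v|\phi\rangle$ over unit vectors and applying Uhlmann's theorem—are routine, so the entire content lies in the equivalence between the two-purification problem and the single-unit-vector-plus-free-purification problem. As a purification-free alternative one could instead express each root fidelity through its semidefinite-program form, $F(A,B)=\max\{\mathrm{Re}\,\Tr X\}$ over operators $X$ for which the block operator with diagonal entries $A,B$ and off-diagonal entries $X,X^\dagger$ is positive semi-definite, and then solve the joint program over $X$, $Y$, and $\rho$; but the purification route is more transparent and I would present that as the main proof.
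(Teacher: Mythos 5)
Your proof is correct, and it takes a genuinely different route from the paper's. The paper works directly with the operators: for the upper bound it writes each root fidelity as $\Tr(V_j\sqrt{\cdot}\sqrt{\rho})$ via the polar/singular-value decomposition, applies Cauchy--Schwarz to $\Tr\big((V_0\sqrt{M}+V_1\sqrt{N})\sqrt{\rho}\big)$, and bounds $\Tr(X^\dagger X)$ using $\Tr\,T\leq\norm{T}_1$ and the triangle inequality; for achievability it then exhibits the explicit optimizer $\rho=K/\Tr K$ with $K=L^\dagger L$, $L=\sqrt{M}+V^\dagger\sqrt{N}$, and verifies $F(M,K)$ and $F(N,K)$ by a direct computation of $\sqrt{M}K\sqrt{M}$. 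Your purification argument packages both directions into one chain of equalities: once you justify (as you flag) that two purifications of a common state in the doubled space differ by a unitary on the purifying factor and that freeing $\rho$ makes $\ket{\phi_1}$ an unconstrained unit vector, the optimum $\norm{\,\ket{m}+\ket{n'}\,}$ drops out with achievability built in, at the cost of invoking Uhlmann's theorem (twice) for subnormalized operators rather than only elementary trace-norm facts. The two proofs are close cousins under the hood --- the reduced state of $\ket{m}+\ket{n'}$ at the optimal $\ket{n'}$ is exactly the paper's $K/\Tr K$ --- but yours makes the geometry transparent and avoids the explicit verification that $K$ saturates the bound, while the paper's is self-contained and delivers the optimizer in closed operator form. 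Both are valid; your bookkeeping concerns in the $\max_\rho$ step are all resolvable by standard facts provided the purifying factor has dimension at least that of the system, which the doubled-space convention guarantees.
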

\begin{proof}
First, we show that the right-hand side constitutes a valid upper bound and then we give an explicit choice of $\rho$ that achieves it.

For arbitrary unitaries $U_{0}$ and $U_{1}$ let $X^{\dagger} = U_{0} \sqrt{\AAz } + U_{1} \sqrt{\AAw }$ and $Y = \sqrt{\rho}$. The Cauchy-Schwarz inequality, $\abs{\Tr (X^{\dagger} Y)}^{2} \leq \Tr (X^{\dagger} X) \cdot \Tr (Y^{\dagger} Y)$, implies that
\begin{equation}
\label{eq:cauchy-schwarz}
\abs[\Big]{\Tr \Big( \big( U_{0} \sqrt{\AAz } + U_{1} \sqrt{\AAw } \big) \sqrt{\rho} \Big)}^{2} \leq \Tr(X^{\dagger} X) \cdot \Tr \rho = \Tr(X^{\dagger} X).
\end{equation}
Since
\begin{equation}
X^{\dagger} X = U_{0} \AAz  U_{0}^{\dagger} + U_{1} \AAw  U_{1}^{\dagger} + U_{0} \sqrt{\AAz } \sqrt{\AAw } U_{1}^{\dagger} + U_{1} \sqrt{\AAw } \sqrt{\AAz } U_{0}^{\dagger}
\end{equation}
we have
\begin{align}
\Tr (X^{\dagger} X) &= \Tr \AAz  + \Tr \AAw  + \Tr \big( U_{0} \sqrt{\AAz } \sqrt{\AAw } U_{1}^{\dagger} + U_{1} \sqrt{\AAw } \sqrt{\AAz } U_{0}^{\dagger} \big)\\
&\leq \Tr \AAz  + \Tr \AAw  + \norm[\big]{U_{0} \sqrt{\AAz } \sqrt{\AAw } U_{1}^{\dagger} + U_{1} \sqrt{\AAw } \sqrt{\AAz } U_{0}^{\dagger}}_{1}\\
&\leq \Tr \AAz  + \Tr \AAw  + 2 \norm[\big]{\sqrt{\AAz } \sqrt{\AAw }}_{1} = \Tr \AAz  + \Tr \AAw  + 2 F(\AAz , \AAw ),
\end{align}
where we have used the fact that for Hermitian matrices $\Tr \; T \leq \norm{T}_{1}$ followed by the triangle inequality for the $1$-norm. Note that this bound is valid for all unitaries $U_{0}$ and $U_{1}$.

Let $L$ be a linear operator and let $L = U_{0} S U_{1}$ be its singular value decomposition. Clearly, $\norm{L}_{1} = \Tr (V L)$ for $V = U_{1}^{\dagger} U_{0}^{\dagger}$. Therefore, for every pair of positive semi-definite operators $A$ and $B$ there exists a unitary $V$ such that $F(A, B) = \norm{\sqrt{A} \sqrt{B}}_{1} = \Tr (V \sqrt{A} \sqrt{B})$. Let us choose unitaries $V_{0}$ and $V_{1}$ such that
\begin{equation}
F(\AAz , \rho) = \Tr \big( V_{0} \sqrt{\AAz } \sqrt{\rho} \big) \nbox{and} F(\AAw , \rho) = \Tr \big( V_{1} \sqrt{\AAw } \sqrt{\rho} \big).
\end{equation}
Adding these two terms together gives
\begin{equation}
F(\AAz , \rho) + F(\AAw , \rho) = \Tr \big( V_{0} \sqrt{\AAz } \sqrt{\rho} \big) + \Tr \big( V_{1} \sqrt{\AAw } \sqrt{\rho} \big) = \Tr \Big( \big( V_{0} \sqrt{\AAz } + V_{1} \sqrt{\AAw } \big) \sqrt{\rho} \Big).
\end{equation}
Since for this particular choice of unitaries the quantity on the right-hand side is real and positive we can apply~\eqref{eq:cauchy-schwarz} to obtain
\begin{equation}
\label{eq:upper-bound}
F(\AAz , \rho) + F(\AAw , \rho) = \abs[\Big]{\Tr \Big( \big( V_{0} \sqrt{\AAz } + V_{1} \sqrt{\AAw } \big) \sqrt{\rho} \Big)} \leq \sqrt{\Tr \AAz  + \Tr \AAw  + 2 F(\AAz , \AAw )}.
\end{equation}

Now, we simply need to provide a state $\rho$ that saturates this inequality. Taking advantage of the singular value decomposition of $\sqrt{\AAz } \sqrt{\AAw } = U_{0} S U_{1}$ (where $S$ is a diagonal matrix of real, non-negative numbers and $\Tr S = \norm{\sqrt{M} \sqrt{N}}_{1}$) we define $V = U_{1}^{\dagger} U_{0}^{\dagger}$ and
\begin{equation}
K = \AAz  + \AAw  + \sqrt{\AAz } V^{\dagger} \sqrt{\AAw } + \sqrt{\AAw } V \sqrt{\AAz }.
\end{equation}
Note that $K \geq 0$ since $K = \LLL^{\dagger} \LLL$ for $\LLL = \sqrt{\AAz } + V^{\dagger} \sqrt{\AAw }$. It is easy to verify that
\begin{equation}
\Tr \big( \sqrt{\AAz } V^{\dagger} \sqrt{\AAw } \big) = \Tr \big( \sqrt{\AAw } V \sqrt{\AAz } \big) = \Tr S = \norm{ \sqrt{\AAz } \sqrt{\AAw } }_{1} = F(\AAz , \AAw ),
\end{equation}
which implies that
\begin{equation}
\Tr K = \Tr \AAz  + \Tr \AAw  + 2 F(\AAz , \AAw )
\end{equation}
To calculate $F(\AAz , K) = \norm{\sqrt{\AAz } \sqrt{K}}_{1} = \Tr \sqrt{\sqrt{\AAz } K \sqrt{\AAz }}$ note that
\begin{align}
\sqrt{\AAz } K \sqrt{\AAz } &= \AAz ^{2} + \sqrt{\AAz } \AAw  \sqrt{\AAz } + \AAz  V^{\dagger} \sqrt{\AAw } \sqrt{\AAz } + \sqrt{\AAz } \sqrt{\AAw } V \AAz \\
&= \AAz ^{2} + U_{0} S^{2} U_{0}^{\dagger} + \AAz  U_{0} S U_{0}^{\dagger} + U_{0} S U_{0}^{\dagger} \AAz \\
&= \big( \AAz  + U_{0} S U_{0}^{\dagger} \big)^{2}.
\end{align}
Therefore, $F(\AAz , K) = \Tr \AAz  + F(\AAz , \AAw )$ and similarly $F(\AAw , K) = \Tr \AAw  + F(\AAz , \AAw )$. Since
\begin{equation}
F(\alpha A, B) = \norm{ \sqrt{\alpha A} \sqrt{B} }_{1} = \sqrt{\alpha} \norm{ \sqrt{A} \sqrt{B} }_{1} = \sqrt{\alpha} F(A, B)
\end{equation}
we can define $\rho = K / \Tr K$ which satisfies
\begin{equation}
F(\AAz , \rho) + F(\AAw , \rho) = \frac{F(\AAz , K) + F(\AAw , K)}{\sqrt{\Tr K}} = \sqrt{\Tr \AAz  + \Tr \AAw  + 2 F(\AAz , \AAw )}
\end{equation}
and saturates the bound~\eqref{eq:upper-bound}.
\end{proof}

Now taking the above lemma and setting $\AAz  = \sigma_{0}$ and $\AAw  = \sigma_{1}$ allows us to solve the maximization in~\eqref{eq:hmax-maximization}. We obtain the following result.

\begin{lemma}
\label{lem:max-fidelity}
For any cq state $\rho_{XB} = \ketbraq{0} \otimes \sigma_{0} + \ketbraq{1} \otimes \sigma_{1}$ where $X$ is binary,
\begin{equation}
\label{eqnmaxentfid1}
H_{\textnormal{max}}(X | B) = \log \big( 1 + 2 F(\sigma_{0}, \sigma_{1}) \big).
\end{equation}
\end{lemma}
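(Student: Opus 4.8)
The plan is to obtain Lemma~\ref{lem:max-fidelity} as an immediate specialization of the general optimization lemma proved just above, with no new analytic work required. The key observation is that \eqref{eq:hmax-maximization} already expresses $H_{\max}(X|B)$ precisely in terms of the maximization $\max_{\rho}\big(F(\sigma_0,\rho)+F(\sigma_1,\rho)\big)$, which is exactly the quantity evaluated by the preceding lemma. So the whole task reduces to plugging in and simplifying a constant.

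Concretely, I would set $M=\sigma_0$ and $N=\sigma_1$ in the preceding general lemma, so that
\begin{equation}
\max_{\rho\in\mathcal{S}}\big(F(\sigma_0,\rho)+F(\sigma_1,\rho)\big)=\sqrt{\Tr\sigma_0+\Tr\sigma_1+2F(\sigma_0,\sigma_1)}.
\end{equation}
Substituting this into \eqref{eq:hmax-maximization} gives $H_{\max}(X|B)=2\log\sqrt{\Tr\sigma_0+\Tr\sigma_1+2F(\sigma_0,\sigma_1)}$. The final step is pure bookkeeping: because $\rho_{XB}$ is a normalized cq state, its subnormalized conditional operators satisfy $\Tr\sigma_0+\Tr\sigma_1=1$, so the argument of the square root collapses to $1+2F(\sigma_0,\sigma_1)$; the outer factor of $2$ then cancels the square root via $2\log\sqrt{t}=\log t$, yielding $H_{\max}(X|B)=\log\big(1+2F(\sigma_0,\sigma_1)\big)$, as claimed.

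There is essentially no obstacle left at this stage: all of the analytic difficulty---the Cauchy--Schwarz upper bound and the explicit construction of the saturating state $\rho=K/\Tr K$---is already discharged in the preceding lemma. The only points requiring care are (i) verifying that the normalization $\Tr\sigma_0+\Tr\sigma_1=1$ is exactly what makes the additive constant equal to $1$, which is where the cq structure genuinely enters, and (ii) confirming that the maximization in \eqref{eq:hmax-maximization} ranges over the same set $\mathcal{S}$ of unit-trace positive semi-definite operators used in the general lemma, so that the two optimizations coincide. Once these are noted, the identity follows by direct substitution.
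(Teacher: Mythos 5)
Your proposal is correct and matches the paper's own derivation exactly: the paper obtains Lemma~\ref{lem:max-fidelity} by setting $\AAz=\sigma_0$, $\AAw=\sigma_1$ in the preceding optimization lemma, substituting into \eqref{eq:hmax-maximization}, and using $\Tr\sigma_0+\Tr\sigma_1=1$ together with $2\log\sqrt{t}=\log t$. Nothing further is needed.
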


Finally we relate the fidelity in \eqref{eqnmaxentfid1} to the trace distance in the guessing probability \eqref{eqnpguesstd1} with the following lemma.
\begin{lemma}
\label{lem:trace-norm-fidelity}
Let $\AAz , \AAw \geq 0$ be two positive semi-definite operators. Then we have
\begin{equation}
\label{eqntrace-norm-fidelity}
\norm{\AAz - \AAw}_{1}^{2} + 4 \norm{\sqrt{\AAz}  \sqrt{\AAw}}_{1}^{2} \leq (\Tr \AAz + \Tr \AAw)^{2}.
\end{equation}
\end{lemma}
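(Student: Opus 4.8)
The plan is to reduce this operator inequality to an elementary two-dimensional geometric fact by passing both left-hand terms through a single, judiciously chosen measurement and then invoking the operational characterizations of the trace distance and the fidelity. First I would recall the two variational facts I need. On the one hand, by the Helstrom state-discrimination result \cite{helstrom76} already invoked in this section, the trace distance is the maximum over POVMs $\{E_m\}$ of $\sum_m |p_m-q_m|$, where $p_m := \Tr(E_m M)$ and $q_m := \Tr(E_m N)$, and it is attained by the projective ``sign'' measurement of $M-N$. On the other hand, for \emph{every} POVM one has $F(M,N)\le \sum_m \sqrt{p_m q_m}$: the measurement channel $\Lambda(\cdot)=\sum_m \Tr(E_m\,\cdot)\dya{m}$ is trace-preserving, $\Lambda M$ and $\Lambda N$ are diagonal with entries $p_m,q_m$, so $F(\Lambda M,\Lambda N)=\sum_m\sqrt{p_m q_m}$, and monotonicity of the fidelity under CPTP maps \cite{NieChu00} gives $F(\Lambda M,\Lambda N)\ge F(M,N)$.

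The key step is to evaluate both terms on the \emph{same} measurement. I would take the two-outcome projective measurement $\{E_0,E_1\}$ associated with the Jordan decomposition $M-N=(M-N)_+-(M-N)_-$, i.e.\ $E_0$ the projector onto the positive part and $E_1=\id-E_0$. Because this is exactly the Helstrom-optimal measurement for the trace-distance term, it attains $\norm{M-N}_1=|p_0-q_0|+|p_1-q_1|$. Since I only need an \emph{upper} bound on the fidelity term, I may use this same measurement there as well, obtaining $F(M,N)\le \sqrt{p_0 q_0}+\sqrt{p_1 q_1}$. At this point all three quantities in the claim—the trace distance, an upper bound on the fidelity, and $\Tr M+\Tr N=(p_0+q_0)+(p_1+q_1)$—are expressed through the nonnegative numbers $p_m,q_m$.

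It then remains to prove a purely classical inequality. For each outcome I would introduce the planar vector $\vec v_m := (|p_m-q_m|,\,2\sqrt{p_m q_m})\in\mathbb{R}^2$, whose Euclidean length is $|\vec v_m|=p_m+q_m$ because $(p_m-q_m)^2+4p_m q_m=(p_m+q_m)^2$. The triangle inequality in $\mathbb{R}^2$ then yields
\begin{equation}
\sqrt{\Big(\sum_m |p_m-q_m|\Big)^2 + 4\Big(\sum_m \sqrt{p_m q_m}\Big)^2} = \Big|\sum_m \vec v_m\Big| \le \sum_m |\vec v_m| = \Tr M + \Tr N,
\end{equation}
and squaring, together with $\norm{M-N}_1=\sum_m|p_m-q_m|$ and $F(M,N)\le\sum_m\sqrt{p_m q_m}$, gives \eqref{eqntrace-norm-fidelity}.

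The main obstacle, and the one point requiring care, is that the trace distance and the fidelity are defined by optimizations over measurements in \emph{opposite} directions (a maximum versus a minimum), so they are generally attained on different measurements, and a naive attempt to optimize both at once fails. The resolution is the observation that, since the target is an upper bound, it suffices to fix the measurement that is optimal for the trace-distance term and to accept the resulting suboptimal—but still valid—bound on the fidelity term. I would also double-check that fidelity monotonicity applies to the subnormalized operators $M,N$ of possibly unequal trace; it does, since the measurement channel is trace-preserving and Uhlmann's theorem holds verbatim for positive semi-definite operators. An alternative, purely algebraic route mirroring the singular-value-decomposition arguments used earlier in this section is possible, but the operational reduction above is shorter and more transparent.
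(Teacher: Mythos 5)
Your proof is correct, but it follows a genuinely different route from the paper's. The paper purifies $\AAz$ and $\AAw$ as $\ket{\psi_{\AAz}}=(\sqrt{\AAz}\ot\id)\ket{\Omega}$ and $\ket{\psi_{\AAw}}=(\sqrt{\AAw}\ot U)\ket{\Omega}$ with $U$ chosen via the singular value decomposition of $\sqrt{\AAz}\sqrt{\AAw}$ so that $\ip{\psi_{\AAz}}{\psi_{\AAw}}=\norm{\sqrt{\AAz}\sqrt{\AAw}}_1$, then invokes monotonicity of the trace norm under partial trace and computes the trace norm of the rank-two operator $\dya{\psi_{\AAz}}-\dya{\psi_{\AAw}}$ exactly from its two eigenvalues, obtaining $\norm{\dya{\psi_{\AAz}}-\dya{\psi_{\AAw}}}_1=\sqrt{(\Tr\AAz+\Tr\AAw)^2-4\norm{\sqrt{\AAz}\sqrt{\AAw}}_1^2}$; the inequality is thus an exact identity at the level of optimally chosen purifications. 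You instead work on the measurement (dual) side: you fix the Helstrom-optimal two-outcome measurement for $\AAz-\AAw$, which attains the trace distance exactly, accept the resulting (valid but generally suboptimal) upper bound $F(\AAz,\AAw)\le\sum_m\sqrt{p_mq_m}$ from fidelity data processing, and close with the planar triangle inequality for the vectors $(\abs{p_m-q_m},2\sqrt{p_mq_m})$ of length $p_m+q_m$. Your key observation---that the opposing optimization directions of trace distance and fidelity are harmless here because only an upper bound on the fidelity term is needed---is exactly right and is the step that makes the argument go through. The trade-off: your route is shorter and operationally transparent, reducing everything to a two-dimensional geometric fact, but it imports fidelity monotonicity under CPTP maps (itself usually proved via Uhlmann); the paper's route is more self-contained, needing only Cauchy--Schwarz and a rank-two eigenvalue computation, and as a bonus exhibits the purification-level equality that shows where the bound is tight.
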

\begin{proof}
Let $\ket{\Omega} = \sum_{k} \ket{k} \ket{k}$ and consider $\ket{\psi_{\Az}} = (\sqrt{\AAz}  \otimes \id) \ket{\Omega}$ and $\ket{\psi_{\Aw}} = (\sqrt{\AAw} \otimes U ) \ket{\Omega}$, where $U$ is a unitary. It is easy to verify that
\begin{equation}
\braket{\psi_{\Az}}{\psi_{\Aw}} = \bramatketq{\Omega}{\sqrt{\AAz}  \sqrt{\AAw} \otimes U} = \Tr (\sqrt{\AAz}  \sqrt{\AAw} U^{T}),
\end{equation}
where $^T$ denotes the transpose in the standard basis. By choosing $U^T = U_{2}\ad U_{1}\ad$, where $U_{1}$ and $U_{2}$ come from the singular value decomposition of $\sqrt{\AAz}  \sqrt{\AAw} = U_{1} S U_{2}$ we obtain $\braket{\psi_{\Az}}{\psi_{\Aw}} = \Tr S = \norm{\sqrt{\AAz}  \sqrt{\AAw}}_{1}$. Let $\Tr_{2}$ denote partial trace over the second subsystem. It is easy to check that $\Tr_{2} \dya{\psi_{\Az}} = \Az$ and $\Tr_{2} \dya{\psi_{\Aw}} = \Aw$. Since the trace norm is non-increasing under the partial trace we have
\begin{equation}
\norm{\AAz - \AAw}_{1} \leq \norm[\Big]{\dya{\psi_{\Az}} - \dya{\psi_{\Aw}} }_{1}.
\end{equation}
The rank of the Hermitian matrix $\HHH = \dya{\psi_{\Az}} - \dya{\psi_{\Aw}}$ is at most 2 and let us denote the non-zero eigenvalues by $\lambda_{1}$ and $\lambda_{2}$. It is easy to verify that
\begin{align}
\lambda_{1} + \lambda_{2} &= \Tr \HHH = \braketq{\psi_{\Az}} - \braketq{\psi_{\Aw}},\\
\lambda_{1}^{2} + \lambda_{2}^{2} &= \Tr \HHH^{2} = \big( \braketq{\psi_{\Az}} \big)^{2} + \big( \braketq{\psi_{\Aw}} \big)^{2} - 2 \abs{\braket{\psi_{\Az}}{\psi_{\Aw}}}^{2}.
\end{align}
Since $\lambda_{1} \lambda_{2} = \frac{1}{2} \big[ (\lambda_{1} + \lambda_{2})^{2} - (\lambda_{1}^{2} + \lambda_{2}^{2}) \big] = \abs{\braket{\psi_{\Az}}{\psi_{\Aw}}}^{2} - \braketq{\psi_{\Az}} \braketq{\psi_{\Aw}}$, the Cauchy-Schwarz inequality ensures that $\lambda_{1} \lambda_{2} \leq 0$. As $\HHH$ is Hermitian, we have $\norm{\HHH}_{1} = \abs{\lambda_{1}} + \abs{\lambda_{2}}$ and since the eigenvalues have opposite signs we can write it as
\begin{equation}
\label{eq:one-norm}
\norm{\HHH}_{1} = \abs{\lambda_{1} - \lambda_{2}} = \sqrt{(\lambda_{1} - \lambda_{2})^{2}}.
\end{equation}
Expanding the square gives
\begin{align}
(\lambda_{1} - \lambda_{2})^{2} &= \lambda_{1}^{2} + \lambda_{2}^{2} - 2 \lambda_{1} \lambda_{2}\\
&= \big( \braketq{\psi_{\Az}} \big)^{2} + \big( \braketq{\psi_{\Aw}} \big)^{2} - 2 \abs{\braket{\psi_{\Az}}{\psi_{\Aw}}}^{2} + 2\braketq{\psi_{\Az}} \braketq{\psi_{\Aw}} - 2\abs{\braket{\psi_{\Az}}{\psi_{\Aw}}}^{2}\\
&= \big( \braketq{\psi_{\Az}} + \braketq{\psi_{\Aw}} \big)^{2} - 4 \abs{\braket{\psi_{\Az}}{\psi_{\Aw}}}^{2},
\end{align}
which combined with~\eqref{eq:one-norm} implies
\begin{align}
\norm[\Big]{\dya{\psi_{\Az}} - \dya{\psi_{\Aw}} }_{1} &= \sqrt{ \big( \braketq{\psi_{\Az}} + \braketq{\psi_{\Aw}} \big)^{2} - 4 \abs{\braket{\psi_{\Az}}{\psi_{\Aw}}}^{2} }\\
&= \sqrt{ ( \Tr \AAz + \Tr \AAw )^{2} - 4 \norm{\sqrt{\AAz}  \sqrt{\AAw}}_{1}^{2} }.
\end{align}
\end{proof}

Combining Lemmas \ref{lem:max-fidelity} and \ref{lem:trace-norm-fidelity} gives the following result. 
\begin{lemma}
\label{lemMaxPguess223}
For any cq state $\rho_{XB}$ where $X$ is binary,
\begin{equation}
\label{eqnmaxentpguess22}
H_{\textnormal{max}}(X | B) \leq \log \bigg( 1 + \sqrt{1-[2p_{\g}(X|B)-1 ]^2} \bigg).
\end{equation}
\end{lemma}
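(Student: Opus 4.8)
The plan is to obtain this bound as an immediate corollary of the two preceding lemmas, so essentially no new machinery is required: the entire analytic content has already been established, and what remains is a short algebraic combination together with one normalization observation.

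First I would invoke Lemma~\ref{lem:max-fidelity}, which for a binary cq state $\rho_{XB}$ gives the \emph{exact} formula $\Hmax(X|B) = \log\big(1 + 2 F(\sigma_0,\sigma_1)\big)$, where $\sigma_0,\sigma_1$ are the subnormalized conditional states and $F(\sigma_0,\sigma_1) = \norm{\sqrt{\sigma_0}\sqrt{\sigma_1}}_1$. Since the logarithm is monotonically increasing, it suffices to upper bound the fidelity term $2 F(\sigma_0,\sigma_1)$ by $\sqrt{1-[2p_{\g}(X|B)-1]^2}$, and then reinsert this bound into the monotone logarithm.

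Next I would apply Lemma~\ref{lem:trace-norm-fidelity} with $M=\sigma_0$ and $N=\sigma_1$, yielding $\norm{\sigma_0-\sigma_1}_1^2 + 4 F(\sigma_0,\sigma_1)^2 \leq (\Tr\sigma_0+\Tr\sigma_1)^2$. Here is the one place the cq structure enters: by definition $\Tr\sigma_0 + \Tr\sigma_1 = 1$, which collapses the right-hand side to $1$. Rearranging gives $4 F(\sigma_0,\sigma_1)^2 \leq 1 - \norm{\sigma_0-\sigma_1}_1^2$, and taking square roots of these nonnegative quantities yields $2 F(\sigma_0,\sigma_1) \leq \sqrt{1 - \norm{\sigma_0-\sigma_1}_1^2}$. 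I would then substitute the Helstrom identity \eqref{eqnpguesstd1}, namely $\norm{\sigma_0-\sigma_1}_1 = 2p_{\g}(X|B)-1$, to reach the desired bound on $2F$.

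The only points needing a moment of care---and the closest thing to an obstacle in an otherwise routine argument---are that the square-root step is legitimate because both $F(\sigma_0,\sigma_1)$ and $1 - \norm{\sigma_0-\sigma_1}_1^2$ are nonnegative (the latter being guaranteed by Lemma~\ref{lem:trace-norm-fidelity} together with $4F^2\geq 0$), and that it is precisely the normalization $\Tr\sigma_0+\Tr\sigma_1=1$, rather than an arbitrary total trace, that makes the bound take the clean stated form. Feeding $2F(\sigma_0,\sigma_1) \leq \sqrt{1-[2p_{\g}(X|B)-1]^2}$ back through the monotone logarithm of Lemma~\ref{lem:max-fidelity} then completes the proof.
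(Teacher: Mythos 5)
Your proposal is correct and follows exactly the paper's own route: the paper likewise applies Lemma~\ref{lem:trace-norm-fidelity} to the subnormalized conditional states $\sigma_0,\sigma_1$, uses $\Tr\sigma_0+\Tr\sigma_1=1$ to reduce the right-hand side to $1$, and combines the result with Lemma~\ref{lem:max-fidelity} and the Helstrom identity~\eqref{eqnpguesstd1}. Your write-up merely makes explicit the square-root and monotonicity steps that the paper leaves implicit.
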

\begin{proof}
Apply Lemma \ref{lem:trace-norm-fidelity} to the subnormalized states $\sg_0$ and $\sg_1$ appearing in Lemma \ref{lem:max-fidelity} to give
\begin{equation}
\|  \sg_0 - \sg_1  \|_1^2 + 4 \| \sqrt{ \sg_0 } \sqrt{ \sg_1 } \|_1^2 \leq 1.
\end{equation}
Since $\norm{\sqrt{ \sg_0 } \sqrt{ \sg_1 }}_{1} = F(\sg_{0}, \sg_{1})$ combining this inequality with~\eqref{eqnmaxentfid1} and~\eqref{eqnpguesstd1} gives the desired inequality.
\end{proof}

\section{Hybrid of predictive and retrodictive scenarios}\label{scthybrid}

\subsection{Introduction}

In the main text we discussed how to derive WPDRs from preparation and measurement uncertainty relations, which respectively deal with predicting the future and retrodicting the past. In this section we show that our framework can also be applied to scenarios that involves a hybrid (or mixture) of prediction and retrodiction. 

Let us first mention a motivating example from the literature for when this hybrid situation can arise. Ref.~\cite{PhysRevA.85.054101} considered a simple yet insightful scenario involving a MZI where both beamsplitters $\bs_1$ and $\bs_2$ (see, e.g., Fig.~\ref{fgrReg}) may be asymmetric. Since $\bs_1$ is asymmetric, the experimenter has prior knowledge about which path the photon will take. Since $\bs_2$ is asymmetric, the experimenter can use the final measurement outcome of which detector clicked to help retrodict which path the photon took. Ref.~\cite{PhysRevA.85.054101} formulated a WPDR for this scenario, and by the end of this section it will be clear that this falls under our framework. But let us develop the general idea first.

\subsection{General treatment of hybrid scenario}

To treat the hybrid case we will consider our retrodictive framework (discussed in the Methods) and add in the possibility of pre-experiment information about which path the quanton will take. Recall that, in the retrodictive case, we introduced a qubit register $Q'$ that is maximally entangled to the quanton $S$ at time $t_1$, the time just after the quanton enters the interferometer. More precisely $Q'$ is maximally entangled to the interfering subspace $Q$ of $S$. The purpose of $Q'$ is to store a record of the quanton's properties at time $t_1$, so that when the quanton evolves and changes over time, we can still go back to $Q'$ to ask about the quanton's properties at the earlier time. 

The fact that we chose a maximally entangled state is connected to the fact that there is no prior knowledge about the path the quanton will take. But now we are relaxing that assumption, so we will consider a \textit{partially} entangled state. It is useful to think of this partially entangled state as arising from taking the physical state $\rho_{S}^{(1)}$ at time $t_1$, and then applying an isometry $V_c$ that copies the which-path information and stores it in $Q'$. This isometry expands the Hilbert space, mapping $S$ to $Q'S$ as follows:
\begin{equation}
\label{eqnVc23905}
V_c=\sum_{j=0}^1 \ket{j}_{Q'}\ot \dya{j}_S,\quad \rhob^{(1)}_{Q'S} := V_c \rho_{Q}^{(1)} V_c\ad.
\end{equation}
There is a minor technical detail in \eqref{eqnVc23905} that is irrelevant to the MZI but becomes relevant, e.g., in the Franson interferometer. Namely, in \eqref{eqnVc23905}, instead of using the initially-prepared state $\rho_{S}^{(1)}$, which may have support outside of the interfering subspace $Q$, we use the projected and renormalized state $\rho_{Q}^{(1)}$, defined as
\begin{equation}
\label{eqnVc239066}
\rho_{Q}^{(1)} :=   N_1 \cdot (\Pi \rho_{S}^{(1)} \Pi),\quad \text{with }N_1 := 1/\Tr(\Pi \rho_{S}^{(1)}),\quad \text{and }\Pi =\dya{0} +\dya{1} .
\end{equation}
As discussed in the Methods section, the physical motivation behind this projection is that it corresponds to the experimenter post-selecting on the interfering portion of the data. Note that if $\rho_{Q}^{(1)}$ is a pure state from the $XY$ plane of the Bloch sphere (i.e., of the form $(\ket{0}+e^{i\phi}\ket{1})/\sqrt{2}$), then $\rhob^{(1)}_{Q'S}$ is maximally entangled and then we are just back to the retrodictive case discussed in the Methods. The generality in the present treatment comes from the fact that $\rho_{Q}^{(1)}$ is arbitrary.

As in the Methods, we treat the dynamics after time $t_1$ very generally by saying that some \textit{quantum operation} \cite{NieChu00} (completely positive trace non-increasing map) denoted $\AC$ acts on system $S$, mapping $S$ to the joint system $E_1E_2$. The output system does not contain $S$ because the quanton is eventually detected by a detector, at which point we are no longer interested in discussing the quanton's spatial degree of freedom quantum mechanically; we only care where it was detected. The map $\AC$ corresponds to a particular detection event; for concreteness let us say that detector $D_0$ clicking is the associated detection event. The probability for this event is the trace of the state after the action of $\AC$, and upon renormalizing we arrive at the final state
\begin{equation}
\label{eqnrhob2342aab}
\rhob^{D_0}_{Q' E_1 E_2}:=\frac{  (\IC \ot \AC)(\rhob^{(1)}_{Q'S})  }{ \Tr [ (\IC \ot \AC)(\rhob^{(1)}_{Q'S}) ] }.
\end{equation}
To derive WPDRs for the hybrid scenario, we apply the our main uncertainty relation, Eq.~(5) from the main text, to the state $\rhob^{D_0}_{Q' E_1 E_2}$.

Consider the following important special case, where the map $\AC$ involves two sequential steps. First there is interaction between $S$ and an environment $F$ inside the interferometer between times $t_1$ and $t_2$, which corresponds to feeding $S$ through a channel $\EC$, mapping $S$ to $S \En$, obtaining the state $\rhob^{(2)}_{Q'S \En}=(\IC\ot \EC)(\rhob^{(1)}_{Q'S})$. Second, system $S$ is detected at the interferometer output, say at detector $D_0$, which we can model as a map $\BC(\cdot) = \Tr_S [C_0(\cdot)]$, where $C_0$ is the POVM element (acting on $S$ at time $t_2$) associated with detector $D_0$ clicking. Hence we choose $\AC = \BC \circ \EC$, giving the (renormalized) state:
\begin{equation}
\label{eqnrhob2342}
\rhob^{D_0}_{Q' \En}:=\frac{\Tr_S(C_0 \rhob^{(2)}_{Q'S \En})  }{\Tr (C_0 \rhob^{(2)}_{Q'S\En} ) }.
\end{equation}
Now applying our main uncertainty relation to the state $\rhob^{D_0}_{Q' \En}$ and choosing $E_2$ to be trivial and $E_1 =F$ gives
\begin{equation}
\label{eqnJMeur}
H_{\min}(Z_{Q'} | \En )_{\rhob^{D_0}}+\min_{W\in XY}H_{\max}(W_{Q'})_{\rhob^{D_0}}\geq 1,
\end{equation}
where the subscript $\rhob^{D_0}$ means evaluating the entropy on the state in \eqref{eqnrhob2342} and the subscript $Q'$ is used to emphasize that the observables $Z$ and $W$ refer to system $Q'$.

\begin{figure}[t]
\begin{center}
\includegraphics[scale=1.1]{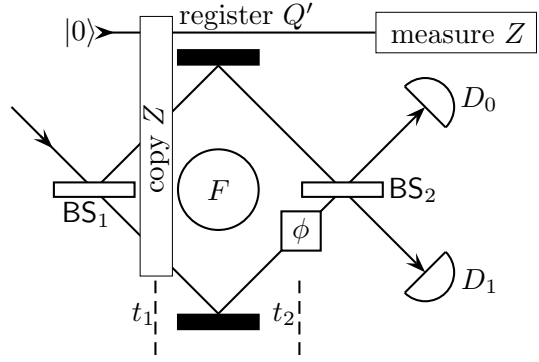}
\caption{%
In the case, e.g., where both $\bs_1$ and $\bs_2$ are asymmetric beam splitters, the distinguishability (denoted $\DC_{Q'}$ in the text) can be measured as follows. The experimenter can copy the which-path information at time $t_1$ to a register $Q'$ (e.g., $Q'$ may be the photon's polarisation) via a controlled-not gate. Then the experimenter can measure the observable on $Q'$ that encodes the which-path information and estimate the probability of correctly guessing this measurement's outcome.\label{fgrReg}}
\end{center}
\end{figure}

At this point we must remark on the physical meaning of a relation such as \eqref{eqnJMeur}. In the special where the state $\rhob^{(1)}_{Q'S}$ was maximally entangled, as considered in the main text, we noted that \eqref{eqnJMeur} can be interpreted as a joint measurement relation. This is because a maximally entangled state is special in that it maps observables on register $Q'$ to the transpose observables on the system of interest. However this interpretation is lost once we relax the form of $\rhob^{(1)}_{Q'S}$, so we can no longer interpret \eqref{eqnJMeur} as a joint measurement relation, in the general case. Rather, one can think of \eqref{eqnJMeur} as a hybrid between a preparation and measurement uncertainty relation.

Regardless, \eqref{eqnJMeur} can be written in the traditional WPDR form. First we note that the isometry $V_c$ in \eqref{eqnVc23905} maps the which-path observable $Z$ of $Q$ onto the corresponding $Z$ observable of $Q'$. So the observable $Z_{Q'}$ is effectively the which-path observable, and we can define path distinguishability by 
\begin{equation}
\label{eqnDCQp209}
\DC_{Q'}:= 2 p_{\g}(Z_{Q'}|F)_{\rhob^{D_0}}-1, \quad \text{or in other words,   } H_{\min}(Z_{Q'} | \En )_{\rhob^{D_0}} = 1- \log (1+\DC_{Q'} ).
\end{equation}
Note that $\DC_{Q'}$ is a generalization of the input distinguishability $\DC_i$ defined in the main text. Experimentally measuring $\DC_{Q'}$ involves the procedure shown in Fig.~\ref{fgrReg} for the special case of the MZI. That is, the experimenter perfectly copies the which-path information to a register and then measures the register. Note that a register is necessary in this case since other procedures to measure distinguishability discussed in the main text - such as removing the second beam splitter or randomly inserting a blocker in one arm - would not capture the inherent which-path asymmetry associated with both beam splitters.

Moving onto the which-phase observable $W$, we prove a general and powerful lemma that the second term in \eqref{eqnJMeur} corresponds precisely to the fringe visibility $\VC$. Let us remind the reader that the latter is given by 
\begin{equation}
\label{eqnfringvis39402}
\VC  = \frac{p^{D_0}_{\max}-p^{D_0}_{\min}}{p^{D_0}_{\max} + p^{D_0}_{\min}}
\end{equation}
where $p^{D_0}$ is the probability for the quanton to be detected at detector $D_0$, $p^{D_0}_{\max} = \max_{\phi} p^{D_0}$, and $p^{D_0}_{\min} = \min_{\phi} p^{D_0}$. To establish this connection, we need to restrict the form of the channel $\EC_S = \Tr_F \circ \EC$ acting on the quanton such that $\EC_S$ is path preserving. We emphasize that the following lemma is a generalization of Lemma 2 from the main text and hence its proof implies Lemma 2. See the next subsection for the proof.
\begin{lemma}
\label{lem22}
Consider any binary interferometer. Let $\EC_S = \Tr_F \circ \EC$ denote the quantum channel describing the quanton's interaction with the environment $F$ inside the interferometer between times $t_1$ and $t_2$, and let $\GC(\cdot) = \Pi (\cdot)\Pi$ denote the map that projects onto the subspace $\Pi =\dya{0}+\dya{1}$. Suppose that $\EC_S$ is path-preserving, i.e., $\EC_S(\dya{0})=\dya{0}$ and $\EC_S(\dya{1})=\dya{1}$ and furthermore suppose that $\EC_S$ commutes with $\GC$, i.e., $\GC \circ \EC_S = \EC_S \circ \GC$. Then
\begin{equation}
\label{eqnVisMaxRelationMeth2}
\min_{W\in XY} H_{\max}(W_{Q'})_{\rhob^{D_0}} = \log (1+\sqrt{1-\VC^2}),
\end{equation}
where $H_{\max}(W_{Q'})$ is evaluated for the state $\rhob_{Q'}^{D_0}=\Tr_{\En}(\rhob_{Q' \En}^{D_0})$ from \eqref{eqnrhob2342}.
\end{lemma}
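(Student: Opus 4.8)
The plan is to mirror the two-step strategy of Lemma~\ref{lem1}: first reduce the left-hand side to an $XY$-plane Bloch length of the register state $\rhob^{D_0}_{Q'}$, and then identify that length with the fringe visibility $\VC$. Since $\En$ has been traced out, $W_{Q'}$ is measured on the bare qubit $\rhob^{D_0}_{Q'}$ with no side information, so the unconditional binary max-entropy formula already used in Lemma~\ref{lem1} gives $H_{\max}(W_{Q'}) = \log(1 + \sqrt{1 - \ave{\sg_W}^2})$, where $\sg_W = \dya{w_+} - \dya{w_-}$ and the expectation is taken in $\rhob^{D_0}_{Q'}$. Minimizing over $W$ in the $XY$ plane then maximizes $|\ave{\sg_W}|$, i.e.\ extracts the length $s_\perp$ of the projection onto the $XY$ plane of the Bloch vector of $\rhob^{D_0}_{Q'}$. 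It therefore suffices to prove $s_\perp = \VC$.

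To obtain $\rhob^{D_0}_{Q'}$ explicitly I would expand $\rhob^{(1)}_{Q'S}$ from the copy isometry in \eqref{eqnVc23905} as $\sum_{m,n} \rho_{mn}\,\dyad{m}{n}_{Q'} \otimes \dyad{m}{n}_S$ with $\rho_{mn} = \mted{m}{\rho_Q^{(1)}}{n}$, push it through $(\IC \otimes \EC)$, post-select on $C_0$, and trace out $S$ and $\En$. Because $C_0\otimes\id$ factorizes through $\Tr_\En$, this collapses to $\rhob^{D_0}_{Q'} = Z^{-1}\sum_{m,n}\rho_{mn}\,\dyad{m}{n}_{Q'}\,\Tr\!\big(C_0\,\EC_S(\dyad{m}{n}_S)\big)$. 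Here the commutation hypothesis $\GC \circ \EC_S = \EC_S \circ \GC$ confines the dynamics to the interfering qubit: for $m,n\in\{0,1\}$ one has $\GC(\dyad{m}{n})=\dyad{m}{n}$, so $\EC_S(\dyad{m}{n})=\GC(\EC_S(\dyad{m}{n}))$ lies in the $Q$-subspace, which lets me replace $\Tr(C_0\,\EC_S(\cdot))$ by $\Tr(\Ct_0\,\EC_Q(\cdot))$ with $\Ct_0 = \Pi C_0 \Pi$. Finally the path-preserving hypothesis, which by the Kraus-operator argument recalled in the main text forces $\EC_Q(\dyad{0}{1}) = \kp\,\dyad{0}{1}$ with $|\kp|\leq 1$, yields an explicit $2\times 2$ form for $\rhob^{D_0}_{Q'}$ whose off-diagonal entry is proportional to $\rho_{01}\,\kp\,\mted{1}{\Ct_0}{0}$.

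The last step compares this off-diagonal with $\VC$. As in Lemma~\ref{lem1}, for a binary interferometer $\Ct_0$ is unbiased in the which-path basis, $\Ct_0 = q\,\dya{w_+}$ for some $0<q\leq 1$ and some $XY$-plane $\ket{w_+}$; this gives $\mte{0}{\Ct_0}=\mte{1}{\Ct_0}=q/2$, $|\mted{1}{\Ct_0}{0}|=q/2$, and normalization $Z = \tfrac{q}{2}(\rho_{00}+\rho_{11}) = q/2$, so that $s_\perp = 2|\kp|\,|\rho_{01}|$. Separately, writing $p^{D_0}(\phi) = \Tr(U_\phi\ad \Ct_0 U_\phi\, \rho_Q^{(2)}) = q\,\mte{w_+(\phi)}{\rho_Q^{(2)}}$ with $\rho_Q^{(2)} = \EC_Q(\rho_Q^{(1)})$ and $\ket{w_+(\phi)} = U_\phi\ad\ket{w_+}$, the scan over $\phi$ sweeps all $XY$-plane directions, so $p^{D_0}_{\max/\min} = \tfrac{q}{2}(1\pm r_\perp)$ and \eqref{eqnfringvis39402} gives $\VC = r_\perp$, the $XY$-plane Bloch radius of $\rho_Q^{(2)}$; since the off-diagonal of $\rho_Q^{(2)}$ is $\kp\rho_{01}$, this equals $2|\kp|\,|\rho_{01}|$. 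Hence $s_\perp = \VC$ and the claim follows. I expect the main obstacle to be the bookkeeping in the middle step: carrying the contraction factor $\kp$ and the post-selection normalization $Z$ consistently through the copy isometry so that the register radius $s_\perp$ and the visibility $\VC$ collapse to the \emph{same} expression $2|\kp|\,|\rho_{01}|$. Getting the unbiasedness of $\Ct_0$ to cancel the input-dependence of $Z$ is exactly what makes the identity survive for arbitrary (not necessarily unbiased) input $\rho_Q^{(1)}$, which is precisely the generalization beyond Lemma~\ref{lem1} and the main-text Lemma~\ref{lem22main}.
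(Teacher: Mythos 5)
Your overall strategy---reduce the max-entropy term to the $XY$-plane Bloch length $s_\perp$ of $\rhob^{D_0}_{Q'}$, compute that state explicitly through the copy isometry, and match $s_\perp$ against $\VC$---is sound, and your use of the contraction factor $\kp$ (via the Kraus argument for path-preserving qubit channels) plays the same structural role as the paper's commutation lemma $\RC_\phi\circ\EC_S=\EC_S\circ\RC_\phi$. The genuine gap is in your last step: you import the hypothesis $\Ct_0=q\dya{w_+}$ from Lemma~\ref{lem1}, but Lemma~\ref{lem22} makes no unbiasedness assumption on $C_0$, and it cannot afford to---the hybrid scenario it is built for (e.g., the MZI of Ref.~\cite{PhysRevA.85.054101} with an asymmetric $\bs_2$) is precisely one where $\Ct_0$ is biased with respect to the which-path basis. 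As written, your argument proves only a special case, and it misattributes the reason the identity survives for arbitrary inputs: you say the unbiasedness of $\Ct_0$ is what cancels the input-dependence of the normalization $Z$, whereas in fact no property of $\Ct_0$ is needed for that cancellation.

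The good news is that your own computation closes without the assumption. Writing $a=\mte{0}{\Ct_0}$, $b=\mte{1}{\Ct_0}$, $c=\mted{0}{\Ct_0}{1}$, your expression for the register state gives normalization $Z=a\rho_{00}+b\rho_{11}$ and off-diagonal entry $\kp\rho_{01}\,\overline{c}/Z$, hence $s_\perp = 2|c\,\kp\,\rho_{01}|/(a\rho_{00}+b\rho_{11})$. On the other side, $p^{D_0}(\phi)=\Tr\big[\Ct_0\, U_\phi\,\EC_S(\rho_Q^{(1)})\,U_\phi\ad\big]$ oscillates in $\phi$ about the mean $a\rho_{00}+b\rho_{11}$ with amplitude $2|c\,\kp\,\rho_{01}|$, so $\VC$ is the same ratio; the point you flag as the main obstacle---matching $Z$ against the denominator $p^{D_0}_{\max}+p^{D_0}_{\min}$---is resolved because $\Tr_{Q'}(V_c\rho_Q^{(1)}V_c\ad)$ is the dephased version of $\rho_Q^{(1)}$, so both normalizations equal the $\phi$-average of $p^{D_0}(\phi)$. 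This is exactly how the paper's proof works: it expresses both the visibility and the register quantity through the single function $f(\phi)=\Tr\big[\Ct_0\,\RC_\phi(\EC_S(\rho_Q^{(1)}))\big]$ evaluated at antipodal phases, with no constraint on $\Ct_0$. Replace your final paragraph with the general-$\Ct_0$ bookkeeping and the proof is complete and equivalent in strength to the paper's.
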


Now making the assumption in Lemma~\ref{lem22} regarding the form of the interaction $\EC_S$ allows us to rewrite \eqref{eqnJMeur}, using \eqref{eqnDCQp209} and \eqref{eqnVisMaxRelationMeth2}, as follows:
\begin{equation}
\label{eqnWPDRdcqp34}
\DC_{Q'}^2+\VC^2\leq 1.
\end{equation}
This WPDR holds for generic binary interferometers.

Ref.~\cite{PhysRevA.85.054101} derived a WPDR similar to \eqref{eqnWPDRdcqp34} but for the special case of a MZI where both $\bs_1$ and $\bs_2$ are possibly asymmetric. They restricted to interactions with the environment $F$ that had the form of a controlled-unitary, which are path-preserving and hence are included in the class of dynamics that we assumed to derive \eqref{eqnWPDRdcqp34}. Of course \eqref{eqnWPDRdcqp34} applies very generally to binary interferometers, but it can be applied to the MZI (note that $Q = S$ in the MZI case) where both $\bs_1$ and $\bs_2$ are asymmetric. Thus, we find that the result in Ref.~\cite{PhysRevA.85.054101} can be understood as an entropic uncertainty relation, namely a special case of \eqref{eqnJMeur} [from which we derived \eqref{eqnWPDRdcqp34}]. To be more precise, Ref.~\cite{PhysRevA.85.054101} also generalized their relation to allow for non-optimal strategies for measuring $F$; we do not treat this generalization here. We believe the most important conceptual advance of Ref.~\cite{PhysRevA.85.054101} was to prove a WPDR that applies to a scenario that - in our language - is a hybrid of preparation and measurement uncertainty. What we have emphasized in this section is that our framework naturally extends to this hybrid scenario.

\subsection{Proof of Lemma~\ref{lem22} (generalized version of Lemma 2)}\label{sctProofLem2}

To prove Lemma~\ref{lem22} we will make use of the following lemma.

\begin{lemma}
\label{lem22231}
Let $\QC$ be a qubit quantum channel, i.e., whose input and output are operators on a 2-dimensional Hilbert space, and suppose that $\QC(\dya{j}) = \dya{j}$ for $j=0,1$. Likewise let $\RC_{\phi}$ be a qubit quantum channel whose action is given by
\begin{equation}
\label{eqnRphidef1}
\RC_{\phi}(\cdot) = U_{\phi}(\cdot)U_{\phi}\ad,\quad\text{with  }U_{\phi} = \dya{0}+e^{i\phi}\dya{1}.
\end{equation}
Then, for any $\phi$, $\QC$ and $\RC_{\phi}$ commute, i.e., $\RC_{\phi} \circ \QC = \QC \circ \RC_{\phi}$.
\end{lemma}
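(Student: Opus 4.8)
The plan is to exploit the Kraus (operator-sum) representation of the completely positive map $\QC$ and to show that the pole-fixing conditions $\QC(\dya{0})=\dya{0}$ and $\QC(\dya{1})=\dya{1}$ force every Kraus operator to be diagonal in the $\{\ket{0},\ket{1}\}$ basis. Since the unitary $U_{\phi}=\dya{0}+e^{i\phi}\dya{1}$ generating $\RC_{\phi}$ is itself diagonal in this basis, and diagonal operators commute, the two channels will then be seen to share the same operator-sum representation and hence to be equal.

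First I would fix any Kraus decomposition $\QC(\cdot)=\sum_i K_i(\cdot)K_i\ad$ with $\sum_i K_i\ad K_i=\id$. The crucial step is to extract the structure of the $K_i$ from the two fixed-point conditions. Sandwiching the first condition between $\bra{1}$ and $\ket{1}$ gives
\begin{equation}
0=\matl{1}{\dya{0}}{1}=\matl{1}{\QC(\dya{0})}{1}=\sum_i \abs{\matl{1}{K_i}{0}}^2 ,
\end{equation}
so that $\matl{1}{K_i}{0}=0$ for every $i$ because each summand is non-negative. Applying the analogous argument to $\QC(\dya{1})=\dya{1}$ sandwiched between $\bra{0}$ and $\ket{0}$ yields $\matl{0}{K_i}{1}=0$ for every $i$. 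Hence each $K_i$ is diagonal, $K_i=c_i\dya{0}+d_i\dya{1}$, and in particular $\QC(\dyad{0}{1})=\big(\sum_i c_i\overline{d_i}\big)\dyad{0}{1}=\kp\dyad{0}{1}$, recovering the scaling factor $\kp$ mentioned in the main text.

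To finish, I would note that a diagonal operator commutes with the diagonal unitary $U_{\phi}$, so $K_i U_{\phi}=U_{\phi}K_i$ for every $i$ and every $\phi$. Consequently $\QC\circ\RC_{\phi}$ admits the operator-sum representation $\{K_i U_{\phi}\}_i$ while $\RC_{\phi}\circ\QC$ admits $\{U_{\phi}K_i\}_i$; these coincide term by term, so the two channels are identical, which is exactly the claimed commutation $\RC_{\phi}\circ\QC=\QC\circ\RC_{\phi}$. Equivalently, one can act with both composite maps on a generic qubit operator and observe that each merely multiplies the off-diagonal entries by the commuting scalars $\kp$ and $e^{\mp i\phi}$ while leaving the diagonal entries untouched.

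The one substantive point is the structural step of the second paragraph: everything hinges on showing the Kraus operators are simultaneously diagonal, after which the commutation is immediate because both channels are built from commuting diagonal operators. I expect no real obstacle beyond being careful to invoke the pole-fixing condition at \emph{both} poles, since fixing only one pole would kill only one of the two off-diagonal matrix elements of the $K_i$ and would not establish diagonality.
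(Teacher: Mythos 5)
Your proof is correct, but it follows a genuinely different route from the one in the paper. The paper argues geometrically on the Bloch sphere: since $\QC$ is unital and fixes the $Z$-basis states, its action is (by the structure theory of qubit channels, citing King and Ruskai) a rotation about the $Z$-axis composed with a cylindrically symmetric shrinking, and both of these manifestly commute with the rotation $\RC_{\phi}$. You instead work algebraically with a Kraus decomposition: sandwiching the fixed-point conditions $\QC(\dya{0})=\dya{0}$ and $\QC(\dya{1})=\dya{1}$ between the opposite basis vectors kills both off-diagonal matrix elements of every Kraus operator (as a sum of non-negative terms equal to zero), so each $K_i$ is diagonal and therefore commutes with the diagonal unitary $U_{\phi}$, whence the two compositions share a common operator-sum representation. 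Your argument is more elementary and self-contained --- it needs no external structure theorem and no Bloch-sphere picture --- and it has the pleasant side effect of deriving, rather than merely asserting, the fact used in the main text that $\QC(\dyad{0}{1})=\kp\,\dyad{0}{1}$ with $\kp=\sum_i c_i\overline{d_i}$ (and $|\kp|\leq 1$ by Cauchy--Schwarz applied to the normalization conditions $\sum_i|c_i|^2=\sum_i|d_i|^2=1$). The paper's geometric argument, by contrast, makes the cylindrical symmetry of the channel's action visually transparent at the cost of leaning on the cited characterization of unital qubit channels. Your closing caution --- that fixing only one pole would leave one off-diagonal element of the $K_i$ unconstrained --- is exactly the right thing to flag, since both hypotheses are genuinely needed.
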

\begin{proof}
Since $\QC$ is unital and furthermore preserves the $Z$-basis, its action on the Bloch sphere can only involve a rotation $\RC_{\theta}$ about the $Z$-axis composed with a shrinking $\SC$ of the Bloch sphere, and this shrinking must be cylindrically symmetric about the $Z$-axis (see, e.g., Ref.~\cite{King01}). The rotation $\RC_{\theta}$ obviously commutes with the rotation $\RC_{\phi}$, and likewise $\RC_{\phi}$ commutes with the shrinking $\SC$ due to the cylindrically symmetry of $\SC$.
\end{proof}

Now we prove Lemma~\ref{lem22}, which relates the fringe visibility to the max-entropy of the which-phase observable in our ``hybrid'' framework. Lemma~\ref{lem22} generalizes Lemma 2 from the main text, which is the corresponding result for our retrodictive framework.

\begin{proof}
In the formula for $\VC$ in \eqref{eqnfringvis39402}, the notation
\begin{equation}
\label{eqnpd0def}
p^{D_0} = \Tr \big[ C_0 \RC_{\phi} (\rho^{(2)}_Q )\big]
\end{equation}
refers to the probability for detector $D_0$ to click when the quanton's state at time $t_2$ (the time just before the phase-shift $\phi$ is applied) is $\rho^{(2)}_Q$. This state was defined in Eq.~(25) of the Methods section as
\begin{equation}
\label{eqnrhoq2def33}
\rho^{(2)}_Q = N_2 \cdot ( \Pi \rho_S^{(2) } \Pi ),\quad \text{with }N_2 : = 1/\Tr (\Pi \rho_S^{(2) }  ),
\end{equation}
which notes that the experimenter post-selects on the interfering subspace, associated with projector $\Pi$. Defining $\tilde{C}_0 := \Pi C_0 \Pi$ and taking the maximum of \eqref{eqnpd0def} over $\phi$ gives
\begin{align}
\label{eqnpd0max1235}
p^{D_0}_{\max} &= N_2 \max_{\phi} \Tr  \big[  \Ct_0 \RC_{\phi} ( \Pi \rho^{(2)}_S \Pi )  \big] \\
&= N_2 \max_{\phi} \Tr   \big[  \Ct_0 \RC_{\phi} ( \Pi \EC_S (\rho^{(1)}_S) \Pi ) \big] \\
&= N_2 \max_{\phi} \Tr   \big[  \Ct_0 \RC_{\phi} (\EC_S ( \Pi \rho^{(1)}_S \Pi)  ) \big] \\
&= (N_2/N_1) \max_{\phi} \Tr  \big[  \Ct_0 \RC_{\phi} ( \EC_S (  \rho^{(1)}_Q )  ) \big] \\
&= (N_2/N_1) f \big( \widehat{\phi}\hspace{1pt} \big),\quad\text{   where }f(\phi):=\Tr  \big[  \Ct_0 \RC_{\phi} ( \EC_S (  \rho^{(1)}_Q )  ) \big], 
\end{align}
and we use $\widehat{\phi}$ to denote that phase that maximises $f(\phi)$. Now, by thinking of $f(\phi)$ as the inner product between two vectors in the Bloch sphere, one can see that the phase $\phi$ that minimizes $f(\phi)$ is 180 degrees added to the phase that maximizes it. So we have
\begin{align}
\label{eqnpd0max1235}
p^{D_0}_{\min} &= (N_2/N_1) \min_{\phi} f(\phi) = (N_2/N_1) f \big(\widehat{\phi} +\pi \big).
\end{align}
Hence from \eqref{eqnfringvis39402} we compute the fringe visibility to be
\begin{equation}
\label{eqnVformula3902}
\VC = \frac{  f \big( \widehat{\phi}\hspace{1pt} \big) -f \big(\widehat{\phi} +\pi \big)}{  f \big( \widehat{\phi}\hspace{1pt} \big)+f \big(\widehat{\phi} +\pi \big)}.
\end{equation}

Now consider the left-hand side of \eqref{eqnVisMaxRelationMeth2}, which we write as
\begin{align}
\label{eqnHmax3219}
\min_{W\in XY} H_{\max}(W_{Q'})_{\rhob^{D_0}}= \log \Big(1+\sqrt{1-\VC_{Q'}^2}\Big),
\end{align}
which defines the visibility-like quantity $\VC_{Q'}$, and ultimately we wish to show that $\VC_{Q'} =\VC$. The formula for the unconditional max-entropy is $H_{\max}(\{p_j\}) = 2\log \sum_j \sqrt{p_j}$, which implies that
\begin{align}
\label{eqnHmax234095}
H_{\max}(W_{Q'})_{\rhob^{D_0}} = \log \bigg(1+\sqrt{1-(2 \Pr(w_+)_{\rhob^{D_0}}-1)^2} \bigg) 
\end{align}
where $\Pr(w_+)_{\rhob^{D_0}} := \mte{w_+}{\rhob^{D_0}_{Q'}}$. Comparing \eqref{eqnHmax3219} with \eqref{eqnHmax234095}, we see that
\begin{align}
\label{eqnVq239}\VC_{Q'} &= 2 \max_{W\in XY }\Pr(w_+)_{\rhob^{D_0}}-1.
\end{align}
Using the formula for the state $\rhob^{D_0}_{Q'}$ in \eqref{eqnrhob2342}, we have
\begin{align}
\label{eqnpmax2333aaaa}\Pr(w_+)_{\rhob^{D_0}} 
 =   \frac{\mte{w_+}{\Tr_S(C_0 \rhob^{(2)}_{Q'S }) } }{\Tr (C_0 \rhob^{(2)}_{S} )} 
=  \frac{\Tr_{Q'S} [(\dya{w_{+}}\ot C_0) \cdot  (\IC\ot \EC_S ) ( V_c \rho_{Q}^{(1)}V_c\ad )]  }{ \Tr (C_0 \rhob^{(2)}_{S} )}.
\end{align}
Now let $\ket{w_{+}} = (\ket{0}+e^{i\phi }\ket{1})/\sqrt{2} = U_{\phi } \ket{+}$, and let us maximise \eqref{eqnpmax2333aaaa} over all $W$ in the $XY$ plane, which corresponds to maximising over $\phi$. Noting that the denominator on the right-hand side of \eqref{eqnpmax2333aaaa} is independent of $W$, we have
\begin{align}
\label{eqnpmax2333aeee}\max_{W\in XY}\Pr(w_+)_{\rhob^{D_0}} &= \frac{1}{\Tr (C_0 \rhob^{(2)}_{S} )}\max_{W\in XY} \Tr_{S} [C_0 \cdot \EC_S ( \Tr_{Q'} (  ( \dya{w_{+}} \ot \id_S)  V_c \rho_{Q}^{(1)}V_c\ad ))] \\
\label{eqnpmax2333abc}&=\frac{1}{\Tr (C_0 \rhob^{(2)}_{S} )} \max_{\phi} \Tr_{S} [C_0 \cdot \EC_S ( \Tr_{Q'} (  ( U_{\phi}\dya{+} U_{\phi}\ad \ot \id_S)  V_c \rho_{Q}^{(1)}V_c\ad ))] \\ 
\label{eqnpmax2333abcd}&=\frac{1}{2\Tr (C_0 \rhob^{(2)}_{S} )} \max_{\phi} \Tr_S [C_0 \cdot \EC_S ( U_{\phi}\ad \rho_{Q}^{(1)} U_{\phi}) ] \\
\label{eqnpmax2333abcde}&=\frac{1}{2\Tr (C_0 \rhob^{(2)}_{S} )} \max_{\phi} \Tr_S [ C_0 \cdot \RC_{\phi} (\EC_S (  \rho_{Q}^{(1)}  ) ) ] \\
\label{eqnpmax2333abcdef}&=\frac{1}{2\Tr (C_0 \rhob^{(2)}_{S} )}  f \big( \widehat{\phi}\hspace{1pt} \big),
\end{align}
where \eqref{eqnpmax2333abcde} invoked Lemma~\ref{lem22231}. Next, using the fact that $ \rhob^{(1)}_{S} = \RC_{\phi} (\rhob^{(1)}_{S})$ is diagonal in the standard basis, and furthermore that $\rho^{(1)}_{Q}+ \RC_{\pi} (\rho^{(1)}_{Q}) = 2\rhob^{(1)}_{S}$, we have
\begin{align}
\label{eqnpmax2333c231}
2\Tr (C_0 \rhob^{(2)}_{S}) =  2\Tr (\Ct_0 \EC_S( \rhob^{(1)}_{S}) ) = 2\Tr \big[ \Ct_0 \EC_S( \RC_{\widehat{\phi}} (\rhob^{(1)}_{S}) )\big] = \Tr \big[ \Ct_0 \EC_S( \RC_{\widehat{\phi}}(\rho^{(1)}_{Q})+ \RC_{\widehat{\phi}+\pi } (\rho^{(1)}_{Q}) ) \big]=  f \big( \widehat{\phi}\hspace{1pt} \big) +f \big(\widehat{\phi} +\pi \big) .
\end{align}
Combining \eqref{eqnVq239}, \eqref{eqnpmax2333abcdef}, and \eqref{eqnpmax2333c231} gives
\begin{align}
\label{eqnVq232219}\VC_{Q'} &=  \frac{  2  f \big( \widehat{\phi}\hspace{1pt} \big) }{  f \big( \widehat{\phi}\hspace{1pt} \big) +f \big(\widehat{\phi} +\pi \big) }-1,
\end{align}
which is equivalent to the formula in \eqref{eqnVformula3902}, and hence completes the proof.\end{proof}

\section{Enhanced visibility}\label{sct4A}

In this section, we consider two examples from the literature, Ref.~\cite{Englert2000337} and \cite{Banaszek:2013fk}, of WPDRs where the visibility is enhanced by utilizing a portion of the environment. This corresponds to system $E_2$ being non-trivial in our main WPDR, Eq.~(5) from the main text. We show how both literature results fit into our framework.

\subsection{Quantum erasure}\label{sct4Ab}


\subsubsection{Results of Ref.~\cite{Englert2000337}}

Ref.~\cite{Englert2000337} derived some WPDRs in which the visibility is enhanced by conditioning on a measurement on the environment. This scenario is called quantum erasure since it aims to erase the which-path information stored in the environment. Here we show that this scenario can be treated in our framework, and hence, that the main results of \cite{Englert2000337} can be viewed as entropic uncertainty relations.


Ref.~\cite{Englert2000337} considered interferometers that can be modeled as a qubit, which are slightly less general than our notion of binary interferometers, where the interfering subspace is a qubit living inside a larger space. While it should be clear that the treatment can be extended to binary interferometers, for simplicity we will present the treatment as in Ref.~\cite{Englert2000337}, as follows.

Suppose the qubit system of interest $Q$ is initially in state $\rho_{Q}^{(1)}$ at time $t_1$ (see, e.g., Fig.~1 from the main text). Ref.~\cite{Englert2000337} allowed the system $Q$ to interact with an environment $\En$ resulting in a bipartite state $\rho_{Q \En}^{(2)} = \EC_{\text{int}}(\rho_{Q}^{(1)})$ at time $t_2$, and then an observable $\Gamma $ on system $\En$ is measured. We can represent $\Gamma$ as a set of orthogonal projectors $\{\Gamma_k\}$ with $\sum_k \Gamma_k = \id_{\En}$. (We do not lose generality by assuming the $\Gamma_k$ are projectors instead of arbitrary positive operators, since system $\En$ is arbitrary and any POVM can be thought of as a projective measurement on an enlarged Hilbert space.) Obtaining outcome $k$ leaves system $Q$ in the conditional state 
\begin{align}
\rho_{Q,k}^{(2)} = \frac{1}{g_k}\Tr_{\En} \big[ (\id_Q \ot \Gamma_k) \rho_{Q \En}^{(2)} \big], \quad \text{with }g_k :=\Tr \big[ (\id_Q \ot \Gamma_k)  \rho_{Q \En}^{(2)} \big] .
\end{align}
One can define the path predictability and fringe visibility associated with this conditional state as
\begin{align}
\PC_k &:= 2 p_{\g}(Z)_{k} - 1\\
\VC_k & := \frac{p^{D_0}_{\max,k} - p^{D_0}_{\min,k}}{p^{D_0}_{\max,k} + p^{D_0}_{\min,k}}
\end{align}
where the subscript $k$ just means evaluating the quantity for the state $\rho_{Q,k}^{(2)}$. Ref.~\cite{Englert2000337} now defined the average predictability and visibility (i.e., averaged over all measurement outcomes) as
\begin{align}
\PC(\Gamma) := \sum_k g_k \PC_k,\quad \VC(\Gamma)  := \sum_k g_k \VC_k.
\end{align}
Ref.~\cite{Englert2000337} noted that maximizing $\PC(\Gamma)$ over all $\Gamma$ gives the distinguishability, while they defined a quantity called ``coherence'' as the supremum over all $\Gamma$ of $\VC(\Gamma)$, as follows
\begin{align}
\DC =\max_{\Gamma} \PC(\Gamma), \quad \CC := \sup_{\Gamma} \VC(\Gamma). 
\end{align}
They noted the hierarchies $\PC\leq \PC(\Gamma) \leq \DC$ and $\VC\leq \VC(\Gamma) \leq \CC$. The two main results that were highlighted in \cite{Englert2000337} were the WPDRs
\begin{align}
\label{eqnPVmain44}\PC(\Gamma)^2+\VC(\Gamma)^2 &\leq 1,\\
\label{eqnPVmain45}\PC^2+\CC^2 &\leq 1,
\end{align}
where \eqref{eqnPVmain44} holds for any choice of $\Gamma$. Actually, Ref.~\cite{Englert2000337} noted that \eqref{eqnPVmain44} implies \eqref{eqnPVmain45} by taking the supremum such that $\VC(\Gamma)$ approaches $\CC$. So let us focus on proving \eqref{eqnPVmain44}.

\subsubsection{Our treatment}

The overall dynamics described above can be separated into three steps:
\begin{enumerate}
  \item The system $Q$ interacts with an environment $\En$, via CPTP map $\EC_{\text{int}}$.
  \item System $\En$ is measured and the outcome is stored in a register $R$, via CPTP map $\EC_{\text{meas}}$.
  \item The experimenter uses this measurement result to enhance the visibility on system $Q$ (i.e., to sort the data point into a sub-ensemble and determine the optimal phase shift for that sub-ensemble). This is modelled as a CPTP map $\EC_{\enh}$ that couples $R$ to $Q$.
\end{enumerate}
The overall CPTP map $\EC$ is a composition of these three maps:
\begin{equation}
\EC = \EC_{\enh} \circ \EC_{\text{meas}} \circ \EC_{\text{int}}. 
\end{equation}
As noted above, the interaction with $\En$ results in the state $\rho_{Q\En}^{(2)} := \EC_{\text{int}}(\rho_{Q}^{(1)})$. Next, $\EC_{\text{meas}}$ performs the projective measurement $\Gamma=\{\Gamma_k\}$ on system $\En$ and stores the outcome in two (redundant) registers $R$ and $R'$:
\begin{equation}
\rho_{QRR'}^{(3)}:= \EC_{\text{meas}}(\rho_{Q\En}^{(2)}) = \sum_k g_k \rho_{Q,k}^{(2)} \ot \dya{k}_R \ot \dya{k}_{R'}
\end{equation}
where state $\ket{k}$ corresponds to obtaining outcome $k$ from measuring $\Gamma$, and the set $\{\ket{k} \}$ forms an orthonormal basis on the register Hilbert space. The point of having two registers is that one register will act as system $E_1$ from the main text - to be used to enhance the distinguishability - while the other will act as system $E_2$ from the main text - to be used to enhance the visibility.

For each measurement outcome $k$, we wish to obtain the full visibility that is available, so we allow the experimenter to choose the optimal basis $W_k$ in the $XY$ plane of the Bloch sphere for each $k$. We can think of this as allowing the experimenter, given the outcome $k$, to rotate the system $Q$ via a unitary $U^Z_k$ that is diagonal in the $Z$ basis. Suppose this unitary is tailored to rotate the optimal basis $W_k$ to the $X$ basis, i.e., $X =   U^Z_k W_k (U^Z_k)\ad$ for each $k$. Accounting for all possible values of $k$, the overall unitary is a controlled unitary $U_{\enh} := \sum_k U^Z_k \ot \dya{k}_{R}$ where $R$ acts as the control system. Hence the action of the map that enhances the visibility is:
\begin{equation}
\rho_{QRR'}^{(4)}:= \EC_{\enh}(\rho_{QRR'}^{(3)}) = U_{\enh} \rho_{QRR'}^{(3)} U_{\enh}\ad = \sum_k  g_k \rhot_{Q,k}^{(2)}  \ot \dya{k}_R \ot \dya{k}_{R'},\quad \text{with }\rhot_{Q,k}^{(2)}:= U^Z_k  \rho_{Q,k}^{(2)} (U^Z_k)\ad .
\end{equation}
Finally, we apply our main WPDR, Eq.~(5) from the main text, to the state $\rho_{QRR'}^{(4)}$. Specifically we choose $E_1=R$ and $E_2 =R' =R$ noting that $R'$ and $R$ are identical copies, giving
\begin{equation}
\label{eqnQuEr1}
H_{\min}(Z|R)_{\rho^{(4)}}+H_{\max}(X|R)_{\rho^{(4)}} \geq 1,
\end{equation}
where the subscript $\rho^{(4)}$ emphasizes that the entropy terms are evaluated for the state $\rho_{QRR'}^{(4)}$, for which $X$ is the basis that achieves the minimization in $\min_{W\in XY}H_{\max}(W|R)$.

We now show how our quantum erasure relation in \eqref{eqnQuEr1} implies \eqref{eqnPVmain44}, which in turn implies \eqref{eqnPVmain45} as noted previously. First note that
$$\PC(\Gamma) = \sum_k g_k\PC_k = \sum_k g_k [2 p_{\g}(Z)_{k} - 1] =2 \bigg[ \sum_k g_k  p_{\g}(Z)_{k} \bigg] - 1 = 2 p_{\g}(Z|R)_{\rho^{(4)}} - 1,$$ 
which implies that
\begin{equation}
\label{eqn380363a}
H_{\min}(Z|R)_{\rho^{(4)}} = - \log p_{\g}(Z|R)_{\rho^{(4)}} = 1 - \log [1+\PC(\Gamma)].
\end{equation}
Next, using the relation \eqref{eqnmaxentpguess22} and noting that $\VC(\Gamma) = 2p_{\g}(X|R)_{\rho^{(4)}}-1 $, we have that
\begin{equation}
\label{eqn380363ab}
H_{\max}(X|R)_{\rho^{(4)}} \leq \log(1+\sqrt{1-\VC(\Gamma)^2}).
\end{equation}
Substituting \eqref{eqn380363a} and \eqref{eqn380363ab} into \eqref{eqnQuEr1} gives \eqref{eqnPVmain44}.

We remark that the uncertainty relation \eqref{eqnQuEr1} corresponds to the ``preparation uncertainty'' scenario discussed in the main text, addressing the question of the predictability of the measurement at the interferometer \textit{output}. Hence the results of Ref.~\cite{Englert2000337} are of the preparation uncertainty variety.

\subsection{Polarization-enhanced visibility and discussion of Ref.~\cite{Banaszek:2013fk}}\label{sct4Ab}

\subsubsection{Result of Ref.~\cite{Banaszek:2013fk}}

The aim of this section is to show that the main result of Ref.~\cite{Banaszek:2013fk} can be viewed as, or is a direct consequence of, the uncertainty relation for the min- and max-entropies, and hence is covered by our framework. Their result is a WPDR for a MZI where a fairly general interaction occurs inside the interferometer between the photon's spatial degree of freedom $Q$, its polarization $P$, and an environment $\En$. 

In their WPDR they replaced the usual fringe visibility with a quantity they called generalized visibility with the motivation that it provides a stronger bound on the path distinguishability (via the WPDR). We will denote their generalized visibility as $\VC_B$, where the subscript $B$ refers to the first author of Banaszek et al.~\cite{Banaszek:2013fk}. (Similarly we will use $\DC_B$ for their path distinguishability.) Their generalized visibility $\VC_B$ was written in an abstract form; however, its operational meaning was not stated. Here we give an operational meaning to $\VC_B$, showing that it is directly proportional to the guessing probability for the which-phase observable, given the optimal measurement on the photon's polarisation $P$ (see below for the precise statement). Hence the visibility is \textit{enhanced} by gaining further information from measuring the polarisation.

We remark that enhancing the visibility in this way can be done either in our predictive or retrodictive framework. In the previous section we discussed Ref.~\cite{Englert2000337}, which considered the predictive scenario. On the other hand, it turns out that Ref.~\cite{Banaszek:2013fk} considered the retrodictive (or joint measurement) scenario, where one uses the final measurement outcomes to retrodict the photon's properties at an earlier time. Thus, taken together, Refs.~\cite{Englert2000337} and \cite{Banaszek:2013fk} nicely illustrate how visibility can be enhanced in the two different scenarios.

Ref.~\cite{Banaszek:2013fk} allowed for an interaction to occur within the interferometer coupling $Q$ to $P$ and $\En$. We can model this as an isometry $V$ that maps states on $Q$ at time $t_1$ to states on the larger system $Q\En PT$ at time $t_2$, where $T$ acts as a purifying system for the overall state. Ref.~\cite{Banaszek:2013fk} assumed that $V$ is path-preserving. From $V$ we may define two complementary quantum channels
\begin{equation}
\label{eqnCondStatesF}
\FC(\cdot) = \Tr_{Q PT }[V(\cdot)V\ad ], \quad \FC_c(\cdot) = \Tr_{F }[V(\cdot)V\ad ],
\end{equation}
which respectively map $Q$ at time $t_1$ to $F$ (or $QPT$) at time $t_2$. Now, consider the (normalized) conditional states on the environment at time $t_2$ respectively associated with the $\ket{0}$ and $\ket{1}$ input states on $Q$:
\begin{equation}
\label{eqnCondStatesF}
\rho_{\En}^0 = \FC(\dya{0}),\quad \rho_{\En}^1 = \FC (\dya{1}).
\end{equation}
The main result of Ref.~\cite{Banaszek:2013fk} is a WPDR of the form
\begin{equation}
\label{eqn194903Ban}
\DC_B^2+\VC_B^2 \leq 1,
\end{equation}
where 
\begin{align}
\label{eqn194903BanDi}
\DC_B &:= \frac{1}{2} \norm[\Big] {\rho_{\En}^0  - \rho_{\En}^1 }_{1} ,\\
\label{eqnVgAbstract1}
\VC_B &:= \norm[\Big]{\sqrt{\rho_{\En}^0} \sqrt{\rho_{\En}^1}}_{1}.
\end{align}
To clarify, Ref.~\cite{Banaszek:2013fk} wrote $\VC_B$ in several ways, one of which was $\VC_B = \max_{\ket{\psi^{0}},\ket{\psi^{1}}} | \ip{\psi^{0}}{\psi^{1}}|$ where the maximisation is over all purifications $\ket{\psi^{0}}$ and $\ket{\psi^{1}}$ of $\rho_{\En}^0$ and $\rho_{\En}^1$ respectively. But this form of $\VC_B$ is equivalent to that in \eqref{eqnVgAbstract1} due to Uhlmann's theorem (see, e.g., \cite{NieChu00}).

\subsubsection{Our treatment}

We will reinterpret \eqref{eqn194903Ban} within the context of the our complementary guessing game discussed in the main text. In particular let us consider the retrodictive scenario, discussed in the main text, where a sender Bob inputs states randomly (with equal probability for each state in the ensemble) into the interferometer chosen from one of two ensembles, $Z_i = \{\ket{0}, \ket{1}\}$ or $W_i = \{\ket{w_{\pm}}\}$. Alice tries to guess which state Bob sent (given that Bob announces which ensemble he is considering, $Z_i$ or $W_i$). With this scenario in mind, we can rewrite the path distinguishability as
\begin{align}
\label{eqn194903BanDalt}
\DC_B = 2 p_{\g}(Z_i | \En)_{\FC}-1
\end{align}
where the subscript $\FC$ refers to the relevant quantum channel. Mathematically speaking,~\eqref{eqn194903BanDalt} follows from \eqref{eqnpguesstd1}.

Similarly we give an operational interpretation for $\VC_B$ with the following lemma.
\begin{lemma}
\label{lemVGopmeaning}
Let the interaction $V$ be path-preserving, then
\begin{equation}
\label{eqnVg324903}
\VC_B = \max_{W\in XY} [2p_{\g}(W_i | QPT)_{\FC_c}-1]
\end{equation}
where $p_{\g}(W_i | QPT)_{\FC_c}$ denotes the probability of guessing $W_i=\{\ket{w_{\pm}} \}$ correctly given the optimal measurement on the joint system $QPT$, i.e., the output of the channel $\FC_c$.
\end{lemma}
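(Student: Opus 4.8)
The plan is to evaluate the right-hand side with Helstrom's two-state discrimination formula and then exploit the rigid product structure that path-preservation imposes on the interaction. First I would observe that guessing $W_i$ from the output of $\FC_c$ is a binary discrimination problem: Bob sends $\ket{w_+}$ or $\ket{w_-}$ with equal probability, so Alice must distinguish the two equiprobable (normalized, since $V$ is an isometry) states $\FC_c(\dya{w_+})$ and $\FC_c(\dya{w_-})$ on $QPT$. Helstrom's theorem (the reasoning behind \eqref{eqnpguesstd1}) then gives
\begin{equation}
2 p_{\g}(W_i | QPT)_{\FC_c} - 1 = \tfrac{1}{2} \norm[\big]{\FC_c(\dya{w_+}) - \FC_c(\dya{w_-})}_{1}.
\end{equation}
Using $\sg_W := \dya{w_+} - \dya{w_-} = e^{-i\phi_0}\dyad{0}{1} + e^{i\phi_0}\dyad{1}{0}$ and abbreviating $M := \FC_c(\dyad{0}{1})$, linearity of $\FC_c$ turns the quantity to be maximized over $W \in XY$ (equivalently over $\phi_0$) into $\tfrac{1}{2}\norm[\big]{e^{-i\phi_0} M + e^{i\phi_0} M\ad}_{1}$.

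The key step, and the part I expect to be the main obstacle, is to pin down the structure of $M$. Path-preservation means the reduced channel $\Tr_{\En PT}[V(\cdot)V\ad]$ fixes $\dya{0}$ and $\dya{1}$; since a global pure state whose marginal is pure must factorize, this forces $V\ket{j} = \ket{j}_{Q} \ot \ket{\xi^{j}}_{\En PT}$ for normalized $\ket{\xi^{0}}, \ket{\xi^{1}}$. Hence $M = \dyad{0}{1}_{Q} \ot N$ with $N := \Tr_{\En}[\dyad{\xi^{0}}{\xi^{1}}]$ an operator on $PT$. The subtlety is that for a generic $M$ one has $\max_{\phi_0}\tfrac12\norm[\big]{e^{-i\phi_0}M + e^{i\phi_0}M\ad}_{1} < \norm{M}_{1}$, so it is precisely this collapse of $M$ onto a single off-diagonal $Q$-block, supplied by path-preservation, that makes the optimization clean; recognizing where the hypothesis enters is the heart of the argument.

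With the product form in hand, $e^{-i\phi_0}M + e^{i\phi_0}M\ad$ is off-diagonal in the $Q$ register, so its singular values occur in $\pm$ pairs equal to the singular values of $N$, and its trace norm equals $2\norm{N}_{1}$ for every $\phi_0$. Thus the right-hand side reduces to $\norm{N}_{1}$, with the maximization over $\phi_0$ now trivial. Finally I would invoke Uhlmann's theorem: because $\ket{\xi^{0}}$ and $\ket{\xi^{1}}$ are purifications of $\rho_{\En}^{0}$ and $\rho_{\En}^{1}$ with purifying system $PT$, we have $\norm{N}_{1} = \max_{U_{PT}} \abs[\big]{\bra{\xi^{1}}(\id_{\En} \ot U_{PT})\ket{\xi^{0}}} = F(\rho_{\En}^{0}, \rho_{\En}^{1})$, which is exactly $\VC_B$ as defined in \eqref{eqnVgAbstract1}. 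Chaining these equalities yields $\max_{W \in XY}[2 p_{\g}(W_i | QPT)_{\FC_c} - 1] = \VC_B$, as claimed.
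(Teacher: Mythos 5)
Your proof is correct and follows essentially the same route as the paper's: both reduce the guessing probability via Helstrom to the trace norm of the block off-diagonal operator that path-preservation forces (the factorization $V\ket{j}=\ket{j}_Q\ot\ket{\xi^j}$, yielding $2\norm{N}_1$ with the $\phi_0$-dependence dropping out), which is exactly the computation in the paper's Lemma~S6. The only cosmetic difference is where the purification freedom is spent: the paper works in the Choi-state picture and picks the canonical purification $(\sqrt{\sigma}\ot\id)\ket{\Omega}$ so that $\norm{\sqrt{\sigma_0}\sqrt{\sigma_1}}_1$ appears directly, whereas you keep the physical purifications $\ket{\xi^j}$ and invoke Uhlmann's theorem at the end to identify $\norm{\Tr_F\dyad{\xi^0}{\xi^1}}_1$ with $F(\rho_F^0,\rho_F^1)=\VC_B$.
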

See the next subsection for the proof. This says that $\VC_B$ can be interpreted like an \textit{input} visibility, measuring how well Alice can distinguish between the different $W_i$ states that Bob sends, by doing a measurement on the output system $QPT$. In this sense it quantifies how much phase information or wave-like information gets transmitted from the input $Q$ to the output $QPT$. This clear operational interpretation sheds light on the meaning on $\VC_B$, which was previously only given in an abstract form in \cite{Banaszek:2013fk}.

Now we show how \eqref{eqn194903Ban} can be recast in terms of the min- and max-entropic uncertainty relation. In the retrodictive scenario (as discussed in the Methods) we consider the Choi-Jamio\l{}kowski state obtained from applying $V$ to half of a maximally-entangled state $\ket{\Phi}$ on $Q'Q$,
\begin{equation}
\label{eqn194903sfioi3}
\ket{\Lm}_{Q'Q\En PT} = (\id \ot V) \ket{\Phi}.
\end{equation}
Now we apply the uncertainty relation for the min- and max-entropies to the state $\ket{\Lm}_{Q'Q\En PT}$, giving
\begin{equation}
\label{eqn194rewrwt53aad}
H_{\min}(Z_{Q'} | \En)_{\ket{\Lm}} + \min_{W\in XY} H_{\max}(W_{Q'} |QPT)_{\ket{\Lm}}\geq 1.
\end{equation}
As discussed in the Methods section, the observables $Z_{Q'}$ and $W_{Q'}$ on $Q'$ get mapped to the input ensembles $Z_i$ and $W_i$ on $Q$ (more precisely, one takes the transpose, but this does not matter since we are minimizing $W$ over the $XY$ plane). Hence \eqref{eqn194rewrwt53aad} can be rewritten as
\begin{equation}
\label{eqn194rewrwt53}
H_{\min}(Z_i | \En)_{\FC} + \min_{W\in XY} H_{\max}(W_i | Q PT)_{\FC_c}\geq 1,
\end{equation}
where it should be clear that \eqref{eqn194rewrwt53} refers to the systems $F$ and $QPT$ \textit{after} the interaction $V$, which we emphasise with the subscripts $\FC$ and $\FC_c$ indicating the relevant quantum channels that map the input to the output. Note that this uncertainty relation is a special case of our main result given in the main manuscript, which can be seen by choosing:
\begin{equation}
\label{eqnE1E2ban}
E_1=F,\quad E_2 = QPT.
\end{equation}
Also, we remind the reader that \eqref{eqn194rewrwt53} can be interpreted as a joint measurement uncertainty relation in the following sense. The input ensembles $Z_i$ and $W_i$ are used to calibrate the measurement apparatus, i.e., to assess how well it can measure $Z$ and $W$. Equation~\eqref{eqn194rewrwt53} says the output of the apparatus, $Q\En PT$, cannot provide full information about both the $Z$ and $W$ observables on the input. 

To show that \eqref{eqn194rewrwt53} implies \eqref{eqn194903Ban} we use the relations
\begin{align}
\label{eqn19435631}
H_{\min}(Z_i | \En)_{\FC} &= 1- \log (1+\DC_B),\\
\label{eqn194356315}
\min_{W\in XY} H_{\max}(W_i | Q PT)_{\FC_c}&\leq \log (1+\sqrt{1-\VC_B^2}),
\end{align}
where \eqref{eqn19435631} follows from the relation between the min-entropy and the guessing probability, and \eqref{eqn194356315} follows from combining Lemmas~\ref{lemMaxPguess223} and~\ref{lemVGopmeaning}. Inserting these into \eqref{eqn194rewrwt53} gives \eqref{eqn194903Ban}. This shows that the main result of \cite{Banaszek:2013fk} is an entropic uncertainty relation in disguise, following directly from~\eqref{eqn194rewrwt53}.

\subsubsection{Proof of Lemma~\ref{lemVGopmeaning}}

We first note that the formula in \eqref{eqnVg324903} can be rewritten as
\begin{equation}
\label{eqnVg32490321}
\VC_B = \max_{W\in XY} [2p_{\g}(W_{Q'} | QPT)_{\ket{\Lambda}}-1]
\end{equation}
where $p_{\g}(W_{Q'} | QPT)_{\ket{\Lambda}}$ denotes the probability of correctly guessing the observable $W$ on system $Q'$ given the optimal measurement on $QPT$, for the state $\ket{\Lambda}$ in \eqref{eqn194903sfioi3}. Now to prove \eqref{eqnVg32490321}, we apply~\eqref{eqn194906} in the following lemma, noting that the state $\ket{\Lm}$ in \eqref{eqn194903sfioi3} has the special form assumed in this lemma since $V$ is path-preserving.
\begin{lemma}
\label{lem2362}
Let $\ket{\Psi}_{ABC}$ be a tripartite pure state such that the reduced state on $A$ and $B$ is classical-quantum: $\rho_{AB} = \ketbraq{0} \otimes \sigma_{0} + \ketbraq{1} \otimes \sigma_{1}$, where $\sigma_{0}$ and $\sigma_{1}$ are subnormalized such that $\Tr \sigma_{0} + \Tr \sigma_{1} = 1$. Let $W$ correspond to the outcome of a projective measurement in the $XY$ plane performed on $A$. Then it holds that
\begin{equation}
\label{eqn194906}
2 \norm{\sqrt{\sigma_{0}} \sqrt{ \sigma_{1} }}_1 =2p_{\g}(W | C)_{\ket{\psi}} -1.
\end{equation}
\end{lemma}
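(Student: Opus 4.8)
The plan is to reduce the statement to a Helstrom state-discrimination calculation and then exploit the rigid structure that the classical--quantum form of $\rho_{AB}$ imposes on the purification. First I would write the tripartite pure state as $\ket{\Psi}_{ABC} = \ket{0}_A\ket{\psi_0}_{BC} + \ket{1}_A\ket{\psi_1}_{BC}$, where $\ket{\psi_j} := \bra{j}_A\ket{\Psi}$ is a subnormalised purification of $\sigma_j$ onto $BC$, so that $\Tr_C\dya{\psi_j} = \sigma_j$. The key observation is that the cq form of $\rho_{AB}$ -- i.e.\ the vanishing of the off-diagonal block $\dyad{0}{1}_A\otimes(\cdot)$ -- forces $\Tr_C\dyad{\psi_0}{\psi_1} = 0$. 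Writing spectral decompositions $\sigma_0 = \sum_i p_i\dya{a_i}$ and $\sigma_1 = \sum_j q_j\dya{b_j}$ on $B$ together with the Schmidt-type expansions $\ket{\psi_0} = \sum_i\sqrt{p_i}\ket{a_i}_B\ket{c_i}_C$ and $\ket{\psi_1} = \sum_j\sqrt{q_j}\ket{b_j}_B\ket{d_j}_C$, this condition is equivalent to $\braket{d_j}{c_i} = 0$ for all relevant $i,j$; that is, the $C$-vectors of $\ket{\psi_0}$ span a subspace $\mathcal{C}$ that is orthogonal to the subspace $\mathcal{D}$ spanned by those of $\ket{\psi_1}$. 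This orthogonality is what ultimately removes the $\phi_0$-dependence.

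Next I would carry out the $W$-measurement. Projecting $A$ onto $\ket{w_\pm} = \tfrac{1}{\sqrt{2}}(\ket{0}\pm e^{i\phi_0}\ket{1})$ produces the subnormalised conditional vectors $\ket{\chi_\pm}_{BC} = \tfrac{1}{\sqrt{2}}(\ket{\psi_0}\pm e^{-i\phi_0}\ket{\psi_1})$, and since $\braket{\psi_0}{\psi_1}=0$ each outcome occurs with probability $\tfrac12$. Tracing out $B$, the two subnormalised conditional states on $C$ satisfy $\tilde\rho_C^+ - \tilde\rho_C^- = e^{i\phi_0}R + e^{-i\phi_0}R\ad$, with $R := \Tr_B\dyad{\psi_0}{\psi_1}$. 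Because guessing the binary $W$-outcome from $C$ is exactly a Helstrom discrimination of equiprobable states, the formula \eqref{eqnpguesstd1} gives $2p_{\g}(W|C)-1 = \norm{\tilde\rho_C^+ - \tilde\rho_C^-}_1 = \norm{e^{i\phi_0}R + e^{-i\phi_0}R\ad}_1$. Thus the lemma reduces to showing $\norm{e^{i\phi_0}R + e^{-i\phi_0}R\ad}_1 = 2\norm{\sqrt{\sigma_0}\sqrt{\sigma_1}}_1$.

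The final step is where the structure pays off. Evaluating $R$ in the above bases yields matrix elements $\bra{c_i}R\ket{d_j} = \sqrt{p_iq_j}\braket{b_j}{a_i}$, which are the complex conjugates of the elements $\bra{a_i}\sqrt{\sigma_0}\sqrt{\sigma_1}\ket{b_j}$; since complex conjugation preserves singular values this already gives $\norm{R}_1 = \norm{\sqrt{\sigma_0}\sqrt{\sigma_1}}_1 = F(\sigma_0,\sigma_1)$. More importantly, $R$ maps $\mathcal{D}$ into $\mathcal{C}$ with $\mathcal{C}\perp\mathcal{D}$, so in its singular value decomposition $R = \sum_k s_k\dyad{u_k}{v_k}$ the left vectors $\ket{u_k}\in\mathcal{C}$ and right vectors $\ket{v_k}\in\mathcal{D}$ together form an orthonormal system. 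On each mutually orthogonal two-dimensional block $\mathrm{span}\{\ket{u_k},\ket{v_k}\}$ the Hermitian operator $e^{i\phi_0}R + e^{-i\phi_0}R\ad$ acts as $s_k\bigl(\begin{smallmatrix} 0 & e^{i\phi_0}\\ e^{-i\phi_0} & 0\end{smallmatrix}\bigr)$, whose eigenvalues are $\pm s_k$ regardless of $\phi_0$. Summing $\abs{\lambda}$ over all blocks gives $\sum_k 2s_k = 2\norm{R}_1$, which completes the argument.

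The hard part will be justifying this phase-independence cleanly: the equality $\norm{e^{i\phi_0}R + e^{-i\phi_0}R\ad}_1 = 2\norm{R}_1$ holds \emph{only} because the hollow (off-diagonal) structure of $R$ inherited from $\mathcal{C}\perp\mathcal{D}$ forces its eigenvalues into symmetric $\pm$ pairs. For a generic operator the trace norm of $e^{i\phi_0}R + e^{-i\phi_0}R\ad$ genuinely depends on $\phi_0$, so the whole point is to make explicit how the cq hypothesis supplies exactly the orthogonality needed. Everything else -- the purification bookkeeping, the Helstrom step via \eqref{eqnpguesstd1}, and the identification $\norm{R}_1 = F(\sigma_0,\sigma_1)$ -- is routine once that structural fact is in place.
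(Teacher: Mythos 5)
Your proof is correct, and its skeleton matches the paper's: both reduce the problem to Helstrom discrimination of the two subnormalized post-measurement states on $C$ via \eqref{eqnpguesstd1}, and both conclude by noting that the resulting difference operator is ``hollow'' with respect to two mutually orthogonal subspaces, so that its trace norm equals $2\norm{\sqrt{\sigma_0}\sqrt{\sigma_1}}_1$ independently of the phase $\phi_0$. The genuine difference is in how that orthogonal structure is obtained. The paper first invokes the invariance of $p_{\g}(W|C)$ under the purification freedom on $C$ and then computes with the canonical purification $\ket{0}_A\ket{\psi_{\sigma_0}}_{BC_1}\ket{0}_{C_2}+\ket{1}_A\ket{\psi_{\sigma_1}}_{BC_1}\ket{1}_{C_2}$, where the flag register $C_2$ makes the hollowness manifest and the identity $\Tr_B \dyad{\psi_{\sigma_0}}{\psi_{\sigma_1}}=\sqrt{\sigma_0}\sqrt{\sigma_1}$ (up to a transpose, irrelevant for the norm) delivers the fidelity directly. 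You instead work with the given state, derive $\Tr_C\dyad{\psi_0}{\psi_1}=0$ from the cq hypothesis, and use the linear independence of the operators $\dyad{a_i}{b_j}$ in the Schmidt bases to conclude that the $C$-supports of $\ket{\psi_0}$ and $\ket{\psi_1}$ are orthogonal; the fidelity then enters through the observation that the matrix elements of $R=\Tr_B\dyad{\psi_0}{\psi_1}$ are complex conjugates of those of $\sqrt{\sigma_0}\sqrt{\sigma_1}$. Your route is somewhat longer but self-contained (no appeal to purification invariance), and your explicit singular-value-decomposition block argument for $\norm{e^{i\phi_0}R+e^{-i\phi_0}R\ad}_1=2\norm{R}_1$ makes rigorous a step the paper dismisses as ``straightforward to find''; the paper's choice of purification buys a shorter computation at the cost of that one extra standard fact.
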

\begin{proof}
Since the quantity $p_{\g}(W | C)$ is invariant under unitaries on $C$ we can choose $\ket{\Psi}_{ABC}$ to be an arbitrary purification of $\rho_{AB}$. We find it convenient to split up $C$ into $C_{1}$ and $C_{2}$ which purify $B$ and $A$, respectively:
\begin{equation}
\ket{\Psi}_{ABC_{1} C_{2}} = \ket{0}_{A} \ket{\psi_{\sigma_{0}}}_{BC_{1}} \ket{0}_{C_{2}}  + \ket{1}_{A} \ket{\psi_{\sigma_{1}}}_{BC_{1}} \ket{1}_{C_{2}} .
\end{equation}
Without loss of generality the two orthonormal basis states associated with $W$ can be written as
\begin{align}
\ket{w_{\pm}} = \frac{1}{\sqrt{2}} \Big( \ket{0} \pm e^{i \theta} \ket{1} \Big),
\end{align}
where $\theta \in [0, 2 \pi]$ is a parameter. Performing the measurement of $W$ leads to the following cq state
\begin{align}
\rho_{W B C_{1} C_{2}} &= \frac{1}{2} \ketbraq{w_+}_{W} \otimes \Big( \ketbraq{\psi_{\sigma_{0}}}_{B C_{1}} \otimes \ketbraq{0}_{C_{2}} + e^{i \theta} \ketbra{\psi_{\sigma_{0}}}{\psi_{\sigma_{1}}}_{BC_{1}} \otimes \ketbra{0}{1}_{C_{2}}\notag\\
&\hspace{7pt}+ e^{- i \theta} \ketbra{\psi_{\sigma_{1}}}{\psi_{\sigma_{0}}}_{BC_{1}} \otimes \ketbra{1}{0}_{C_{2}} + \ketbraq{\psi_{\sigma_{1}}}_{B C_{1}} \otimes \ketbraq{1}_{C_{2}} \Big)\notag\\
&\hspace{7pt}+ \frac{1}{2} \ketbraq{w_-}_{W} \otimes \Big( \ketbraq{\psi_{\sigma_{0}}}_{B C_{1}} \otimes \ketbraq{0}_{C_{2}} - e^{i \theta} \ketbra{\psi_{\sigma_{0}}}{\psi_{\sigma_{1}}}_{BC_{1}} \otimes \ketbra{0}{1}_{C_{2}}\notag\\
&\hspace{7pt}- e^{- i \theta} \ketbra{\psi_{\sigma_{1}}}{\psi_{\sigma_{0}}}_{BC_{1}} \otimes \ketbra{1}{0}_{C_{2}} + \ketbraq{\psi_{\sigma_{1}}}_{B C_{1}} \otimes \ketbraq{1}_{C_{2}} \Big).
\end{align}
It is easy to verify that $\Tr_{B} \ketbra{\psi_{\sg_{u}}}{\psi_{\sg_{v}}}_{BC} = (\sqrt{\sg_{u}} \sqrt{\sg_{v}})_{C}$ for $u, v \in \{0, 1\}$. Therefore, we get
\begin{align}
\rho_{W C_{1} C_{2}} &= \frac{1}{2} \ketbraq{w_+} \otimes \Big( \sigma_{0} \otimes \ketbraq{0} + e^{i \theta} \sqrt{\sigma_{0}} \sqrt{\sigma_{1}} \otimes \ketbra{0}{1} + e^{- i \theta} \sqrt{\sigma_{1}} \sqrt{\sigma_{0}} \otimes \ketbra{1}{0} + \sigma_{1} \otimes \ketbraq{1} \Big)\notag\\
&\hspace{7pt}+ \frac{1}{2} \ketbraq{w_-} \otimes \Big( \sigma_{0} \otimes \ketbraq{0} - e^{i \theta} \sqrt{\sigma_{0}} \sqrt{\sigma_{1}} \otimes \ketbra{0}{1} - e^{- i \theta} \sqrt{\sigma_{1}} \sqrt{\sigma_{0}} \otimes \ketbra{1}{0} + \sigma_{1} \otimes \ketbraq{1} \Big).
\end{align}
Using the relation between the guessing probability and the trace distance from \eqref{eqnpguesstd1}, it is straightforward to find:
\begin{align}
2p_{\g}(W | C)_{\ket{\psi}} -1 &=\norm[\Big]{ e^{i \theta} \sqrt{\sigma_{0}} \sqrt{\sigma_{1}} \otimes \ketbra{0}{1} + e^{- i \theta} \sqrt{\sigma_{1}} \sqrt{\sigma_{0}} \otimes \ketbra{1}{0} }_{1} \notag\\
\label{eqn49596030}&= 2 \norm{\sqrt{\sigma_{0}} \sqrt{\sigma_{1}}}_{1}
\end{align}
It is worth noting that the phase $\theta$ does not appear in \eqref{eqn49596030}, i.e., all measurements in the $XY$ plane lead to the same guessing probability.
\end{proof}

\section{Testing coherence in a quantum beam splitter}\label{sct2A}

\subsection{Quantities sensitive to coherence}

Here we further elaborate on the treatment of the quantum beam splitter ($\qbs$) depicted in Fig.~5 of the main text. The $\qbs$ is modelled as a controlled-unitary, whose form is
\begin{equation}
\label{eqnUps1}
U_{PQ} = \dya{H}_P\ot \id_S+\dya{V}_P\ot U(R),\quad\text{with }
U(R) =
\begin{pmatrix}
 \sqrt{R}     & \sqrt{1-R}   \\
  \sqrt{1-R}    & -\sqrt{R} 
\end{pmatrix}.
\end{equation}
Here the photon's polarisation $P$ in the $\{ \ket{H},\ket{V}\}$ basis determines whether one applies the transformation associated with a (classical) beam splitter with reflection coefficient $R$. If $P$ is fed in as a superposition of the $\{ \ket{H},\ket{V}\}$ basis states then the beam splitter is said to be ``quantum''.

Our main goal in what follows is to show that our novel WPDR stated in the main text:
\begin{equation}
\label{eqnWPDRgood}
(\DC_i^P)^2 + \VC^2\leq 1
\end{equation}
captures the coherence of the beam splitter, whereas a weaker WPDR:
\begin{equation}
\label{eqnWPDRbad}
\DC_i^2 + \VC^2\leq 1
\end{equation}
does not. Here the different distinguishabilities are
\begin{align}
\label{eqndi1}\DC_i & = 2 p_{\g}(Z_i )_{D_0}-1, \\
\label{eqndpi1}\DC_i^P &= 2 p_{\g}(Z_i | P)_{D_0}-1,
\end{align}
where $P$ in \eqref{eqndpi1} refers to the final polarization after the $\qbs$. For simplicity, we will neglect any interaction with an external environment $\En$ in what follows, and hence conditioning these distinguishabilities on $\En$ is not necessary. (Since we are interested in demonstrating that \eqref{eqnWPDRgood} can capture coherence, it suffices to demonstrate it for a special case where $\En$ plays no role.) 

For comparison, we will also define ``decohered'' versions of $\DC_i$ and $\DC_i^P$, $\DC_i^{dec}$ and $\DC_i^{P,dec}$ respectively, where the latter correspond to feeding in a decohered version of the polarization state $\rho^{(2)}_P$, i.e., feeding in the corresponding classical mixture of $\ket{H}$ and $\ket{V}$ rather than a coherent superposition. Precisely this means replacing $\rho^{(2)}_P$ with $\ZC(\rho^{(2)}_P)$ where $\ZC(\cdot) = \dya{H}(\cdot)\dya{H} +\dya{V}(\cdot)\dya{V}$ is the quantum channel that decoheres the polarization state.

The first noteworthy point is that $\DC_i = \DC_i^{dec}$, hence measuring $\DC_i$ does not reveal the coherence in the $\qbs$. This is because $\DC_i$ is not conditioned on $P$, so we evaluate it on the reduced state obtained from tracing over $P$. But tracing over $P$ removes any dependence on the off-diagonal elements of $\rho^{(2)}_P$ in $\{\ket{H}, \ket{V}\}$ basis, since the unitary $U_{PQ}$ is controlled by the $\{\ket{H}, \ket{V}\}$ basis.

On the other hand we show that, in general, $\DC_i^P \neq \DC_i^{P,dec}$, so $\DC_i^P$ has the potential to reveal coherence. We also remark that the following hierarchy holds in general:
\begin{equation}
\DC_i \leq \DC_i^{P,dec} \leq \DC_i^P . 
\end{equation}
The first inequality holds because $\DC_i = \DC_i^{dec}$ and conditioning on $P$ can never decrease the guessing probability. The second inequality holds because the decoherence operation $\ZC(\cdot)$ commutes with $U_{PQ}$ and hence can be viewed as restricting the class of measurements over which one optimizes to evaluate the guessing probability.

For simplicity, let us consider a one-parameter family of input states $\rho^{(2)}_P = \dya{\al}$ with $\ket{\al} :=\cos \al \ket{H} + \sin \al \ket{V}$, which also happens to be the family considered in Ref.~\cite{Kaiser02112012} (see the next subsection). Thus, we have an open interferometer when $\al = 0$ and a closed one when $\al = 90 \deg$. For such states the visibility becomes
\begin{equation}
\VC = 2\sqrt{R(1-R)}\sin^2 (\al).
\end{equation}
Likewise, solving for the different distinguishabilities (see below for derivation) gives:
\begin{align}
\label{eqnDists1}\DC_i & = \abs{1 - 2 (\sin^2 \al) (1-R)}, \\
\label{eqnDists2}\DC_i^{P,dec} & = 1- (\sin^2 \al) \cdot (1-\abs{2 R - 1}),  \\
\label{eqnDists3}\DC_i^P & =  \sqrt{1 - 4R(1-R)(\sin^4 \al) }. 
\end{align}
Clearly these formulas indicate that, in general, $\DC_i^P \neq \DC_i^{P,dec}$, hence showing that $\DC_i^P$ reveals coherence. We can see a clear distinction between these distinguishabilities in Fig.~\ref{fgrsup1}, which considers the case of $R=0.4$.

\begin{figure}[t]
\begin{center}
\includegraphics[scale=1.2]{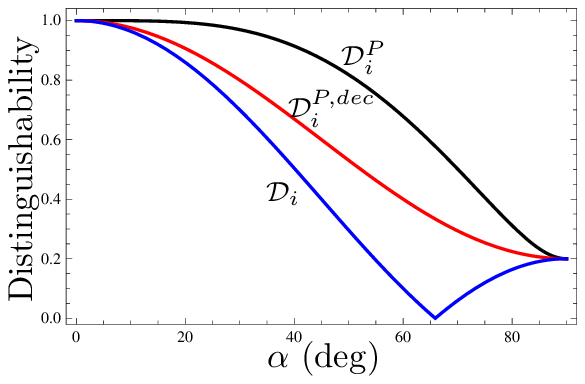}
\caption{%
Plot of $\DC_i$, $\DC_i^{P,dec}$, and $\DC_i^P$ as a function of $\al$, for $R=0.4$.\label{fgrsup1}}
\end{center}
\end{figure}

\subsection{Discussion of Ref.~\cite{Kaiser02112012}}

We remark that $\DC_i^P$ could be measured using polarization-resolving detection at the interferometer output, as in Fig.~5 of the main text, assuming one chooses the optimal polarization basis to measure. (See below where we explicitly solve for the optimal polarization basis to measure.) We note that the setup in Ref.~\cite{Kaiser02112012} has polarization-resolving detectors, and hence could measure $\DC_i^P$. However, the procedure outlined in \cite{Kaiser02112012} for measuring distinguishability corresponds to measuring our $\DC_i$. Figure~\ref{fgrsup2} shows our theoretical predictions for the situation in \cite{Kaiser02112012}, corresponding to $R=0.5$. At first sight our predictions appear to disagree with \cite{Kaiser02112012} in the sense that our Fig.~\ref{fgrsup2}B, which plots $\VC^2$, $\DC_i^2$, and $\VC^2 +\DC_i^2$, looks very different from the corresponding plot of these quantities in Fig.~4 of \cite{Kaiser02112012}. An explanation for the disagreement is that \cite{Kaiser02112012} may have actually plotted $\VC$, $\DC_i$, and $\VC +\DC_i$ in their Fig.~4. Indeed, their Fig.~4 looks similar to our predictions for $\VC$ and $\DC_i$ in Fig.~\ref{fgrsup2}A, and we have $\VC +\DC_i =1$ which is consistent with their Fig.~4. The authors of \cite{Kaiser02112012} have confirmed that their Fig.~4 plotted visibility and distinguishability as opposed to their squares \cite{TanzilliPrivComm}. We emphasize that this minor issue with their plot does not affect the conclusions of Ref.~\cite{Kaiser02112012}.

Since we predict that $\VC^2 +\DC_i^2$ can be strictly less than 1, then testing our novel relation $\VC^2 +(\DC_i^P)^2\leq 1$ can give a more stringent test of wave-particle duality. Indeed we show that for the setup in \cite{Kaiser02112012}, this relation is as strong as possible, i.e., it is satisfied with equality $\VC^2 +(\DC_i^P)^2 = 1$. This is depicted in Fig.~\ref{fgrsup2}C.

We remark that Ref.~\cite{Kaiser02112012} verified the coherence of their $\qbs$ by violating a Bell inequality, which involves verifying correlations in multiple bases. Our work here shows that one can verify the coherence of a $\qbs$ without switching bases in the experiment. Simply measuring the quantities appearing in our WPDR \eqref{eqnWPDRgood}, in particular $\DC_i^P$, can verify the coherence of the $\qbs$.

\begin{figure}[t]
\begin{center}
\includegraphics[scale=1.2]{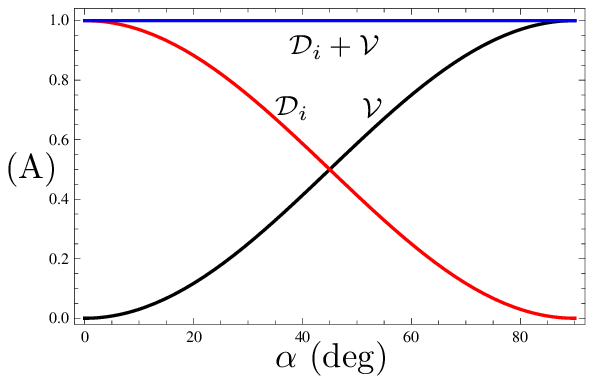}\\

\vspace{5pt}
\includegraphics[scale=1.2]{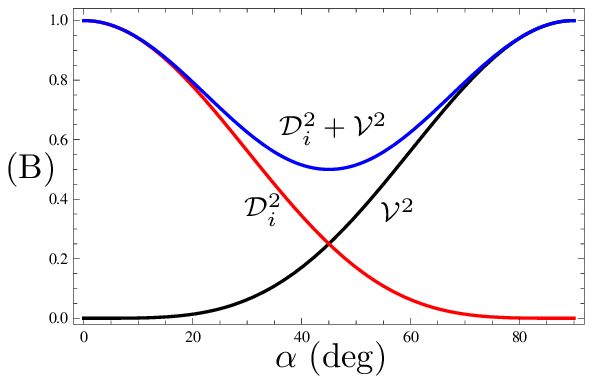}\\

\vspace{5pt}
\includegraphics[scale=1.2]{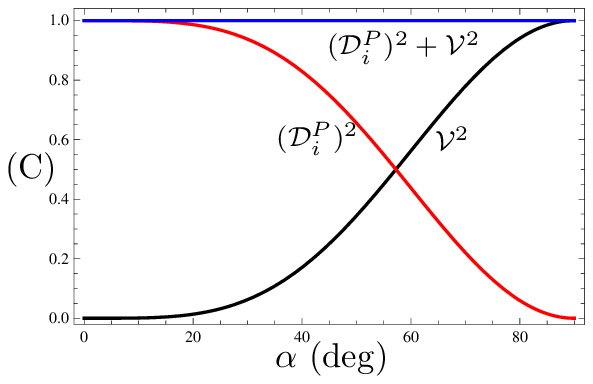}
\caption{%
For $R=0.5$, we plot (A) $\VC$, $\DC_i$, and $\VC +\DC_i$; (B) $\VC^2$, $\DC_i^2$, and $\VC^2 +\DC_i^2$; and (C) $\VC^2$, $(\DC_i^P)^2$, and $\VC^2 +(\DC_i^P)^2$. Notice that  $\VC^2 +(\DC_i^P)^2 = 1$, showing that our novel WPDR is perfectly tight.\label{fgrsup2}}
\end{center}
\end{figure}

\subsection{Derivation of distinguishability formulas}

The derivation of Eqs.~\eqref{eqnDists1}-\eqref{eqnDists3} proceeds as follows. As discussed in the Methods section of the main text, our retrodictive framework starts with the quanton $Q$ being maximally entangled to a register $Q'$, in the state $\ket{\Phi}_{Q Q'} = \frac{1}{\sqrt{2}} (\ket{00} + \ket{11})$. Meanwhile the polarization $P$ enters the $\qbs$ in the state $\ket{\al} : = c \ket{H} + s \ket{V}$, where $c = \cos \alpha$ and $s = \sin \alpha$. The action of the $\qbs$, given by $U_{PQ}$ in \eqref{eqnUps1}, results in the state:
\begin{equation}
\big( U_{PQ} \otimes \id_{Q'} \big) \ket{\al}_P \otimes \ket{\Phi}_{Q Q'} = c \ket{H}_{P} \otimes \frac{\ket{0}_{Q} \ket{0}_{Q'} + \ket{1}_{Q} \ket{1}_{Q'}}{\sqrt{2}} + s \ket{V}_{P} \otimes \frac{U(R) \ket{0}_{Q} \ket{0}_{Q'} + U(R) \ket{1}_{Q} \ket{1}_{Q'}}{\sqrt{2}}.
\end{equation}
Conditioning on detector $D_{0}$ clicking corresponds to measuring $Q$ in the standard basis and post-selecting on the outcome $0$. In addition, measuring $Q'$ to obtain $Z$ gives
\begin{gather}
\label{eqnrhozcp}
\bar{\rho}_{Z P}^{D_{0}} = \ketbraq{0}_{Z} \otimes \sigma_{P}^{0} + \ketbraq{1}_{Z} \otimes \sigma_{P}^{1},\\
\label{eqnsg0Psg1P}
\nbox{where} \sigma_{P}^{0} = \left(
\begin{array}{c c}
	c^{2} & s c \sqrt{R}\\
	s c \sqrt{R} & s^{2} R\\
\end{array} \right)
\nbox{and} \sigma_{P}^{1} = \left(
\begin{array}{c c}
	0 & 0\\
	0 & s^{2} (1 - R)\\
\end{array} \right).
\end{gather}
Now we compute the distinguishability by using the fact that it is the trace distance between the conditional states. For the polarization-enhanced distinguishability, this gives:
\begin{equation}
\DC_i^P = \norm{\sigma_{P}^{0} - \sigma_{P}^{1}}_{1} = \sqrt{c^{4} + s^{4} (2R - 1)^{2} + 2 (sc)^{2}}.
\end{equation}
Decohering $\sigma_{P}^{0}$ and $\sigma_{P}^{1}$ before calculating the trace norm leads to the decohered distinguishability:
\begin{equation}
\DC_i^{P,dec} = c^{2} + s^{2} \cdot \abs{2 R - 1}.
\end{equation}
Finally, to calculate the non-enhanced distinguishability we need to trace out the polarization register to get
\ca
\begin{equation}
\sigma_{C}^{0} = \left(
\begin{array}{c c}
	c^{2} + s^{2} R &\\
	& s^{2} (1 - R)\\
\end{array} \right) \nbox[6]{and}
\sigma_{C}^{1} = \left(
\begin{array}{c c}
	s^{2} (1 - R) &\\
	& c^{2} + s^{2} R\\
\end{array} \right)
\end{equation}
and then
\cb
\begin{equation}
\DC_i = \abs{\Tr \sigma_{P}^{0} - \Tr \sigma_{P}^{1}} = \abs{c^{2} + s^{2} (2R - 1)}.
\end{equation}

\subsection{Measuring $\DC_i^P$}

Measuring the polarization-enhanced distinguishability $\DC_i^P$ requires polarization-resolving detection at the output of the interferometer, as in Fig.~5 of the main text. As defined, $\DC_i^P$ corresponds to measuring the optimal polarization basis at the output, i.e., optimally helpful for guessing which path the photon took. We now solve for this optimal polarization basis. We remark that varying the polarization measurement basis could be accomplished by varying the angle of a half-wave plate inserted just prior to the PBS's in Fig.~5 of the main text.

Finding the optimal measurement is simply a two-state discrimination problem on a qubit, and the solution is well-known \cite{PhysRevA.70.022302}. Consider the (unnormalized) conditional states on system $P$ associated with $Z=0$ and $Z=1$, respectively denoted $\sg_P^0$ and $\sg_P^1$ and given by Eq.~\eqref{eqnsg0Psg1P}. (Note that both states are conditioned on detector $D_0$ clicking.) From \cite{PhysRevA.70.022302}, the optimal polarization basis to measure is given by the eigenvectors of the following Hermitian operator:
\begin{align}
O_P = \sg_P^{0} - \sg_P^{1}.
\end{align}
From \eqref{eqnsg0Psg1P}, we compute that this corresponds to the following polarization observable (represented as a matrix in the $\{\ket{H},\ket{V}\}$ basis):
\begin{align}
\widehat{O}_P =  \frac{1}{\DC_i^P}\left(
\begin{array}{c c}
	1-2(\sin^2 \al) R  & (\sin 2\al)\sqrt{R}\\
	(\sin 2\al)\sqrt{R} & - [1-2(\sin^2 \al)R ] \\
\end{array} \right),
\end{align}
where we have normalized the observable (hence the hat symbol) such that it squares to the identity. For example, choosing $R=0.5$ and $\al = 45 \deg$ (corresponding to an equal superposition of $\bs_2$ being ``absent'' and ``present'') gives
\begin{align}
\widehat{O}_P =  \frac{1}{\sqrt{3}}\left(
\begin{array}{c c}
	1  & \sqrt{2}\\
	\sqrt{2} & - 1 \\
\end{array} \right).
\end{align}


\begin{thebibliography}{44}
\expandafter\ifx\csname natexlab\endcsname\relax\def\natexlab#1{#1}\fi
\expandafter\ifx\csname bibnamefont\endcsname\relax
  \def\bibnamefont#1{#1}\fi
\expandafter\ifx\csname bibfnamefont\endcsname\relax
  \def\bibfnamefont#1{#1}\fi
\expandafter\ifx\csname citenamefont\endcsname\relax
  \def\citenamefont#1{#1}\fi
\expandafter\ifx\csname url\endcsname\relax
  \def\url#1{\texttt{#1}}\fi
\expandafter\ifx\csname urlprefix\endcsname\relax\def\urlprefix{URL }\fi
\providecommand{\bibinfo}[2]{#2}
\providecommand{\eprint}[2][]{\url{#2}}

\bibitem[{\citenamefont{Feynman}(1970)}]{Feynman70}
\bibinfo{author}{\bibfnamefont{R.~P.} \bibnamefont{Feynman}},
  \emph{\bibinfo{title}{Feynman Lectures on Physics}}
  (\bibinfo{publisher}{Addison Wesley, Longman}, \bibinfo{year}{1970}).

\bibitem[{\citenamefont{Englert}(1996)}]{EnglertPRL1996}
\bibinfo{author}{\bibfnamefont{B.-G.} \bibnamefont{Englert}},
  \bibinfo{journal}{Phys. Rev. Lett.} \textbf{\bibinfo{volume}{77}},
  \bibinfo{pages}{2154} (\bibinfo{year}{1996}).

\bibitem[{\citenamefont{Jaeger et~al.}(1995)\citenamefont{Jaeger, Shimony, and
  Vaidman}}]{JaegerEtAlPRA1995}
\bibinfo{author}{\bibfnamefont{G.}~\bibnamefont{Jaeger}},
  \bibinfo{author}{\bibfnamefont{A.}~\bibnamefont{Shimony}}, \bibnamefont{and}
  \bibinfo{author}{\bibfnamefont{L.}~\bibnamefont{Vaidman}},
  \bibinfo{journal}{Phys. Rev. A} \textbf{\bibinfo{volume}{51}},
  \bibinfo{pages}{54} (\bibinfo{year}{1995}).

\bibitem[{\citenamefont{Wootters and Zurek}(1979)}]{PhysRevD.19.473}
\bibinfo{author}{\bibfnamefont{W.~K.} \bibnamefont{Wootters}} \bibnamefont{and}
  \bibinfo{author}{\bibfnamefont{W.~H.} \bibnamefont{Zurek}},
  \bibinfo{journal}{Phys. Rev. D} \textbf{\bibinfo{volume}{19}},
  \bibinfo{pages}{473} (\bibinfo{year}{1979}).

\bibitem[{\citenamefont{Greenberger and Yasin}(1988)}]{Greenberger1988391}
\bibinfo{author}{\bibfnamefont{D.~M.} \bibnamefont{Greenberger}}
  \bibnamefont{and} \bibinfo{author}{\bibfnamefont{A.}~\bibnamefont{Yasin}},
  \bibinfo{journal}{Physics Letters A} \textbf{\bibinfo{volume}{128}},
  \bibinfo{pages}{391 } (\bibinfo{year}{1988}), ISSN \bibinfo{issn}{0375-9601}.

\bibitem[{\citenamefont{Englert et~al.}(2008)\citenamefont{Englert,
  Kaszlikowski, Kwek, and Chee}}]{EnglertIJQI2008}
\bibinfo{author}{\bibfnamefont{B.-G.} \bibnamefont{Englert}},
  \bibinfo{author}{\bibfnamefont{D.}~\bibnamefont{Kaszlikowski}},
  \bibinfo{author}{\bibfnamefont{L.~C.} \bibnamefont{Kwek}}, \bibnamefont{and}
  \bibinfo{author}{\bibfnamefont{W.~H.} \bibnamefont{Chee}},
  \bibinfo{journal}{International Journal of Quantum Information}
  \textbf{\bibinfo{volume}{06}}, \bibinfo{pages}{129} (\bibinfo{year}{2008}).

\bibitem[{\citenamefont{Liu et~al.}(2009)\citenamefont{Liu, Li, Yu, and
  Chen}}]{PhysRevA.79.052108}
\bibinfo{author}{\bibfnamefont{N.-L.} \bibnamefont{Liu}},
  \bibinfo{author}{\bibfnamefont{L.}~\bibnamefont{Li}},
  \bibinfo{author}{\bibfnamefont{S.}~\bibnamefont{Yu}}, \bibnamefont{and}
  \bibinfo{author}{\bibfnamefont{Z.-B.} \bibnamefont{Chen}},
  \bibinfo{journal}{Phys. Rev. A} \textbf{\bibinfo{volume}{79}},
  \bibinfo{pages}{052108} (\bibinfo{year}{2009}).

\bibitem[{\citenamefont{Huang et~al.}(2013)\citenamefont{Huang, W\"olk, Zhu,
  and Zubairy}}]{PhysRevA.87.022107}
\bibinfo{author}{\bibfnamefont{J.-H.} \bibnamefont{Huang}},
  \bibinfo{author}{\bibfnamefont{S.}~\bibnamefont{W\"olk}},
  \bibinfo{author}{\bibfnamefont{S.-Y.} \bibnamefont{Zhu}}, \bibnamefont{and}
  \bibinfo{author}{\bibfnamefont{M.~S.} \bibnamefont{Zubairy}},
  \bibinfo{journal}{Phys. Rev. A} \textbf{\bibinfo{volume}{87}},
  \bibinfo{pages}{022107} (\bibinfo{year}{2013}).

\bibitem[{\citenamefont{Qureshi}(2013)}]{Qureshi01042013}
\bibinfo{author}{\bibfnamefont{T.}~\bibnamefont{Qureshi}},
  \bibinfo{journal}{Progress of Theoretical and Experimental Physics}
  \textbf{\bibinfo{volume}{2013}} (\bibinfo{year}{2013}).

\bibitem[{\citenamefont{Li et~al.}(2012)\citenamefont{Li, Liu, and
  Yu}}]{PhysRevA.85.054101}
\bibinfo{author}{\bibfnamefont{L.}~\bibnamefont{Li}},
  \bibinfo{author}{\bibfnamefont{N.-L.} \bibnamefont{Liu}}, \bibnamefont{and}
  \bibinfo{author}{\bibfnamefont{S.}~\bibnamefont{Yu}}, \bibinfo{journal}{Phys.
  Rev. A} \textbf{\bibinfo{volume}{85}}, \bibinfo{pages}{054101}
  (\bibinfo{year}{2012}).

\bibitem[{\citenamefont{Englert and Bergou}(2000)}]{Englert2000337}
\bibinfo{author}{\bibfnamefont{B.-G.} \bibnamefont{Englert}} \bibnamefont{and}
  \bibinfo{author}{\bibfnamefont{J.~A.} \bibnamefont{Bergou}},
  \bibinfo{journal}{Optics Communications} \textbf{\bibinfo{volume}{179}},
  \bibinfo{pages}{337 } (\bibinfo{year}{2000}), ISSN \bibinfo{issn}{0030-4018}.

\bibitem[{\citenamefont{Banaszek et~al.}(2013)\citenamefont{Banaszek,
  Horodecki, Karpi{\'n}ski, and Radzewicz}}]{Banaszek:2013fk}
\bibinfo{author}{\bibfnamefont{K.}~\bibnamefont{Banaszek}},
  \bibinfo{author}{\bibfnamefont{P.}~\bibnamefont{Horodecki}},
  \bibinfo{author}{\bibfnamefont{M.}~\bibnamefont{Karpi{\'n}ski}},
  \bibnamefont{and}
  \bibinfo{author}{\bibfnamefont{C.}~\bibnamefont{Radzewicz}},
  \bibinfo{journal}{Nat Commun} \textbf{\bibinfo{volume}{4}}
  (\bibinfo{year}{2013}).

\bibitem[{\citenamefont{Jia et~al.}(2014)\citenamefont{Jia, Huang, Feng, Zhang,
  and Zhu}}]{JiaEtAlCPB2014}
\bibinfo{author}{\bibfnamefont{A.-A.} \bibnamefont{Jia}},
  \bibinfo{author}{\bibfnamefont{J.-H.} \bibnamefont{Huang}},
  \bibinfo{author}{\bibfnamefont{W.}~\bibnamefont{Feng}},
  \bibinfo{author}{\bibfnamefont{T.-C.} \bibnamefont{Zhang}}, \bibnamefont{and}
  \bibinfo{author}{\bibfnamefont{S.-Y.} \bibnamefont{Zhu}},
  \bibinfo{journal}{Chinese Physics B} \textbf{\bibinfo{volume}{23}},
  \bibinfo{pages}{30307} (\bibinfo{year}{2014}).

\bibitem[{\citenamefont{Englert et~al.}(1995)\citenamefont{Englert, Scully, and
  Walther}}]{Englert:1995ly}
\bibinfo{author}{\bibfnamefont{B.-G.} \bibnamefont{Englert}},
  \bibinfo{author}{\bibfnamefont{M.~O.} \bibnamefont{Scully}},
  \bibnamefont{and} \bibinfo{author}{\bibfnamefont{H.}~\bibnamefont{Walther}},
  \bibinfo{journal}{Nature} \textbf{\bibinfo{volume}{375}},
  \bibinfo{pages}{367} (\bibinfo{year}{1995}).

\bibitem[{\citenamefont{Storey et~al.}(1994)\citenamefont{Storey, Tan, Collett,
  and Walls}}]{Storey:1994zr}
\bibinfo{author}{\bibfnamefont{P.}~\bibnamefont{Storey}},
  \bibinfo{author}{\bibfnamefont{S.}~\bibnamefont{Tan}},
  \bibinfo{author}{\bibfnamefont{M.}~\bibnamefont{Collett}}, \bibnamefont{and}
  \bibinfo{author}{\bibfnamefont{D.}~\bibnamefont{Walls}},
  \bibinfo{journal}{Nature} \textbf{\bibinfo{volume}{367}},
  \bibinfo{pages}{626} (\bibinfo{year}{1994}).

\bibitem[{\citenamefont{Wiseman and Harrison}(1995)}]{Wiseman:1995ys}
\bibinfo{author}{\bibfnamefont{H.}~\bibnamefont{Wiseman}} \bibnamefont{and}
  \bibinfo{author}{\bibfnamefont{F.}~\bibnamefont{Harrison}},
  \bibinfo{journal}{Nature} \textbf{\bibinfo{volume}{377}},
  \bibinfo{pages}{584} (\bibinfo{year}{1995}).

\bibitem[{\citenamefont{Bohr}(1928)}]{Bohr1928}
\bibinfo{author}{\bibfnamefont{N.}~\bibnamefont{Bohr}},
  \bibinfo{journal}{Nature} \textbf{\bibinfo{volume}{121}},
  \bibinfo{pages}{580} (\bibinfo{year}{1928}).

\bibitem[{\citenamefont{Heisenberg}(1927)}]{Heisenberg}
\bibinfo{author}{\bibfnamefont{W.}~\bibnamefont{Heisenberg}},
  \bibinfo{journal}{Zeitschrift f\"ur Physik} \textbf{\bibinfo{volume}{43}},
  \bibinfo{pages}{172} (\bibinfo{year}{1927}).

\bibitem[{\citenamefont{Kennard}(1927)}]{kennard1927quantum}
\bibinfo{author}{\bibfnamefont{E.}~\bibnamefont{Kennard}}, \bibinfo{journal}{Z.
  Phys} \textbf{\bibinfo{volume}{44}}, \bibinfo{pages}{326}
  (\bibinfo{year}{1927}).

\bibitem[{\citenamefont{Robertson}(1929)}]{Robertson}
\bibinfo{author}{\bibfnamefont{H.~P.} \bibnamefont{Robertson}},
  \bibinfo{journal}{Phys. Rev.} \textbf{\bibinfo{volume}{34}},
  \bibinfo{pages}{163} (\bibinfo{year}{1929}).

\bibitem[{\citenamefont{Deutsch}(1983)}]{deutsch}
\bibinfo{author}{\bibfnamefont{D.}~\bibnamefont{Deutsch}},
  \bibinfo{journal}{Physical Review Letters} \textbf{\bibinfo{volume}{50}},
  \bibinfo{pages}{631} (\bibinfo{year}{1983}).

\bibitem[{\citenamefont{Bia{\l}ynicki-Birula and Mycielski}(1975)}]{BiaMyc75}
\bibinfo{author}{\bibfnamefont{I.}~\bibnamefont{Bia{\l}ynicki-Birula}}
  \bibnamefont{and}
  \bibinfo{author}{\bibfnamefont{J.}~\bibnamefont{Mycielski}},
  \bibinfo{journal}{Communications in Mathematical Physics}
  \textbf{\bibinfo{volume}{44}}, \bibinfo{pages}{129} (\bibinfo{year}{1975}).

\bibitem[{\citenamefont{Maassen and Uffink}(1988)}]{MaassenUffink}
\bibinfo{author}{\bibfnamefont{H.}~\bibnamefont{Maassen}} \bibnamefont{and}
  \bibinfo{author}{\bibfnamefont{J.~B.~M.} \bibnamefont{Uffink}},
  \bibinfo{journal}{Phys. Rev. Lett.} \textbf{\bibinfo{volume}{60}},
  \bibinfo{pages}{1103} (\bibinfo{year}{1988}).

\bibitem[{\citenamefont{{Wehner} and {Winter}}(2010)}]{EURreview1}
\bibinfo{author}{\bibfnamefont{S.}~\bibnamefont{{Wehner}}} \bibnamefont{and}
  \bibinfo{author}{\bibfnamefont{A.}~\bibnamefont{{Winter}}},
  \bibinfo{journal}{New J. Phys.} \textbf{\bibinfo{volume}{12}},
  \bibinfo{pages}{025009} (\bibinfo{year}{2010}).

\bibitem[{\citenamefont{Renes and Boileau}(2009)}]{RenesBoileau}
\bibinfo{author}{\bibfnamefont{J.~M.} \bibnamefont{Renes}} \bibnamefont{and}
  \bibinfo{author}{\bibfnamefont{J.-C.} \bibnamefont{Boileau}},
  \bibinfo{journal}{Phys. Rev. Lett.} \textbf{\bibinfo{volume}{103}},
  \bibinfo{pages}{020402} (\bibinfo{year}{2009}).

\bibitem[{\citenamefont{{Berta} et~al.}(2010)\citenamefont{{Berta},
  {Christandl}, {Colbeck}, {Renes}, and {Renner}}}]{BertaEtAl}
\bibinfo{author}{\bibfnamefont{M.}~\bibnamefont{{Berta}}},
  \bibinfo{author}{\bibfnamefont{M.}~\bibnamefont{{Christandl}}},
  \bibinfo{author}{\bibfnamefont{R.}~\bibnamefont{{Colbeck}}},
  \bibinfo{author}{\bibfnamefont{J.~M.} \bibnamefont{{Renes}}},
  \bibnamefont{and} \bibinfo{author}{\bibfnamefont{R.}~\bibnamefont{{Renner}}},
  \bibinfo{journal}{Nature Physics} \textbf{\bibinfo{volume}{6}},
  \bibinfo{pages}{659} (\bibinfo{year}{2010}).

\bibitem[{\citenamefont{Coles et~al.}(2011)\citenamefont{Coles, Yu, Gheorghiu,
  and Griffiths}}]{ColesEtAlPRA2011}
\bibinfo{author}{\bibfnamefont{P.~J.} \bibnamefont{Coles}},
  \bibinfo{author}{\bibfnamefont{L.}~\bibnamefont{Yu}},
  \bibinfo{author}{\bibfnamefont{V.}~\bibnamefont{Gheorghiu}},
  \bibnamefont{and} \bibinfo{author}{\bibfnamefont{R.~B.}
  \bibnamefont{Griffiths}}, \bibinfo{journal}{Phys. Rev. A}
  \textbf{\bibinfo{volume}{83}}, \bibinfo{pages}{062338}
  (\bibinfo{year}{2011}).

\bibitem[{\citenamefont{Coles et~al.}(2012)\citenamefont{Coles, Colbeck, Yu,
  and Zwolak}}]{ColesColbeckYuZwolak2012PRL}
\bibinfo{author}{\bibfnamefont{P.~J.} \bibnamefont{Coles}},
  \bibinfo{author}{\bibfnamefont{R.}~\bibnamefont{Colbeck}},
  \bibinfo{author}{\bibfnamefont{L.}~\bibnamefont{Yu}}, \bibnamefont{and}
  \bibinfo{author}{\bibfnamefont{M.}~\bibnamefont{Zwolak}},
  \bibinfo{journal}{Phys. Rev. Lett.} \textbf{\bibinfo{volume}{108}},
  \bibinfo{pages}{210405} (\bibinfo{year}{2012}).

\bibitem[{\citenamefont{{Tomamichel} and {Renner}}(2011)}]{TomRen2010}
\bibinfo{author}{\bibfnamefont{M.}~\bibnamefont{{Tomamichel}}}
  \bibnamefont{and} \bibinfo{author}{\bibfnamefont{R.}~\bibnamefont{{Renner}}},
  \bibinfo{journal}{Phys. Rev. Lett.} \textbf{\bibinfo{volume}{106}},
  \bibinfo{pages}{110506} (\bibinfo{year}{2011}).

\bibitem[{\citenamefont{Durr and Rempe}(2000)}]{DurrRempe2000}
\bibinfo{author}{\bibfnamefont{S.}~\bibnamefont{Durr}} \bibnamefont{and}
  \bibinfo{author}{\bibfnamefont{G.}~\bibnamefont{Rempe}},
  \bibinfo{journal}{American Journal of Physics} \textbf{\bibinfo{volume}{68}},
  \bibinfo{pages}{1021} (\bibinfo{year}{2000}).

\bibitem[{\citenamefont{Busch and Shilladay}(2006)}]{Busch20061}
\bibinfo{author}{\bibfnamefont{P.}~\bibnamefont{Busch}} \bibnamefont{and}
  \bibinfo{author}{\bibfnamefont{C.}~\bibnamefont{Shilladay}},
  \bibinfo{journal}{Physics Reports} \textbf{\bibinfo{volume}{435}},
  \bibinfo{pages}{1 } (\bibinfo{year}{2006}), ISSN \bibinfo{issn}{0370-1573}.

\bibitem[{\citenamefont{Konig et~al.}(2009)\citenamefont{Konig, Renner, and
  Schaffner}}]{KonRenSch09}
\bibinfo{author}{\bibfnamefont{R.}~\bibnamefont{Konig}},
  \bibinfo{author}{\bibfnamefont{R.}~\bibnamefont{Renner}}, \bibnamefont{and}
  \bibinfo{author}{\bibfnamefont{C.}~\bibnamefont{Schaffner}},
  \bibinfo{journal}{IEEE Trans. Inf. Theory} \textbf{\bibinfo{volume}{55}},
  \bibinfo{pages}{4337 } (\bibinfo{year}{2009}).

\bibitem[{\citenamefont{Ionicioiu and Terno}(2011)}]{PhysRevLett.107.230406}
\bibinfo{author}{\bibfnamefont{R.}~\bibnamefont{Ionicioiu}} \bibnamefont{and}
  \bibinfo{author}{\bibfnamefont{D.~R.} \bibnamefont{Terno}},
  \bibinfo{journal}{Phys. Rev. Lett.} \textbf{\bibinfo{volume}{107}},
  \bibinfo{pages}{230406} (\bibinfo{year}{2011}).

\bibitem[{\citenamefont{Kaiser et~al.}(2012)\citenamefont{Kaiser, Coudreau,
  Milman, Ostrowsky, and Tanzilli}}]{Kaiser02112012}
\bibinfo{author}{\bibfnamefont{F.}~\bibnamefont{Kaiser}},
  \bibinfo{author}{\bibfnamefont{T.}~\bibnamefont{Coudreau}},
  \bibinfo{author}{\bibfnamefont{P.}~\bibnamefont{Milman}},
  \bibinfo{author}{\bibfnamefont{D.~B.} \bibnamefont{Ostrowsky}},
  \bibnamefont{and} \bibinfo{author}{\bibfnamefont{S.}~\bibnamefont{Tanzilli}},
  \bibinfo{journal}{Science} \textbf{\bibinfo{volume}{338}},
  \bibinfo{pages}{637} (\bibinfo{year}{2012}).

\bibitem[{\citenamefont{Peruzzo et~al.}(2012)\citenamefont{Peruzzo, Shadbolt,
  Brunner, Popescu, and O'Brien}}]{Peruzzo02112012}
\bibinfo{author}{\bibfnamefont{A.}~\bibnamefont{Peruzzo}},
  \bibinfo{author}{\bibfnamefont{P.}~\bibnamefont{Shadbolt}},
  \bibinfo{author}{\bibfnamefont{N.}~\bibnamefont{Brunner}},
  \bibinfo{author}{\bibfnamefont{S.}~\bibnamefont{Popescu}}, \bibnamefont{and}
  \bibinfo{author}{\bibfnamefont{J.~L.} \bibnamefont{O'Brien}},
  \bibinfo{journal}{Science} \textbf{\bibinfo{volume}{338}},
  \bibinfo{pages}{634} (\bibinfo{year}{2012}).

\bibitem[{\citenamefont{Tang et~al.}(2013)\citenamefont{Tang, Li, Li, and
  Guo}}]{TangEtAlPhysRevA.88.014103}
\bibinfo{author}{\bibfnamefont{J.-S.} \bibnamefont{Tang}},
  \bibinfo{author}{\bibfnamefont{Y.-L.} \bibnamefont{Li}},
  \bibinfo{author}{\bibfnamefont{C.-F.} \bibnamefont{Li}}, \bibnamefont{and}
  \bibinfo{author}{\bibfnamefont{G.-C.} \bibnamefont{Guo}},
  \bibinfo{journal}{Phys. Rev. A} \textbf{\bibinfo{volume}{88}},
  \bibinfo{pages}{014103} (\bibinfo{year}{2013}).

\bibitem[{\citenamefont{Jacques et~al.}(2008)\citenamefont{Jacques, Wu,
  Grosshans, Treussart, Grangier, Aspect, and Roch}}]{PhysRevLett.100.220402}
\bibinfo{author}{\bibfnamefont{V.}~\bibnamefont{Jacques}},
  \bibinfo{author}{\bibfnamefont{E.}~\bibnamefont{Wu}},
  \bibinfo{author}{\bibfnamefont{F.}~\bibnamefont{Grosshans}},
  \bibinfo{author}{\bibfnamefont{F.}~\bibnamefont{Treussart}},
  \bibinfo{author}{\bibfnamefont{P.}~\bibnamefont{Grangier}},
  \bibinfo{author}{\bibfnamefont{A.}~\bibnamefont{Aspect}}, \bibnamefont{and}
  \bibinfo{author}{\bibfnamefont{J.-F.} \bibnamefont{Roch}},
  \bibinfo{journal}{Phys. Rev. Lett.} \textbf{\bibinfo{volume}{100}},
  \bibinfo{pages}{220402} (\bibinfo{year}{2008}).

\bibitem[{\citenamefont{Franson}(1989)}]{PhysRevLett.62.2205}
\bibinfo{author}{\bibfnamefont{J.~D.} \bibnamefont{Franson}},
  \bibinfo{journal}{Phys. Rev. Lett.} \textbf{\bibinfo{volume}{62}},
  \bibinfo{pages}{2205} (\bibinfo{year}{1989}).

\bibitem[{\citenamefont{Tomamichel et~al.}(2012)\citenamefont{Tomamichel, Lim,
  Gisin, and Renner}}]{TLGR}
\bibinfo{author}{\bibfnamefont{M.}~\bibnamefont{Tomamichel}},
  \bibinfo{author}{\bibfnamefont{C.~C.~W.} \bibnamefont{Lim}},
  \bibinfo{author}{\bibfnamefont{N.}~\bibnamefont{Gisin}}, \bibnamefont{and}
  \bibinfo{author}{\bibfnamefont{R.}~\bibnamefont{Renner}},
  \bibinfo{journal}{Nature Communications} \textbf{\bibinfo{volume}{3}},
  \bibinfo{pages}{634} (\bibinfo{year}{2012}).

\bibitem[{\citenamefont{Bj\"ork et~al.}(1999)\citenamefont{Bj\"ork,
  S\"oderholm, Trifonov, Tsegaye, and Karlsson}}]{PhysRevA.60.1874}
\bibinfo{author}{\bibfnamefont{G.}~\bibnamefont{Bj\"ork}},
  \bibinfo{author}{\bibfnamefont{J.}~\bibnamefont{S\"oderholm}},
  \bibinfo{author}{\bibfnamefont{A.}~\bibnamefont{Trifonov}},
  \bibinfo{author}{\bibfnamefont{T.}~\bibnamefont{Tsegaye}}, \bibnamefont{and}
  \bibinfo{author}{\bibfnamefont{A.}~\bibnamefont{Karlsson}},
  \bibinfo{journal}{Phys. Rev. A} \textbf{\bibinfo{volume}{60}},
  \bibinfo{pages}{1874} (\bibinfo{year}{1999}).

\bibitem[{\citenamefont{Bosyk et~al.}(2013)\citenamefont{Bosyk, Portesi, Holik,
  and Plastino}}]{BosykEtAl2013}
\bibinfo{author}{\bibfnamefont{G.~M.} \bibnamefont{Bosyk}},
  \bibinfo{author}{\bibfnamefont{M.}~\bibnamefont{Portesi}},
  \bibinfo{author}{\bibfnamefont{F.}~\bibnamefont{Holik}}, \bibnamefont{and}
  \bibinfo{author}{\bibfnamefont{A.}~\bibnamefont{Plastino}},
  \bibinfo{journal}{Physica Scripta} \textbf{\bibinfo{volume}{87}},
  \bibinfo{pages}{065002} (\bibinfo{year}{2013}).

\bibitem[{\citenamefont{Buscemi et~al.}(2014)\citenamefont{Buscemi, Hall,
  Ozawa, and Wilde}}]{PhysRevLett.112.050401}
\bibinfo{author}{\bibfnamefont{F.}~\bibnamefont{Buscemi}},
  \bibinfo{author}{\bibfnamefont{M.~J.~W.} \bibnamefont{Hall}},
  \bibinfo{author}{\bibfnamefont{M.}~\bibnamefont{Ozawa}}, \bibnamefont{and}
  \bibinfo{author}{\bibfnamefont{M.~M.} \bibnamefont{Wilde}},
  \bibinfo{journal}{Phys. Rev. Lett.} \textbf{\bibinfo{volume}{112}},
  \bibinfo{pages}{050401} (\bibinfo{year}{2014}).

\bibitem[{\citenamefont{Ekert et~al.}(1992)\citenamefont{Ekert, Rarity,
  Tapster, and Massimo~Palma}}]{PhysRevLett.69.1293}
\bibinfo{author}{\bibfnamefont{A.~K.} \bibnamefont{Ekert}},
  \bibinfo{author}{\bibfnamefont{J.~G.} \bibnamefont{Rarity}},
  \bibinfo{author}{\bibfnamefont{P.~R.} \bibnamefont{Tapster}},
  \bibnamefont{and}
  \bibinfo{author}{\bibfnamefont{G.}~\bibnamefont{Massimo~Palma}},
  \bibinfo{journal}{Phys. Rev. Lett.} \textbf{\bibinfo{volume}{69}},
  \bibinfo{pages}{1293} (\bibinfo{year}{1992}).

\bibitem[{\citenamefont{Nielsen and Chuang}(2000)}]{NieChu00}
\bibinfo{author}{\bibfnamefont{M.~A.} \bibnamefont{Nielsen}} \bibnamefont{and}
  \bibinfo{author}{\bibfnamefont{I.~L.} \bibnamefont{Chuang}},
  \emph{\bibinfo{title}{Quantum Computation and Quantum Information}}
  (\bibinfo{publisher}{Cambridge University Press},
  \bibinfo{address}{Cambridge}, \bibinfo{year}{2000}), \bibinfo{edition}{5th}
  ed.


\bibitem[{\citenamefont{Helstrom}(1976)}]{helstrom76}
\bibinfo{author}{\bibfnamefont{C.~W.} \bibnamefont{Helstrom}},
  \emph{\bibinfo{title}{{Quantum detection and estimation theory}}}
  (\bibinfo{publisher}{Academic Press}, \bibinfo{address}{New York, USA},
  \bibinfo{year}{1976}), ISBN \bibinfo{isbn}{0123400503}.


\bibitem[{\citenamefont{King and Ruskai}(2001)}]{King01}
\bibinfo{author}{\bibfnamefont{C.}~\bibnamefont{King}} \bibnamefont{and}
  \bibinfo{author}{\bibfnamefont{M.}~\bibnamefont{Ruskai}},
  \bibinfo{journal}{IEEE Trans. Inf. Theory} \textbf{\bibinfo{volume}{47}},
  \bibinfo{pages}{192} (\bibinfo{year}{2001}).

\bibitem[{\citenamefont{{Tanzilli}}()}]{TanzilliPrivComm}
\bibinfo{author}{\bibfnamefont{S.}~\bibnamefont{{Tanzilli}}},
  \bibinfo{note}{private communication.}

\bibitem[{\citenamefont{Herzog and Bergou}(2004)}]{PhysRevA.70.022302}
\bibinfo{author}{\bibfnamefont{U.}~\bibnamefont{Herzog}} \bibnamefont{and}
  \bibinfo{author}{\bibfnamefont{J.~A.} \bibnamefont{Bergou}},
  \bibinfo{journal}{Phys. Rev. A} \textbf{\bibinfo{volume}{70}},
  \bibinfo{pages}{022302} (\bibinfo{year}{2004}).



\end{thebibliography}
\end{document}